\documentclass[conference,letterpaper]{IEEEtran}

\usepackage[utf8]{inputenc} 
\usepackage[T1]{fontenc}
\usepackage{url}
\usepackage{ifthen}
\usepackage{cite}
\usepackage[cmex10]{amsmath} % Use the [cmex10] option to ensure complicance
\usepackage{tikz}
\usetikzlibrary{quantikz2,decorations}
\tikzset{every node/.style={font=\scriptsize}} %controls the font sizes of ALL tikzpictures
\usepackage{comment}
\usepackage{cleveref}
\usepackage{amsmath}
\usepackage{float}
\usepackage{amssymb}
\usepackage{amsfonts}
\usepackage{thmtools}
\declaretheorem{definition}
\declaretheorem[sibling=definition]{theorem}

\declaretheorem[sibling=definition]{proposition}
\declaretheorem[sibling=definition]{lemma}

\usepackage{enumitem}
\crefname{enumi}{part}{parts}
\newcommand{\dd}{\mathrm{d}}
\DeclareMathOperator{\tr}{Tr}

\DeclareMathOperator{\Vol}{Vol}
\DeclareMathOperator{\interior}{int}
\DeclareMathOperator{\Euc}{Euc}

\allowdisplaybreaks
\def \Pbb {\mathbb{P}}
\def \Ebb {\mathbb{E}}
\def \Xcal {\mathcal{X}}

\definecolor{darkgreen}{rgb}{0,0.5,0}
\usepackage{graphicx}
\allowdisplaybreaks
\begin{document}
\title{Optimal Error Exponents for Composite Sequential Quantum Hypothesis Testing\\
\thanks{STJ has received funding from EPSRC Quantum Technologies Career Acceleration Fellowship (UKRI1218).}}

% %%% Single author, or several authors with same affiliation:
% \author{%
%  \IEEEauthorblockN{Author 1 and Author 2}
% \IEEEauthorblockA{Department of Statistics and Data Science\\
%                    University 1\\
 %                   City 1\\
  %                  Email: author1@university1.edu}% }

%%% Several authors with up to three affiliations:
\author{%
  \IEEEauthorblockN{{Jacob Paul Simpson, Efstratios Palias and Sharu Theresa Jose}}
  \IEEEauthorblockA{School of Computer Science,
                    University of Birmingham, UK\\
                    Email: jps538@student.bham.ac.uk,
                    e.palias@bham.ac.uk,
                    s.t.jose@bham.ac.uk}}

\maketitle

%%%%%%
%% Abstract: 
%% If your paper is eligible for the student paper award, please add
%% the comment "THIS PAPER IS ELIGIBLE FOR THE STUDENT PAPER
%% AWARD." as a first line in the abstract. 
%% For the final version of the accepted paper, please do not forget
%% to remove this comment!
%%

\begin{abstract}
    We study the composite sequential quantum hypothesis testing (SQHT) problem, where the objective is to distinguish a null quantum state from a set of alternative quantum states. We propose a mixture-sequential quantum probability ratio test that adaptively selects measurements based on the current mixture estimate of the  alternative set, and stops upon the first threshold crossing of the mixture log-likelihood ratio. Under an expected sample size constraint, we show that our proposed strategy simultaneously achieves the Type-I and (worst-case) Type-II error exponents,  characterized by the minimal measured relative entropies between the null state and the alternative set. We further establish a matching converse, thereby characterizing the  optimal error exponent region. Finally, our results show that achieving vanishing error probabilities in composite SQHT requires an expected sample complexity at least as large as that of sequential testing between two fixed states. 
\end{abstract}

\section{Introduction}\label{sec:introduction}
Quantum hypothesis testing (QHT) is a fundamental statistical inference problem in quantum information theory with applications in quantum sensing, cryptography, and communications \cite{barndorff2003quantum,bae2015quantum,pirandola2018advances,bergou2010discrimination}. Given $n$ copies of an unknown quantum state, the \emph{simple} binary QHT problem is to design a two-outcome measurement that 
%In the standard setting of simple binary QHT, a decision-maker is given $n$ copies of an unknown quantum state and must design a two-outcome measurement to
determines whether the state is $\rho^{\otimes n}$ (null state) or $\sigma^{\otimes n}$ (alternative state) \cite{Helstrom1969,Holevo1973}. However, in many practically relevant tasks, such as quantum state verification \cite{Thinh2020worstcase}, the alternative consists of a set of quantum states. This leads to \textbf{composite QHT}, where the goal is to distinguish a null state $\rho^{\otimes n}$ from an alternative set $\mathcal C_n$ of quantum states. A canonical example  is $\mathcal{C}_n=\{\sigma^{\otimes n}: \sigma \in \mathcal{D}\}$, the set of i.i.d. tensor-product states generated from a compact, convex set $\mathcal{D}$, although more general correlated alternatives are also of interest \cite{berta2021composite, hayashi2025generalized}.
%with the goal of distinguishing a null state $\rho^{\otimes n}$ from a composite alternative set $\mathcal{C}_n$ of quantum states.
Performance of the test is evaluated by two types of errors: the Type-I error, corresponding to incorrectly rejecting the null hypothesis when it is true, and the Type-II error, corresponding to  incorrectly accepting the null when the true state belongs to the alternative set.
%measured in the worst-case sense over all states in $\mathcal{C}_n$. 
A central goal of QHT is to characterize the optimal trade-off between these errors.\enlargethispage{-.03cm}

In the fixed-length (i.e., fixed-$n$) setting, this trade-off is well-understood asymptotically for composite QHT with i.i.d.  tensor-product alternatives. Specifically, when the Type-I error is constrained below a constant, the optimal exponential decay rate of the Type-II error is given by the generalized quantum Stein's exponent, $\inf_{\sigma \in \mathcal{D}}D(\rho \Vert \sigma)$, where $D(\rho \Vert \sigma)$ denotes the quantum relative entropy  \cite{berta2021composite,brandao2010generalization, berta2023gap,  hayashi2002optimal, hayashi2025generalized}. 
% \textcolor{purple}{Conversely, in the reversed formulation with composite null and simple alternate and we constrain the Type-II error, the generalized quantum Sanov exponent $\inf_{\rho \in \mathcal{D}}D(\rho \Vert \sigma)$ characterizes the exponential rate of decay of the Type-I error \cite{bjelakovic2005quantum, berta2021composite}.}
Conversely, when the Type-II error is constrained, $\inf_{\sigma \in \mathcal{D}}D(\sigma \Vert \rho)$ characterizes the exponential rate of decay of the Type-I error \cite{bjelakovic2005quantum, berta2021composite}. When both errors decrease exponentially, the optimal trade-off between their exponents is characterized by the generalized quantum Hoeffding bound \cite{fang2026error}. These results reveal a fundamental limitation of fixed-length testing: the optimal exponents for the two errors cannot, in general, be simultaneously achieved.
%both exponents cannot simultaneously attain their extremal values by the quantum Hoeffding bound 
% Moreover, the case when both exponents are required to be the same is characterised by the quantum Chernoff exponent \cite{audenaert2008asymptotic, nussbaum2009chernoff}, which is strictly smaller than the two extremal relative entropy rates in general.
%These results highlight an inherent limitation of the fixed-length formulation: 
%optimising one exponent necessarily worsens the other.

This limitation motivates \emph{sequential} quantum hypothesis testing (SQHT) \cite{slussarenko2017quantum,martinez2021quantum}, where copies are acquired on demand and measurements are performed sequentially until a data-dependent stopping criterion is met. In contrast to the fixed-length setting, recent work \cite{li2022optimal} showed that, for simple hypotheses, SQHT can simultaneously attain both error exponents under an expected sample size constraint. Specifically, these exponents are given by the measured relative entropies under adaptive single-copy measurements, and by the quantum relative entropies when collective measurements on blocks are permitted.
However, the corresponding question for composite SQHT remains open.
%In this work, we study the achievable error exponents for composite SQHT.

Extending SQHT from simple to composite hypotheses introduces fundamentally new challenges. Even in the classical setting, sequential composite testing is highly non-trivial: generalized sequential probability ratio tests (SPRTs) simultaneously achieve the optimal worst-case error exponents only under strong regularity assumptions \cite{li2014generalized,pan2022asymptotics}, while mixture-based SPRTs typically provide prior-averaged guarantees, without general worst-case guarantees \cite{wald1947sequential}. 
%In the quantum setting, these challenges are further compounded by adaptive measurement strategies, which induce non i.i.d observations across each round.

Motivated by these challenges, we study \textbf{composite SQHT under single-copy adaptive measurements} and characterize the optimal error exponent region. We propose a novel \emph{mixture-sequential quantum probability ratio test} that, under an expected sample size constraint, simultaneously achieves the Type-I and Type-II error exponents,  characterized by the minimal measured relative entropies between the null state and alternative set. We also provide a matching converse, thereby establishing the optimal error exponent region. Our results extend classical mixture-based approaches by providing worst-case, rather than prior-averaged, guarantees. Furthermore, we show that achieving  vanishing error probabilities in the composite SQHT requires at least as many expected samples as any simple SQHT instance contained in $\mathcal{D}$.
\section{Problem Setting}\label{sec:problem_setting}
\subsection{Notation}\label{sec:notation}\enlargethispage{-.05cm}
Throughout this work, we use $\rho$ and $\sigma$ to denote quantum states on the $d$-dimensional Hilbert space $\mathbb{C}^d$. These states are represented as density matrices: Hermitian, positive semi-definite, unit-trace matrices in $\mathbb{C}^{d \times d}$. Equivalently, given the affine space $\mathsf A_d := \{ \rho \in \mathbb C^{d\times d} : \rho = \rho^\dagger,\ \tr[\rho]=1 \}$, quantum states are elements of the convex subset of the positive semi-definite matrices $\mathsf S_d := \{ \rho \in \mathsf A_d : \rho \succeq 0 \}$, and we say that $\rho \in \mathsf S_d$ is full-rank if it is positive definite, i.e., $\rho \succ 0$.

Let $\mathcal X$ denote a finite set of measurement outcomes. A positive operator-valued measure (POVM) is a \emph{set} of Hermitian operators $M=\{M(x)\}_{x \in \mathcal X}$ in $\mathbb{C}^{d \times d}$ such that $M(x)\succeq0$ for all $x \in \mathcal{X}$ and $\sum_{x\in \mathcal{X}}M(x)=\mathbb{I}$. We denote the set of all POVMs with outcome set $\mathcal X$ as $\mathcal M_{\mathcal X}$, which is compact. Applying a POVM $M \in \mathcal M_{\mathcal X}$ to a quantum state $\rho\in \mathsf S_d$ induces a probability distribution $P_{\rho,M}$ on $\mathcal X$ via Born’s rule:  $P_{\rho,M}(x) = \tr[M(x)\rho]$ for $x \in \mathcal X$. If, in addition, $M(x)^2=M(x)$ for all $x\in\mathcal X$, we call $M$ a projector-valued measure (PVM).
% We further assume that the POVMs $M \in \mathcal{M}_{\mathcal{X}}$ have full support over $\mathcal{X}$, i.e. $M(x) \neq 0$ for all $x \in \mathcal{X}$. Hence, for every full-rank quantum state $\rho$, the Born distribution $P_{\rho,M}$ has full support on $\mathcal{X}$.
% We let $D(\rho\|\sigma):=\tr[\rho(\log\rho-\log\sigma)]$ be the quantum relative entropy, which reduces to the Kullback-Leibler divergence if its inputs are probability distributions on $\mathcal X$ \textcolor{purple}{and we use the convention $0\log(\tfrac{0}{0})=0$.}
Finally, throughout this paper, we use upper case $M$ to denote a \emph{random} POVM and lower case $m$ to denote its realization, while $X$ and $x$ denote a random measurement outcome and its realization, respectively. We use $m_1^k=(m_1,\ldots,m_k)$  to denote a sequence of POVMs,  and $x_1^k=(x_1,\ldots, x_k)$ to denote a sequence of measurement outcomes, and similarly $M_1^k$ and $X_1^k$ for the corresponding random sequences.

%Lastly, in this work we use uppercase letters $M_i$ and $X_i$ to denote random variables, while the lowercase letters $m_i$ and $x_i$ represent their corresponding realisations.

\subsection{Adaptive Composite Sequential  QHT}\label{sec:singleton_composite}
This work focuses on {the} composite QHT problem, where the goal is to distinguish an unknown quantum state $\rho^\star$ between the null hypothesis $H_0:\rho^\star =\rho$, where $\rho \in \mathsf S_d$ is \emph{full-rank}, and the alternative hypothesis $H_1: \rho^\star \in \mathcal D$. Here, $\mathcal D\subset \mathsf S_d$ is a compact, convex set of \emph{full-rank} quantum states with non-empty interior relative to $\mathsf A_d$, and $\rho \not \in \mathcal D$.

We adopt a \emph{sequential} and \emph{adaptive} testing framework, where measurements are selected based on past observations. To this end, we define a sequential quantum hypothesis test (SQHT) by the tuple $\mathcal{S}:=(\mathcal{X}, \{\mu_k,d_k\}_{k=1}^{\infty})$, where $\mu_k(\dd m_k|x_1^{k-1},m_1^{k-1})$ is the conditional probability measure that determines the next measurement $m_k$ based on the history of past observations, and $d_k(x_1^k,m_1^k) \in \{0,1,*\}$ is the decision rule. At time $k\geq 1$, the tester receives a fresh copy of the unknown state and selects a POVM $M_k=m_k$ according to the conditional probability measure $\mu_k(\cdot|x_1^{k-1},m_1^{k-1})$. 
%Note here that the POVM selection at time $k$ depends on the history of past observations, making the testing strategy \emph{adaptive}. 
The tester then measures the unknown state according to the chosen POVM and observes the outcome $X_k=x_k$ with probability $\tr[m_k(x_k)\rho^\star]$. Based on the observations so far $(x_1^k,m_1^k)$, the tester makes the decision $d_k(x_1^k,m_1^k) \in \{0,1,*\}$. If $d_k=0$ (or $d_k=1$), the tester stops the test and declares that the null (or the alternative) hypothesis is true. Instead, if $d_k=*$, the tester decides that there is not enough information to make the decision and continues the test by requesting another sample.

%In simple adaptive sequential binary QHT, a quantum system is prepared in one of two quantum states, either $\rho_0$ or $\rho_1$, which in this work we assume have known density matrix representations. Then at each time $k\geq 1$, the distinguisher receives a fresh copy of the unknown state and selects a measurement according to a conditional probability measure $\mu(\cdot|\mathcal{H}_{k-1})$ that depends on the \emph{history} $\mathcal{H}_{k-1}$, which is defined as the sequence of all chosen measurements and their corresponding outcomes up to time $k$. Specifically $\mathcal{H}_k = (M_1, X_1, \dots, M_k, X_k)$, where each measurement $M_j$ is chosen based on the information available in the previous history $\mathcal{H}_{j-1}$, and $X_j$ is the corresponding measurement outcome. The distinguisher then applies the realised POVM $M_k=m_k$ onto the $k$-th sample of the unknown state, which gives an outcome $X_k=x_k$. In particular, by Born's rule the conditional distribution of each outcome satisfies is \[\Pbb_{n,\rho}(X_j = x \mid \mathcal H_{j-1}) = \tr[M_j(x)\rho]\]where $\Pbb_{n,\rho}$ denotes the probability law of the history $\mathcal{H}_k$ when the unknown state is $\rho$. Afterwards, a decision rule $d_{n,k}(\mathcal{H}_k)\in\{0,1,*\}$ is evaluated. The value $*$ indicates that no decision is made, and the distinguisher continues sampling, while the values $0$ and $1$ correspond to deciding in favour of hypotheses $\rho_0$ and $\rho_1$ respectively, in which case the procedure stops. 
The first time $d_k\in \{0,1\}$ occurs is called the \emph{stopping time}, defined as $T:=\inf\{k \geq 1:d_k \in \{0,1\}\}$. The stopping time $T$ represents the \emph{random} number of samples used in the SQHT. Henceforth, we define SQHT via the updated tuple $\mathcal{S}=(\mathcal{X}, \{\mu_k,d_k\}_{k=1}^{\infty},T)$ with the stopping time.\enlargethispage{-.1cm}

%For each $m\ge1$, we define the bounded stopping time as $T^{(m)}:=T\wedge m.$
Let $\Omega:= (\mathcal{M}_{\mathcal{X}} \times \mathcal{X})^{\infty}$ be the infinite product space induced by $ \mathcal{M}_{\mathcal{X}} \times \mathcal{X}$ with the usual product topology, and let $\mathcal{F}$ be the $\sigma$-algebra generated by this topology. For a given SQHT $\mathcal{S}$ and underlying quantum state $\rho^\star\in \mathsf S_d$, we can define a probability measure $\Pbb_{\mathcal{S},\rho^\star}$ as follows: for any $k \in \mathbb{N}$, measurable set $A \in \mathcal B((\mathcal{M}_{\mathcal{X}})^k)$ and sequence $x_1^k$,
\begin{align}
    &\Pbb_{\mathcal{S},\rho{^\star}}[(A \times \{x_1^k\}) \times (\mathcal{M}_{\mathcal{X}} \times \mathcal{X})^{\infty}]\nonumber \\&=\int_A \prod_{j=1}^k\mu_j(\dd m_j|x_1^{j-1},m_1^{j-1}) \tr[m_j(x_j)\rho{^\star} ]. \label{eq:probability_measure}
\end{align}
%\textcolor{purple}{I have not changed the above but think we should as the original abuses notation. From what I understand, when they say that $A \in (\mathcal{M}_{\mathcal{X}})^k$, what they really mean is that $A \in \mathcal B((\mathcal{M}_{\mathcal{X}})^k)$ (Borel $\sigma$-algebra), since $A$ is an event and thus a measurable subset of $(\mathcal{M}_{\mathcal{X}})^k$. Next, I think the sequence of observations based of the measurement should be wrote as a singleton set $\{x_1^k\}$, currently we have $x_1^k=(x_1,\hdots, x_k)$ so $A \times x_1^k$ is not well defined.} 

The above measure then defines the random process $\{(M_k,X_k)\}_{k=1}^{\infty}$. Let $\mathcal{F}_k \subset \mathcal{F}$ be the $\sigma$-algebra generated by  $(M_1^k,X_1^k)$. Then, 
%the event $\{T=k\}$ is determined by $(M_1^k,X_1^k)$, and thus belongs to $\mathcal{F}_k$. Therefore, 
$T$ is a stopping time with respect to the filtration $\{\mathcal{F}_k\}_{k=0}^{\infty}$ with $\mathcal{F}_0 := \{\emptyset, \Omega\}$. 
%-- Need to define filtration for $k=0$ or otherwise $\mathcal{F}_{k-1}$ is undefined for $k=1$.
%In this paper, the filtration we choose is the natural filtration $\mathbb{F} = \{\mathcal{F}_k\}_{k \in \mathbb{N}}$, where $\mathcal{F}_k=\sigma(\mathcal{H}_k)$ is the smallest $\sigma$-algebra generated by the history $\mathcal{H}_k$, which represents the accumulated information up to step $k$. Then $T_n$ is a valid stopping time with respect to $\mathbb{F}$.
\subsection{Constraints and Achievable Exponents}
In the following, we consider sequences of SQHTs $\mathcal{S}_n=(\mathcal{X}, \{\mu_k,d_{n,k}\}_{k=1}^{\infty},T_n)$, indexed by $n \in \mathbb{N}$, where $\mathcal{S}_n$ has stopping time $T_n$. Let $\Pbb_{n,\rho}$ denote the probability measure in \eqref{eq:probability_measure} corresponding to $\mathcal{S}_n$ when the underlying state is $\rho$, and let $\Ebb_{n,\rho}$ denote the associated expectation.

%. Accordingly, we denote the probability measure in \eqref{eq:probability_measure} as $\Pbb_{n,\rho}$.
We are interested in studying SQHTs
under an \emph{expectation constraint} on the stopping time $T_n$: 
\begin{align}
\max\biggl\{\mathbb E_{n,\rho}[T_n], \sup_{\sigma \in \mathcal D }\mathbb E_{n,\sigma}[T_n]\biggr\}\leq n\label{eq:expectation_constraint}.
\end{align} 
%where the expectation $\Ebb_{n,\rho}$ is evaluated with respect to the probability measure $\Pbb_{n,\rho}$. 
    Under this constraint, we analyze the trade-off between the Type-I and Type-II error probabilities. The Type-I error {$\alpha_n^\star$} is the probability that the test declares $H_1$ to be true when the true state is $\rho$, i.e., $\rho^\star=\rho$:
    \begin{equation}\label{eq:typeIerror}
        \alpha^\star_n:=\Pbb_{n,\rho}(d_{T_n}=1).
    \end{equation}
In contrast, the Type-II error {$\beta_n^\star$} is the worst-case probability that the test declares $H_0$ to be true when the true state $\rho^{\star} \in \mathcal D$:
\begin{equation}\label{eq:typeIIerror}
    \beta_n^{\star}:=\sup_{\sigma\in \mathcal D}\beta_n(\sigma),\quad\beta_n(\sigma):=\Pbb_{n,\sigma}(d_{T_n}=0).
\end{equation}      
We next formalize the notion of \emph{achievable error exponents} and define the corresponding \emph{error exponent region}.
\begin{definition}[Achievable Error Exponent Pair and the Error Exponent Region]
A pair $\left(R_0,R_1\right) \in \mathbb R^2_{+}$ is said to be an \textbf{achievable error exponent pair} under the expectation constraint if there exists a sequence of SQHTs $\{\mathcal S_n\}_{n\in \mathbb N}$ satisfying \vspace{-0.6cm}
\begin{gather}
\liminf_{n\to\infty} \frac{1}{n} \log \frac{1}{\alpha_n^\star} \ge R_0, \label{condition_1}\\
\liminf_{n\to\infty} \frac{1}{n} \log \frac{1}{\beta_n^\star} \ge R_1, \quad \text{and} \label{condition_2}\\[-1mm]
\limsup_{n\to\infty} \biggl(  \max\bigl\{\mathbb{E}_{n,\rho}[T_n], \sup_{\sigma\in \mathcal D}\mathbb{E}_{n,\sigma}[T_n] \bigr\} -  n \biggr)\le 0. \label{condition_3}
\end{gather}
We then define the  \textbf{error exponent region} $\mathcal{A}(\{\rho\}, \mathcal D) \subset \mathbb{R}^2_+$ as the closure of the set of all achievable error exponent pairs under the expectation constraint.
\end{definition}
\subsection{Related Work: Adaptive Simple Sequential QHT}\label{sec:related_work}
Our composite SQHT framework generalizes the adaptive simple sequential QHT  setting studied in \cite{li2022optimal}, corresponding to the special case when $\mathcal D=\{\sigma\}$ is a singleton set. For this setting, the authors characterize the optimal error exponent region as follows.\enlargethispage{-.1cm}
\begin{theorem}
    Let $\rho,\sigma \in \mathsf S_d$ be full-rank. Then,
    \begin{align}
    \mathcal{A}{(\{\rho\},\{\sigma\})}&= \bigl\{ (R_0, R_1) :
R_0 \le D_{\mathcal{M}}(\sigma \| \rho), \nonumber \\ 
&\quad \quad \quad \quad \quad \quad  \hspace{0.12cm} R_1 \le D_{\mathcal{M}}(\rho \| \sigma) 
\bigr\},  \hspace{0.3cm}  \\ 
\mbox{where} \quad D_{\mathcal M}(\rho\|\sigma) &:= \sup_{\mathcal{X}}\sup_{m \in \mathcal M_{\mathcal X}} D\big(P_{\rho,m}\|P_{\sigma,m}\big) \nonumber \\
& = D(P_{\rho,m^*} \Vert P_{\sigma, m^*}) \label{eq:measuredentropy_def}
\end{align}
is the measured relative entropy between states $\rho$ and $\sigma$, with the supremum achieved by a PVM $m^*$ with $|\mathcal{X}|=d$ \cite[Theorem 2]{Berta2017}.
%\textcolor{red}{and by \cite[Theorem 2]{Berta2017} the supremum is achieved for some rank-1 PVM, i.e. $|\mathcal X|=d$. } 
Here, $P_{\rho,m}$ is the probability distribution on $\Xcal$ induced by $m$, i.e., $P_{\rho,m}(x):=\tr[m(x)\rho]$, and $D(P\|Q)$ is the Kullback--Leibler divergence between two probability distributions $P$ and $Q$.
\end{theorem}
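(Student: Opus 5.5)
The plan has two parts: an achievability construction in the style of Wald's SPRT with a fixed optimal measurement, and a converse via a data-processing / change-of-measure argument adapted to adaptive measurements.

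\emph{Achievability.} Since the target region is a rectangle, it suffices to show that the corner point $(D_{\mathcal M}(\sigma\|\rho), D_{\mathcal M}(\rho\|\sigma))$ is achievable; the rest of the region then follows by taking closures.
\begin{itemize}
\item Let $m_0$ be a PVM attaining $D_{\mathcal M}(\rho\|\sigma)$ and $m_1$ one attaining $D_{\mathcal M}(\sigma\|\rho)$, as given by \eqref{eq:measuredentropy_def}.
\item Run a two-phase adaptive test. In a short exploration phase of $o(n)$ samples, use an informationally complete measurement to guess which hypothesis is true. Then fix $m_0$ if the guess is $\rho$, and $m_1$ otherwise.
\item In the main phase, run a classical SPRT on the log-likelihood ratio $\sum_k \log \frac{\tr[M_k(X_k)\sigma]}{\tr[M_k(X_k)\rho]}$. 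Stop when it exceeds $n(D_{\mathcal M}(\rho\|\sigma)-\epsilon)$ (declare $H_1$) or drops below $-n(D_{\mathcal M}(\sigma\|\rho)-\epsilon)$ (declare $H_0$).
\item The standard change-of-measure inequality, valid for adaptive measurements because each increment is a likelihood ratio of the actually chosen POVM, gives $\alpha_n\le e^{-n(D_{\mathcal M}(\sigma\|\rho)-\epsilon)}$ and $\beta_n\le e^{-n(D_{\mathcal M}(\rho\|\sigma)-\epsilon)}$.
\item Under the true hypothesis, the drift of the log-likelihood ratio equals the corresponding measured relative entropy, as long as the exploration guess is correct. Wald's identity then gives $\mathbb E[T_n]\le n(1+O(\epsilon))+o(n)$; full rank of $\rho$ and $\sigma$ keeps the overshoot bounded.
\item Rescale the thresholds by $(1-\epsilon)$ to meet \eqref{condition_3}, then send $\epsilon\to0$.
\end{itemize}

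\emph{Converse.} Take any sequence of SQHTs satisfying \eqref{condition_3}.
\begin{itemize}
\item Apply the data-processing inequality for KL divergence to the stopped observation $(M_1^{T_n},X_1^{T_n})$ under $\Pbb_{n,\rho}$ versus $\Pbb_{n,\sigma}$, with the binary decision as the processing. This gives $d_{\mathrm{bin}}(1-\alpha_n\|\beta_n)\le D(\Pbb_{n,\rho}|_{\mathcal F_{T_n}}\|\Pbb_{n,\sigma}|_{\mathcal F_{T_n}})$.
\item Compute the right-hand side with the chain rule. The $\mu_k$ terms cancel because the measurement-selection kernels are identical under both hypotheses. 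Optional stopping then yields
$$\mathbb E_{n,\rho}\Bigl[\sum_{k\le T_n} D(P_{\rho,M_k}\|P_{\sigma,M_k})\Bigr]\le \mathbb E_{n,\rho}[T_n]\,D_{\mathcal M}(\rho\|\sigma).$$
\item Since $d_{\mathrm{bin}}(1-\alpha\|\beta)\ge (1-\alpha)\log\frac1\beta-\log2$, dividing by $n$ and using $\alpha_n\to0$ gives $R_1\le D_{\mathcal M}(\rho\|\sigma)$.
\item The symmetric argument with $\sigma$ as the reference measure gives $R_0\le D_{\mathcal M}(\sigma\|\rho)$.
\item If $\alpha_n\not\to0$, then $R_0=0$ and a weaker but still sufficient bound applies.
\end{itemize}

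\emph{Main obstacle.} The delicate step is controlling $\mathbb E[T_n]$ in the achievability part. A wrong guess in the exploration phase produces a smaller drift, so its probability must decay fast enough to contribute only $o(n)$ to the expected sample size. An alternative is to make the measurement choice adaptive throughout: pick $m_0$ or $m_1$ according to the sign of the current log-likelihood ratio, and show that the fraction of misaligned rounds is $o(n)$ in expectation. The first approach to try is a Chernoff bound on the exploration error combined with a crude bound of order $n$ times a constant on the stopping time in the bad event. That crude bound comes from truncating the test at a constant multiple of $n$, with the truncated tail handled by a large-deviation estimate.
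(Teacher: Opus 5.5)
The paper does not prove this theorem; it quotes it from \cite{li2022optimal}. The comparison below is therefore against the paper's own construction specialised to $\mathcal D=\{\sigma\}$, and against the converse it imports. Your plan is sound overall, but two steps fail as written.

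\textbf{The achievability thresholds are swapped.} Your statistic $\Lambda_k=\sum_{j\le k}\log\frac{\tr[M_j(X_j)\sigma]}{\tr[M_j(X_j)\rho]}$ has drift $-D_{\mathcal M}(\rho\|\sigma)$ under $H_0$ (with $m_0$) and $+D_{\mathcal M}(\sigma\|\rho)$ under $H_1$ (with $m_1$). Since $e^{\Lambda_k}$ is a mean-one $\mathbb P_\rho$-martingale and $e^{-\Lambda_k}$ a mean-one $\mathbb P_\sigma$-martingale, an upper ($H_1$) threshold $a$ gives $\alpha_n\le e^{-a}$ and a lower ($H_0$) threshold $-b$ gives $\beta_n\le e^{-b}$.

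With your choice $a=n(D_{\mathcal M}(\rho\|\sigma)-\epsilon)$ and $b=n(D_{\mathcal M}(\sigma\|\rho)-\epsilon)$, the two exponents come out interchanged. The stopping times also become $\mathbb E_\rho[T_n]\approx n(D_{\mathcal M}(\sigma\|\rho)-\epsilon)/D_{\mathcal M}(\rho\|\sigma)$ and $\mathbb E_\sigma[T_n]\approx n(D_{\mathcal M}(\rho\|\sigma)-\epsilon)/D_{\mathcal M}(\sigma\|\rho)$. So one of the expectation constraints fails unless the two measured divergences differ by less than $\epsilon$.

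The bounds you state need $a=n(D_{\mathcal M}(\sigma\|\rho)-\epsilon)$ for declaring $H_1$ and $b=n(D_{\mathcal M}(\rho\|\sigma)-\epsilon)$ for declaring $H_0$. The threshold reached under $H_0$ must match the $H_0$ drift, and that same threshold controls $\beta_n$. This is exactly the paper's choice of $B_n$ and $A_n$ in \eqref{eq:b_n}.

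\textbf{Your last converse bullet is false.} When $\alpha_n\not\to0$ there is no bound on $R_1$ at all. The test that declares $H_1$ at $k=1$ has $\alpha_n=1$, $\beta_n=0$ and $T_n=1$, so under the literal definition every pair $(0,R_1)$ is achievable.

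Even with $\alpha_n$ bounded away from $1$ nothing is saved. Declare $H_1$ at $k=1$ with probability $\tfrac12$, and otherwise run an SPRT with doubled thresholds. This gives $\alpha_n\to\tfrac12$ and $R_1$ close to $2D_{\mathcal M}(\rho\|\sigma)$, so your inequality $(1-\alpha_n)\log\frac1{\beta_n}\le nD_{\mathcal M}(\rho\|\sigma)+o(n)$ is essentially tight but not sufficient.

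The rectangle is correct only if achievable pairs are required to have vanishing errors (for instance $R_0,R_1>0$), with the axes recovered by closure. This is how the paper states its converse in \Cref{prop:converse}, assuming $\alpha_n^\star,\beta_n^\star\to0$. Under that restriction your data-processing and chain-rule argument is fine. It is the standard converse, which the paper imports from \cite[Lemma 20]{li2022optimal} rather than reproving.

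\textbf{How your route differs from the paper's.} With these corrections your achievability is valid. Specialised to a point mass on $\sigma$ (so $\widetilde S_k=S_k(\sigma)$ and $\tilde\sigma_k\equiv\sigma$), the paper's test is exactly your fallback: choose $m_0^\ast$ or $m_1^\ast$ by the sign of the current log-likelihood ratio, with no exploration phase. Misaligned rounds are then $O(1)$ in expectation, by \Cref{lemma:lemma19}~\ref{lemma:19;part2} and~\ref{lemma19;part3}. $\mathbb E[T_n]$ is controlled through the compensated martingale $Y_k^{(\rho)}$ stopped at $T_n\wedge r$, with bounded overshoot.

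Your two-phase scheme makes the main phase i.i.d., so classical Wald suffices. The price is an $o(n)$ overhead, which is admissible only because the threshold slack is linear in $n$ while \eqref{condition_3} is an additive condition.

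The obstacle you flag is milder than you think. A wrong guess still yields drift of the correct sign: for example $D(P_{\rho,m_1}\|P_{\sigma,m_1})>0$, since otherwise $D_{\mathcal M}(\sigma\|\rho)=0$. Wald therefore gives $O(n)$ expected length on the wrong-guess event directly, and any exploration length $\ell_n\to\infty$ with $\ell_n=o(n)$ works. Truncation is unnecessary, and it would add a forced decision whose error you would then have to control.

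The paper's online rule buys $O(1)$ rather than $o(n)$ overhead. More importantly, it is what extends to composite alternatives, where the right measurement depends on the unknown $\sigma$ and is tracked through the posterior barycenter.
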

\vspace{-0.1cm}
\section{Main Results}\label{sec:main_results}
This section presents our main result, which characterizes the error exponent region for adaptive composite SQHT under the expectation constraint. 

\begin{theorem}\label{theorem:big}
The error exponent region for composite SQHT under the expectation constraint is given by
    % Let $\rho$ be a quantum state with full-rank and $\mathcal D$ be a compact, convex set of full-rank quantum states with non-empty interior relative to the affine space $\{ \sigma  : \sigma  = \sigma ^\dagger, \ \tr[\sigma ]=1 \}$. Then
    \vspace{-0.2cm} 
        \begin{align}
    \mathcal{A}{(\{\rho\},\mathcal D)}&= \bigl\{ (R_0, R_1) :
R_0 \le D_{\mathcal{M}}(\mathcal D \| \rho), \nonumber \\ 
&\quad \quad \quad \quad \quad \quad  \hspace{0.12cm} R_1 \le D_{\mathcal{M}}(\rho \| \mathcal D) 
\bigr\},  \hspace{0.3cm} 
\end{align} \vspace{-0.8cm} 

\hspace{-0.5cm} where we denote
\vspace{-0.25cm} 
   \begin{align}
       D_{\mathcal {M}}(\rho\Vert\mathcal D)\hspace{-0.1cm}:= \hspace{-0.1cm}\inf_{\sigma \in \mathcal D} D_{\mathcal M}(\rho\Vert\sigma)\text{ and }
 D_{\mathcal M}(\mathcal D\Vert\rho)\hspace{-0.1cm}:=\hspace{-0.1cm} \inf_{\sigma \in \mathcal D} D_{\mathcal M}(\sigma\Vert\rho). \nonumber 
    \end{align}
\end{theorem}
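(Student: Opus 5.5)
The proof splits into a converse and an achievability part.

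\emph{Converse.} For every $\sigma\in\mathcal D$, a sequence of SQHTs for $(\{\rho\},\mathcal D)$ that satisfies \eqref{condition_1}--\eqref{condition_3} is also admissible for the simple problem $(\{\rho\},\{\sigma\})$. This holds because $\beta_n(\sigma)\le\beta_n^\star$ and $\Ebb_{n,\sigma}[T_n]\le\sup_{\sigma'}\Ebb_{n,\sigma'}[T_n]$. The simple-hypothesis theorem of \cite{li2022optimal} then gives $R_0\le D_{\mathcal M}(\sigma\|\rho)$ and $R_1\le D_{\mathcal M}(\rho\|\sigma)$. Taking the infimum over $\sigma\in\mathcal D$ yields the outer bound, and the region is closed, so this direction is immediate.

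\emph{Achievability.} By closure it suffices to achieve the corner point $(D_{\mathcal M}(\mathcal D\|\rho)-\epsilon,\,D_{\mathcal M}(\rho\|\mathcal D)-\epsilon)$. The plan is the mixture-SQPRT sketched in the abstract.

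\begin{itemize}
\item \emph{Mixture likelihood.} Fix a prior $\pi$ on $\mathcal D$ with a positive continuous density with respect to Lebesgue measure on the relative interior; this uses the non-empty interior assumption. Define
\[
L_k=\int_{\mathcal D}\prod_{j\le k}\frac{\tr[m_j(x_j)\sigma]}{\tr[m_j(x_j)\rho]}\,\pi(\dd\sigma).
\]
\item \emph{Adaptive measurement rule.} Let $\hat\sigma_{k-1}$ be the posterior-mean (mixture) estimate, which lies in $\mathcal D$ by convexity. At step $k$, use a PVM that alternately (or by a forced-exploration schedule) attains $D_{\mathcal M}(\hat\sigma_{k-1}\|\rho)$ and $D_{\mathcal M}(\rho\|\hat\sigma_{k-1})$.

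A cleaner option is to pick, after a short exploration phase, the maximizing PVM for $D(P_{\rho,m}\|P_{\hat\sigma,m})$ when the data look like $\rho$, and for the reverse divergence otherwise. The exploration phase uses $o(n)$ samples with an informationally complete POVM, so $\hat\sigma_k$ is consistent under every $\sigma\in\mathcal D$ and under $\rho$, in which case it tends to the reverse I-projection.
\item \emph{Stopping rule.} Stop at the first $k$ with $\log L_k\ge a_n$, deciding $1$, or $\log L_k\le -b_n$, deciding $0$. Take $a_n=n(D_{\mathcal M}(\mathcal D\|\rho)-\epsilon)$ and $b_n=n(D_{\mathcal M}(\rho\|\mathcal D)-\epsilon)$.
\end{itemize}

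\emph{Error bounds.} The Type-I bound follows from Ville's inequality, since $L_k$ is a nonnegative $\Pbb_{n,\rho}$-martingale with mean one. This gives $\alpha_n^\star\le e^{-a_n}$.

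The Type-II bound needs worst-case control. For each $\sigma$, write $L_k\ge \pi(B_\delta(\sigma))\cdot\inf_{\sigma'\in B_\delta(\sigma)}\Lambda_k(\sigma')$. Since $\mathcal D$ consists of full-rank states, the likelihood ratios are bounded, so $\Lambda_k(\sigma')/\Lambda_k(\sigma)\ge e^{-k c\delta}$ uniformly. On $\{T=k,\ d=0\}$ we then have $\Lambda_k(\sigma)\le e^{-b_n+kc\delta}/\pi(B_\delta)$. A change of measure (optional stopping for $1/\Lambda$ under $\Pbb_{n,\sigma}$) gives $\beta_n(\sigma)\le e^{-b_n+O(\delta\,\Ebb[T])+O(\log(1/\delta))}$.

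The bound is uniform in $\sigma$ because $\pi(B_\delta(\sigma))$ is bounded below uniformly; compactness plus the non-empty interior guarantee this, with the boundary handled via convexity cones. Choosing $\delta\to0$ slowly then gives the Type-II exponent.

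\emph{Expected sample size.} Under $\sigma$, the drift of $\log L_k$ per step is asymptotically at least $D(P_{\sigma,m_k}\|P_{\rho,m_k})$. With the estimate converging to $\sigma$ and the measurement chosen optimally, this drift is $D_{\mathcal M}(\sigma\|\rho)\ge D_{\mathcal M}(\mathcal D\|\rho)$. By a Wald/renewal-type argument for adapted increments (Lorden-style with uniformly bounded increments), $\Ebb_{n,\sigma}[T]\le a_n/D_{\mathcal M}(\sigma\|\rho)+o(n)\le n$.

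Under $\rho$, the drift of $-\log L_k$ is governed by the Laplace approximation. To leading order, $-\log L_k\approx\inf_{\sigma'}\sum_j\log\frac{P_{\rho,m_j}}{P_{\sigma',m_j}}$, and with the measurement tuned to the minimizer this grows like $k\,D_{\mathcal M}(\rho\|\mathcal D)$. This uses a minimax (Sion) exchange over $\sigma'$ and mixtures of PVMs, which is justified by convexity of $\mathcal D$ and compactness of $\mathcal M_{\mathcal X}$. Hence $\Ebb_{n,\rho}[T]\le n$ asymptotically.

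\emph{Main obstacle.} I expect the hardest step to be the expected-stopping-time bound, which must hold uniformly over $\sigma\in\mathcal D$ and under $\rho$ despite non-i.i.d.\ adaptive observations. This requires quantitative uniform consistency of $\hat\sigma_k$, so that the $o(n)$ remainder is uniform, together with the minimax exchange for the $\rho$-drift. I would first try to prove uniform concentration of the mixture log-likelihood via Azuma--Hoeffding, using the bounded increments that full rank provides, and then combine it with a uniform-in-$\sigma$ overshoot bound.
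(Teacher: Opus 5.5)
Your converse is the paper's argument. Your test is essentially the paper's mixture-SQPRT: mixture likelihood, posterior-mean barycenter $\hat\sigma_{k-1}\in\mathcal D$, switching between $m_0^\ast(\hat\sigma_{k-1})$ and $m_1^\ast(\hat\sigma_{k-1})$ according to the sign of the statistic, thresholds $a_n,b_n$, and Ville's inequality for Type-I. Drop the ``alternating'' variant, since splitting steps between the two PVMs generally dilutes both drifts and forces smaller thresholds.

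The genuine gap is the expected-sample-size constraint \eqref{condition_3}. You rightly call it the crux but do not establish it, and the route you sketch is both unproven and unnecessary.
\begin{itemize}
\item \emph{Under $\sigma\in\mathcal D$.} Your drift claim needs $\hat\sigma_k\to\sigma$, but the rank-one PVMs $m_1^\ast(\hat\sigma)$ are not informationally complete. An $o(n)$ exploration phase might restore consistency. You would still need rates that are uniform over $\mathcal D$ (boundary included), survive the subsequent non-informationally-complete adaptive phase, and yield a bound on $\mathbb E_{n,\sigma}[T_n]$ with a uniform $o(n)$ remainder. You also never bound the number of steps spent in the $m_0^\ast$ mode under $\sigma$.
\item \emph{Under $\rho$.} Your Laplace/Sion argument analyses a measurement tuned to a fixed minimizer, possibly a mixture of PVMs. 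That is not the rule you specified, and it would additionally need a uniform law of large numbers over $\sigma'$ for adaptive increments. The claimed convergence of $\hat\sigma_k$ to an I-projection is neither proved nor needed.
\end{itemize}

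The missing idea is the predictive representation of \Cref{lemma:bary}, $-\log L_k=\widetilde S_k=\sum_{j\le k}\log\frac{\tr[M_j(X_j)\rho]}{\tr[M_j(X_j)\tilde\sigma_{j-1}]}$. It makes the conditional drifts exact functions of the barycenter, so consistency never enters.
\begin{itemize}
\item \emph{Under $\rho$.} On $\{\widetilde S_{j-1}\ge0\}$ the drift is $D_{\mathcal M}(\rho\|\tilde\sigma_{j-1})\ge D_{\mathcal M}(\rho\|\mathcal D)$, merely because $\tilde\sigma_{j-1}\in\mathcal D$. The remaining steps have bounded expected number since $\Pbb_{n,\rho}(\widetilde S_{j-1}<0)\le c^{j-1}$.
\item \emph{Under $\sigma$.} The drift of $-\widetilde Z_j$ is $D(P_{\tilde\sigma_{j-1},M_j}\|P_{\rho,M_j})$, which equals $D_{\mathcal M}(\tilde\sigma_{j-1}\|\rho)\ge D_{\mathcal M}(\mathcal D\|\rho)$ in the $m_1^\ast$ mode, plus a cross term. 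Pinsker and Cauchy--Schwarz bound the stopped sum of the cross term by $C\sqrt{2\,\mathbb E_{n,\sigma}[T_n^{(r)}]\,\mathcal R}$. Here $\mathcal R:=\mathbb E_{n,\sigma}\bigl[\sum_{j\le T_n^{(r)}}D(P_{\sigma,M_j}\|P_{\tilde\sigma_{j-1},M_j})\bigr]=\mathbb E_{n,\sigma}\bigl[\widetilde S_{T_n^{(r)}}-S_{T_n^{(r)}}(\sigma)\bigr]\le\widetilde C+(d^2-1)\log\mathbb E_{n,\sigma}[T_n^{(r)}]$ is a cumulative regret.
\item \emph{Why no consistency is needed.} The regret is measured only along the measurements actually used, so no informational completeness is required. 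It is uniform in $\sigma$ by \Cref{lemma:mixture_martingale_comparison_with_pointwise}, the same prior-mass bound that drives the Type-II analysis.
\item \emph{Wrong-mode steps.} The $m_0^\ast$-mode steps under $\sigma$ are controlled uniformly by changing measure to the mixture law $\mathbb Q_n$ (\Cref{lemma:lemma19}\,\ref{lemma19;part3}).
\end{itemize}

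A smaller point concerns your Type-II bound. Writing $\kappa$ for your Lipschitz constant, the change of measure yields $e^{-b_n}\,\mathbb E_{n,\rho}[e^{\kappa\delta T_n}]/\pi(B_\delta)$, not $e^{-b_n+O(\delta\,\mathbb E[T_n])}$; Jensen goes the wrong way. You therefore need an exponential tail for $T_n$ under $\rho$, such as $\Pbb_{n,\rho}(T_n>k)\le e^{\lambda B_n}c^{k}$ from \Cref{lemma:lemma19}. With it, your fixed or slowly shrinking $\delta$ works. The paper's radius $\varepsilon_k\asymp 1/k$ instead costs only $(d^2-1)\log k$ and needs just polynomial moments of $T_n$.
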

\begin{comment}
\begin{figure}[!t]
    \centering
    \begin{tikzpicture}
        \draw[thick,-stealth](-.5,0)--(5.5,0)node[above]{{$\displaystyle\liminf_{n\to\infty}\frac1n\log\frac1{\alpha_n^\ast}$}};
        \draw[thick,-stealth](0,-.5)--(0,3)node[right]{{$\displaystyle\liminf_{n\to\infty}\frac1n\log\frac1{\beta_n^\ast}$}};
        \node[below left]{$0$};
        \draw[thick,blue,dashed](4,2)--(0,2)node[black,left]{$D_{\mathcal M}(\rho\|\sigma)$};
        \draw[thick,blue,dashed](4,2)--(4,0)node[black,below]{$D_{\mathcal M}(\sigma\|\rho)$};
        \draw[thick,blue](2.2,1.5)--(0,1.5)node[black,left]{$D_{\mathcal M}(\rho\|\mathcal D)$};
        \draw[thick,blue](2.2,1.5)--(2.2,0)node[black,below]{$D_{\mathcal M}(\mathcal D\|\rho)$};
        \fill[blue,opacity=.2](0,0)rectangle(2.2,1.5);
        \node at(1.1,.75){$\mathcal A(\{\rho\},\mathcal D)$};
        \node at(2.2,1.5){$\bigstar$};
    \end{tikzpicture}
        \caption{Illustration of the achievable region (shaded area) of \Cref{theorem:big}. The rates achieved by our adaptive test are denoted with ``{\scriptsize$\bigstar$}''. The dashed rectangle bounds the achievable region $\mathcal A(\{\rho\},\{\sigma\})$ for some $\sigma\in\mathcal D$.}
    \label{fig:region}
\end{figure}
\end{comment}
%{\color{red}We illustrate this error exponent region and compare it to the one corresponding to simple QHT in \Cref{fig:region}.}
\Cref{theorem:big} also yields the following first-order asymptotic sample complexity interpretation: the expected number of samples required to achieve vanishing error probabilities $\alpha^{\star},\beta^{\star}\to 0$ scales as 
\vspace{-0.2cm}
\begin{align}
\mathbb E_{\rho}[T]\approx\frac{\log(1/\beta^{\star})}{D_{\mathcal M}(\rho \| \mathcal D)}, \hspace{0.1cm} 
\sup_{\sigma \in \mathcal D}\mathbb E_{\sigma}[T]\approx\frac{\log(1/\alpha^{\star})}{D_{\mathcal M}(\mathcal D \| \rho)}. \nonumber
\end{align} Since $D_{\mathcal M}(\rho \Vert \mathcal D) \leq D_{\mathcal M}(\rho \Vert \sigma)$ and $D_{\mathcal M}(\mathcal D \Vert \rho ) \leq D_{\mathcal M}(\sigma \Vert \rho)$ for every $\sigma \in \mathcal{D}$, composite SQHT requires at least as many expected samples as the corresponding simple SQHT \cite{li2022optimal} to achieve vanishing error probabilities.
\subsection{Achievability of \Cref{theorem:big}}\label{sec:achievability}
    We now construct a sequence of  SQHTs, $\{\mathcal{S}_n\}_{n=1}^\infty$, where $\mathcal{S}_n=(\mathcal{X},\{\mu_k,d_{n,k}\}_{k=1}^{\infty},T_n)$, that achieves the error exponent pair $(D_{\mathcal M}(\mathcal D\Vert\rho), D_{\mathcal M}(\rho \| \mathcal D))$. By \eqref{eq:measuredentropy_def},   without loss of generality we let $\mathcal X=\{1,2,\dots,d\}$, and define $m_0^\ast(\sigma)=\{m_0^\ast(\sigma)(x)\}_{x\in\mathcal X}$ and $m_1^\ast(\sigma)=\{m_1^\ast(\sigma)(x)\}_{x\in\mathcal X}$ as the rank-1 PVMs  that achieve the suprema in the definitions of $D_{\mathcal M}(\rho\|\sigma)$ and $D_{\mathcal M}(\sigma\|\rho)$, respectively. To this end, we introduce a new \emph{mixture-sequential quantum probability ratio test} (mixture-SQPRT) for the  composite SQHT problem under study. This test relies on a mixture version of the Sequential Quantum Log-Likelihood Ratio (mixture-SQLLR) defined as follows. \enlargethispage{-.02cm}
%To this end, we start by defining the mixture-sequential quantum log-likelihood ratio (mixture-SQLLR).
\begin{definition}[mixture-SQLLR]\label{def:mixture-SQLLR} Let $\Pi$ be a prior probability measure on the compact, convex set $\mathcal D$ of quantum states. For each $\sigma \in \mathcal D$ and $k\geq 1$, let 
\vspace{-0.1cm}
\begin{align}
 S_k(\sigma)=\sum_{j=1}^{k} \log 
\frac{\tr[M_j(X_j)\rho]}{\tr[M_j(X_j)\sigma]}, \quad S_0(\sigma):=0 \label{eq:pointwise_SQPRT}
\end{align} denote the \emph{pointwise-SQLLR}\footnote{Note that all likelihood-ratio sums introduced here are evaluated on the common support of the induced distributions. For full-rank states, this amounts to summing only over outcomes $x\in\mathcal X$ for which $M(x)\neq 0$.} and $L_k (\sigma):=e^{-S_k(\sigma)}$. Then, the mixture-SQLLR can be defined as
\vspace{-0.1cm}
\begin{align}
\widetilde{S}_k := -\log \widetilde{L}_k, \quad \widetilde L_{k}:= \int_{\mathcal D} L_k(\sigma)\Pi(\dd\sigma). \label{eq:mixture-SQPRT}
\end{align}
\end{definition}
%\textcolor{purple}{where $\Pi$ is defined as a probability measure on $\mathcal D$, with $\Pi(\dd\sigma)$ denoting integration with respect to $\Pi$. Further details on the construction of $\Pi$ are provided in the Appendix.} 

At each time $k \geq 1$, the observed history $(M_1^k,X_1^k)$ is used to update the prior $\Pi$ on $\mathcal D$ to the posterior
\vspace{-0.1cm}
\begin{align}
\widetilde\Pi_{k}(\dd\sigma) := \frac{L_{k}(\sigma)\Pi(\dd\sigma)}{\widetilde L_{k}},
\end{align}
and define $\widetilde\Pi_{0}(\dd\sigma):=\Pi(\dd\sigma)$. Then, the corresponding \emph{barycenter} (posterior mean state) is defined as 
\vspace{-0.1cm}
\begin{equation} 
\tilde\sigma_{k} := \int_{\mathcal D} \sigma \widetilde\Pi_{k}(\dd\sigma). \label{eq:barycenter}
\end{equation} Importantly, by the convexity of $\mathcal D$, we have $\tilde\sigma_k \in \mathcal D$ almost surely. Moreover, $\tilde\sigma_k$ is $\mathcal{F}_k$-measurable. 

Using the barycenter,  the mixture-SQLLR admits the incremental representation (see \Cref{lemma:bary} for details):
\vspace{-0.1cm}
\begin{align}\label{eq:incremental}
    \widetilde{S}_k =\sum_{j=1}^k \widetilde{Z}_j, \quad \widetilde Z_j := \log \frac{\tr\big[M_j(X_j)\rho\big]}{\tr\big[M_j(X_j)\tilde\sigma_{j-1}\big]}.
\end{align}
Finally, note that when $\mathcal D$ is singleton (i.e., $\mathcal D=\{\sigma\}$) as in the simple QHT setting, the  mixture-SQLLR coincides with the pointwise-SQLLR. 
%Furthermore,  $\tilde\sigma_{k-1} \in \mathcal D$ a.s., since it is a convex combination of elements of $\mathcal D$, as well it follows that $\tilde\sigma_{k-1}$ is $\mathcal{F}_{k-1}$-measurable since $\widetilde{\Pi}_{k-1}$ depends only on $\mathcal{F}_{k-1}$. 

%Lastly, a direct calculation shows that the sequential mixture log-likelihood ratio single-step increment at time $k$ is
%\begin{equation}
% \widetilde Z_k := \log \frac{\tr\big(M_k(X_k)\rho\big)}{\tr\big(M_k(X_k)\tilde\sigma_{k-1}\big)}.
%\end{equation}
We now define the adaptive measurement selection strategy $\mu_k(\cdot|x_{1}^{k-1},m_{1}^{k-1})$ used in our mixture-SQPRT. At time $k=1$,  the tester picks the rank-1 PVM $M_1=m_0^\ast(\tilde\sigma_0)$ and for $k>1$, $\mu_k$ selects the rank-1 PVM  $M_k$ \emph{deterministically} as:
\begin{align}
M_k =
\begin{cases}
m_{0}^\ast(\tilde\sigma_{k-1}) \text{ for }
\widetilde S_{k-1}\ge0,\\[6pt]
m_{1}^\ast(\tilde\sigma_{k-1}) \text{ for }
 \widetilde S_{k-1}<0,
\end{cases} \label{eq:mu_k}
\end{align}
where $\tilde\sigma_{k-1}$ is defined as in \eqref{eq:barycenter}. These PVMs can be chosen as measurable functions of $\sigma$, ensuring that $M_k$ is $\mathcal F_{k-1}$-measurable (see \Cref{appendix:prem}).
Finally, we define the decision function $d_{{n,}k}$ at time $k$ as:
%Moreover, for every $n$, given the thresholds $A_n,B_n \in (0,\infty)$, the decision rule at time $k$ is defined as
\vspace{-0.2cm}
\begin{equation}
d_{{n,}k} (X_1^k,M_1^k) =
\begin{cases}
0 & \text{if } \widetilde S_k \ge B_{{n}}, \\
1 & \text{if } \widetilde S_k \le -A_{{n}}, \\
\ast & \text{otherwise}.
\end{cases} \label{eq:d_k}
\end{equation}

We refer to the test $(\mathcal{X}, \{\mu_k,d_{{n,}k}\}_{k=1}^{\infty},T_n)$ with $\mu_k$ and $d_{{n,}k}$ specified as in \eqref{eq:mu_k} and \eqref{eq:d_k}, respectively, as a \textbf{mixture-SQPRT with parameters $(A_{{n}},B_{{n}})$}. 
We then have the following achievability result.
\begin{proposition} \label{prop:achievability}
For $\hspace{-0.5mm}\tau \hspace{-0.5mm}\in \hspace{-0.5mm}(\hspace{-0.3mm}0,\min\left\{ \hspace{-0.4mm}D_{\mathcal M}(\rho \| \mathcal D), \hspace{-0.4mm}D_{\mathcal M}(\mathcal D\| \rho)\right\}\hspace{-0.5mm})$, let 
  \begin{align}A_n &:= n\big( D_{\mathcal M}(\mathcal D \| \rho) - \tau \big), \hspace{0.1cm} B_n := n\big( D_{\mathcal M}(\rho \| \mathcal D) - \tau \big)  \label{eq:b_n}, \end{align} 
\enlargethispage{-.06cm}
\hspace{-0.17cm} and $\Pi$ in \Cref{def:mixture-SQLLR} be the normalized volume measure on $\mathcal D$. Then, the sequence $\{(\mathcal{X}, \{\mu_k,d_{n,k}\}_{k=1}^{\infty},T_n )\}_{n=1}^{\infty}$ of mixture-SQPRTs with the parameter sequence $\{(A_n,B_n)\}_{n=1}^{\infty}$ satisfies \eqref{condition_1}, \eqref{condition_2} and \eqref{condition_3} with $R_0=D_{\mathcal M}(\mathcal D\Vert\rho)$ and $R_1=D_{\mathcal M}(\rho \| \mathcal D)$.
\end{proposition}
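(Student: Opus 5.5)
Three parts need to be handled: the error exponents via a change-of-measure/mixture-martingale argument, and the expected sample size via drift estimates of the incremental representation \eqref{eq:incremental}.

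\textbf{Type-I error.} Under $\Pbb_{n,\rho}$, $\widetilde L_k$ is the likelihood ratio of the mixture law $\int \Pbb_{n,\sigma}\,\Pi(\dd\sigma)$ against $\Pbb_{n,\rho}$ on $\mathcal F_k$. The measurement rule \eqref{eq:mu_k} is deterministic given the history, so the selection kernels cancel in this ratio. Hence $\widetilde L_k$ is a nonnegative $\Pbb_{n,\rho}$-martingale with mean one. By Ville's maximal inequality (equivalently, a change of measure on $\{d_{T_n}=1\}$, where $\widetilde L_{T_n}\ge e^{A_n}$),
\[
\alpha_n^\star \le e^{-A_n}.
\]
Since $\tau$ is arbitrary, this gives \eqref{condition_1}.

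\textbf{Type-II error.} Here the mixture gives only the averaged bound $\int \beta_n(\sigma)\Pi(\dd\sigma)\le e^{-B_n}$, so a worst-case argument is needed. Fix $\sigma$ and let $\mathcal U_\delta$ be a small neighbourhood of $\sigma$ in $\mathcal D$. Because $\mathcal D$ has non-empty interior, $\mathcal D$ is convex and compact, and $\Pi$ is the volume measure, we have $\Pi(\mathcal U_\delta)\ge c\,\delta^{d^2-1}$ uniformly in $\sigma$. All states are full-rank, so eigenvalues are bounded below by some $\lambda_{\min}>0$. Then for $\sigma'\in\mathcal U_\delta$ every rank-1 PVM outcome satisfies
\[
\tr[m(x)\sigma']\ge (1-\delta/\lambda_{\min})\,\tr[m(x)\sigma].
\]
It follows that
\[
\widetilde L_k \ge \Pi(\mathcal U_\delta)\,(1-C\delta)^k L_k(\sigma).
\]
On $\{d_{T_n}=0\}$ this yields $L_{T_n}(\sigma)\le e^{-B_n}\,\Pi(\mathcal U_\delta)^{-1}(1-C\delta)^{-T_n}$. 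Changing measure from $\sigma$ to $\rho$ (the kernels cancel again) gives
\[
\beta_n(\sigma)=\Ebb_{n,\rho}\!\left[L_{T_n}(\sigma)\,\mathbf 1\{d_{T_n}=0\}\right].
\]
The factor $(1-C\delta)^{-T_n}$ is handled by truncating at $T_n\le Kn$ and controlling the tail $\Pbb_{n,\rho}(T_n>Kn)$ with the drift bound below (an exponential tail follows from bounded increments and Azuma). Choosing $\delta=\delta_n\to0$ slowly, say $\delta_n=1/n$, makes both the polynomial factor $\Pi(\mathcal U_\delta)^{-1}$ and the factor $(1-C\delta)^{-Kn}$ subexponential. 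Every bound is uniform in $\sigma$, which gives \eqref{condition_2}.

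\textbf{Expected sample size.} I expect this step to be the main obstacle. The increments $\widetilde Z_j$ are bounded: their absolute value is at most $\log(1/\lambda_{\min})$ under rank-1 PVMs with full-rank states. Under $\rho$, while $\widetilde S_{j-1}\ge0$, the conditional drift is
\[
\Ebb[\widetilde Z_j\mid\mathcal F_{j-1}]=D(P_{\rho,m_0^\ast(\tilde\sigma_{j-1})}\|P_{\tilde\sigma_{j-1},m_0^\ast(\tilde\sigma_{j-1})})=D_{\mathcal M}(\rho\|\tilde\sigma_{j-1})\ge D_{\mathcal M}(\rho\|\mathcal D),
\]
using $\tilde\sigma_{j-1}\in\mathcal D$. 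When $\widetilde S_{j-1}<0$ the drift is still nonnegative, since it is a KL divergence. I would therefore show that the fraction of time spent below $0$ is $o(n)$, using the positive drift together with bounded increments. Wald/optional stopping applied to $\widetilde S_{T_n}-\sum_{j\le T_n}\Ebb[\widetilde Z_j\mid\mathcal F_{j-1}]$ then gives
\[
\Ebb_\rho[T_n]\le \frac{B_n+O(1)}{D_{\mathcal M}(\rho\|\mathcal D)}+o(n)<n.
\]
Under $\sigma\in\mathcal D$ the difficulty is that $\tilde\sigma_{j-1}\neq\sigma$. The drift of $-\widetilde Z_j$ in the regime $\widetilde S<0$ is
\[
D(P_{\sigma,m}\|P_{\rho,m})-D(P_{\sigma,m}\|P_{\tilde\sigma,m}),\qquad m=m_1^\ast(\tilde\sigma).
\]
I plan to control this through posterior concentration: $\sum_j D(P_{\sigma,M_j}\|P_{\tilde\sigma_{j-1},M_j})$ equals the expected log-regret of the Bayesian mixture. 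By the same neighbourhood argument this regret is $O(\log k)$ uniformly in $\sigma$. Hence, apart from $o(n)$ rounds, $\tilde\sigma_{j-1}$ is close to $\sigma$ in the sense of the measured divergence. By continuity of $D_{\mathcal M}$ and of $m_1^\ast$, the drift there is then at least $D_{\mathcal M}(\mathcal D\|\rho)-\tau/2$. Finally, the threshold choice with slack $\tau$ absorbs all $o(n)$ losses, giving \eqref{condition_3}.
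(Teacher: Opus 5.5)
\textbf{Main gap: the alternative-side sample-size step.} Your Type-I bound matches the paper's. Your Type-II argument (a fixed radius $\delta_n=1/n$ plus truncation of $T_n$) is a workable variant of the paper's $k$-dependent radius around $\sigma^{(k)}=(1-\frac1k)\sigma+\frac1k\sigma^\circ$. The real problem is on the alternative side.

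Your regret identity only gives $\Ebb_{n,\sigma}\bigl[\sum_{j\le T}D(P_{\sigma,M_j}\|P_{\tilde\sigma_{j-1},M_j})\bigr]=O(\log\Ebb_{n,\sigma}[T])$. That is, $\sigma$ and $\tilde\sigma_{j-1}$ agree only on the statistics of the measurement $M_j$ actually performed. Each $M_j$ is a single rank-1 PVM, which is informationally incomplete, and it is chosen to separate $\tilde\sigma_{j-1}$ from $\rho$, not to learn $\sigma$. So nothing forces $\tilde\sigma_{j-1}$ towards $\sigma$ in trace norm or in $D_{\mathcal M}$. Moreover, $m_1^\ast$ is only a measurable selection of a possibly non-unique argmax, so it need not be continuous. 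Hence the inference ``$\tilde\sigma_{j-1}$ is close to $\sigma$, so by continuity of $D_{\mathcal M}$ and $m_1^\ast$ the drift is at least $D_{\mathcal M}(\mathcal D\|\rho)-\tau/2$'' does not go through.

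The missing idea is to compare against $\tilde\sigma_{j-1}$ rather than $\sigma$. Write the drift of $-\widetilde Z_j$ as $D(P_{\tilde\sigma_{j-1},M_j}\|P_{\rho,M_j})+\sum_x(P_{\sigma,M_j}-P_{\tilde\sigma_{j-1},M_j})(x)\log\frac{P_{\tilde\sigma_{j-1},M_j}(x)}{P_{\rho,M_j}(x)}$.
\begin{itemize}
\item On $\{\widetilde S_{j-1}<0\}$ the first term equals $D_{\mathcal M}(\tilde\sigma_{j-1}\|\rho)\ge D_{\mathcal M}(\mathcal D\|\rho)$. This follows from the definition of $m_1^\ast$ and $\tilde\sigma_{j-1}\in\mathcal D$, with no continuity needed.
\item The second term is at most $C\sqrt{2D(P_{\sigma,M_j}\|P_{\tilde\sigma_{j-1},M_j})}$, by H\"older, bounded log-ratios and Pinsker.
\item Cauchy--Schwarz combined with your regret bound then makes its accumulated contribution $O\bigl(\sqrt{\Ebb_{n,\sigma}[T]\log\Ebb_{n,\sigma}[T]}\bigr)$, which is sublinear.
\end{itemize}

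\textbf{Two further missing pieces.} First, under $\Pbb_{n,\sigma}$ you never control the rounds with $\widetilde S_{j-1}\ge0$. There $M_j=m_0^\ast(\tilde\sigma_{j-1})$, and the drift of $-\widetilde S$ can be far below $D_{\mathcal M}(\mathcal D\|\rho)$. Saying that the slack $\tau$ absorbs the $o(n)$ losses presupposes that the expected number of such rounds is $o(n)$ uniformly in $\sigma$, which you have not shown. The paper proves an $O(1)$ bound by passing to the Bayesian mixture measure $\mathbb Q_n$, whose density with respect to $\Pbb_{n,\rho}$ on $\mathcal F_k$ is $\widetilde L_k$. Under $\mathbb Q_n$, $\tilde\sigma_{k-1}$ is the exact predictive state, so $\mathbb Q_n(\widetilde S_k\ge0)\le c^k$. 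This transfers to $\Pbb_{n,\sigma}$ via $\left.\frac{\dd\Pbb_{n,\sigma}}{\dd\mathbb Q_n}\right|_{\mathcal F_k}=e^{\widetilde S_k-S_k(\sigma)}\le e^{\widetilde C}k^{d^2-1}$.

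Second, on the null side, nonnegativity of the drift when $\widetilde S_{j-1}<0$ is not enough. A zero-drift process below $0$ can have excursions of infinite expected length. Both your claim that the time below $0$ is $o(n)$ and the Azuma tail for $T_n$ used in your Type-II truncation need a uniform strictly positive lower bound in that regime too. Such a bound does hold: $m_1^\ast(\tilde\sigma)$ maximizes a strictly positive divergence and hence separates $\tilde\sigma$ from $\rho$, and compactness of $\mathcal D$ and $\mathcal M_{\mathcal X}$ gives uniformity. But this has to be proved.
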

\subsection{Converse of \Cref{theorem:big}}\label{sec:converse}
We next show that the error exponents achieved in \Cref{prop:achievability} are optimal, i.e., they cannot be improved by \emph{any} sequence of adaptive SQHTs.%by proving an analogue of \cite[Lemma 20]{li2022optimal}.
\begin{proposition}\label{prop:converse}
    Let $\{\mathcal{S}_n\}_{n=1}^{\infty}$ be a sequence of SQHTs with adaptive strategies such that $\alpha^{\star}_n, \beta^{\star}_n \rightarrow 0$ and the sequence $\{T_n\}_{n=1}^{\infty}$ satisfies the expectation constraint \eqref{condition_3}. Then, \begin{align}
\limsup_{n\to\infty}\frac1n\log\frac1{\alpha_n^{\star}}&\leq D_{\mathcal M}(\mathcal{D}\|\rho),\text{ and}\label{eq:a}\\[0mm] \limsup_{n\to\infty}\frac1n\log\frac1{\beta_n^{\star}}&\leq D_{\mathcal M}(\rho\Vert \mathcal{D}).\label{eq:b}
    \end{align}
\end{proposition}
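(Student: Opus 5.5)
The converse reduces to the simple-hypothesis converse of \cite{li2022optimal}, applied separately to each alternative $\sigma\in\mathcal D$, followed by an infimum over $\sigma$.

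\textbf{Step 1: fix an alternative $\sigma\in\mathcal D$.} The sequence $\{\mathcal S_n\}$ is also a sequence of tests for the simple problem $\rho$ versus $\sigma$. Its Type-I error is exactly $\alpha_n^\star$, and its Type-II error satisfies $\beta_n(\sigma)\le\beta_n^\star\to0$. The expectation constraint \eqref{condition_3} gives
\begin{align}
\limsup_{n\to\infty}\bigl(\max\{\mathbb E_{n,\rho}[T_n],\mathbb E_{n,\sigma}[T_n]\}-n\bigr)\le 0 .
\end{align}
So every hypothesis of the converse in \cite{li2022optimal} (their Lemma 20 analogue) holds for the pair $(\rho,\sigma)$.

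\textbf{Step 2: import the simple converse.} That result yields
\begin{align}
\limsup_{n\to\infty}\frac1n\log\frac1{\alpha_n^\star}&\le D_{\mathcal M}(\sigma\|\rho),\\
\limsup_{n\to\infty}\frac1n\log\frac1{\beta_n(\sigma)}&\le D_{\mathcal M}(\rho\|\sigma).
\end{align}

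\textbf{Step 3: take the infimum over $\sigma$.} Since $\beta_n^\star\ge\beta_n(\sigma)$, we have $\frac1n\log\frac1{\beta_n^\star}\le\frac1n\log\frac1{\beta_n(\sigma)}$, so the second bound transfers to $\beta_n^\star$. The left-hand sides do not depend on $\sigma$, so taking $\inf_{\sigma\in\mathcal D}$ on the right gives \eqref{eq:a} and \eqref{eq:b} directly.

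\textbf{The real work: re-deriving Step 2 if the cited result does not apply verbatim.} The citation may be stated only for exponent pairs, or may require both errors to vanish under that exact constraint. In that case I would reprove it with the data-processing argument for stopped adaptive processes. Let $\mathcal F_{T_n}$ be the stopped $\sigma$-algebra. Then
\begin{align}
D\bigl(\mathbb P_{n,\rho}|_{\mathcal F_{T_n}}\,\big\|\,\mathbb P_{n,\sigma}|_{\mathcal F_{T_n}}\bigr)
=\mathbb E_{n,\rho}\Bigl[\sum_{k=1}^{T_n}D\bigl(P_{\rho,M_k}\|P_{\sigma,M_k}\bigr)\Bigr]
\le D_{\mathcal M}(\rho\|\sigma)\,\mathbb E_{n,\rho}[T_n].
\end{align}
The equality is a chain rule via optional stopping. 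The POVM-selection kernels $\mu_k$ are common to both measures and cancel in the likelihood ratio. Wald's identity applies because the increments are bounded: the states are full-rank, so the log-likelihood ratios are uniformly bounded. Each conditional term is then bounded by the measured relative entropy.

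Next apply data processing to the binary decision $d_{T_n}$. This gives
\begin{align}
D\bigl(\mathbb P_{n,\rho}\|\mathbb P_{n,\sigma}\bigr)\ge d\bigl(1-\alpha_n^\star\,\|\,\beta_n(\sigma)\bigr)\ge(1-\alpha_n^\star)\log\frac1{\beta_n(\sigma)}-\log 2 .
\end{align}
Dividing by $n$ and using $\mathbb E_{n,\rho}[T_n]\le n+o(n)$ and $\alpha_n^\star\to0$ yields the $\beta$ bound. Swapping the roles of $\rho$ and $\sigma$, and using $\mathbb E_{n,\sigma}[T_n]$ and $\beta_n(\sigma)\to0$, yields the $\alpha$ bound.

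The main technical obstacle is justifying the chain rule at the stopping time. This requires $\mathbb E[T_n]<\infty$, which \eqref{condition_3} provides for all large $n$, together with the bounded increments noted above. I would handle it by truncating at $T_n\wedge N$ and letting $N\to\infty$ via monotone convergence.
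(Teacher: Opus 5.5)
Your proposal is correct and matches the paper's proof. You fix $\sigma\in\mathcal D$ and apply the simple-hypothesis converse (Lemma 20 of Li et al.) to the pair $(\rho,\sigma)$, using $\beta_n(\sigma)\le\beta_n^\star$ and $\mathbb E_{n,\sigma}[T_n]\le\sup_{\sigma'\in\mathcal D}\mathbb E_{n,\sigma'}[T_n]$, and then take the infimum over $\sigma$. Your fallback derivation is also sound and can replace the citation on its own: it bounds the Wald-type chain rule at the stopping time by $D_{\mathcal M}$ times the expected sample size and then applies data processing to the binary decision.
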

\begin{IEEEproof}[Proof sketch]
Fix any $\sigma\in\mathcal D$. For the Type-I error, invoking \cite[Lemma 20]{li2022optimal}, we obtain
\vspace{-0.1cm}
\begin{equation}
\limsup_{n\to\infty}\frac1n\log\frac1{\alpha_n^\star}\leq D_{\mathcal M}(\sigma\|\rho). \nonumber
\end{equation}
Since this upper bound holds for every $\sigma\in\mathcal D$, minimizing the RHS over $\sigma\in\mathcal D$ yields \eqref{eq:a}. Similarly, 
% \begin{equation}
%     \limsup_{n\to\infty}\frac1n\log\frac1{\beta_n(\sigma)}\leq D_{\mathcal M}(\rho\|\sigma).
% \end{equation}
using that $\log\frac1{\beta_n^\star}\leq\log\frac1{\beta_n(\sigma)}$, for all $\sigma\in\mathcal D$, and applying the corresponding converse bound for simple SQHT yields \eqref{eq:b}.
\end{IEEEproof}
\section{Proof of Achievability}\label{sec:proof_achievability}
This section presents a proof sketch of \Cref{prop:achievability}, and since $\tau>0$ is arbitrary, it suffices to prove the statement with $R_0=D_{\mathcal M}(\mathcal D\|\rho)-\tau$ and $R_1=D_{\mathcal M}(\rho\|\mathcal D)-\tau$. We begin by establishing several preliminary results that underpin the analysis, with proofs deferred to the Appendix.
%We start by presenting some preliminary results. These will form the foundation of our analysis and are all proven in the Appendix.
\vspace{-0.1cm}
\subsection{Preliminary Results}\label{sec:prelim}
% We begin by establishing a basic structural property of $\mathcal D$.
% \textcolor{purple}{Since the function $\lambda_{\min}(\sigma)$ is continuous in $\sigma$ and $\mathcal D$ is compact, the minimum $\min_{\sigma\in\mathcal D}\lambda_{\min}(\sigma)=: \delta$ is attained by the Extreme Value Theorem \textcolor{blue}{\cite[Theorem 4.16]{rudin1976principles}}. Moreover, as every $\sigma \in \mathcal D$ is full-rank we have $\delta>0$. 
% Because every $\sigma\in \mathcal D$ has unit trace, then $\delta\leq 1/d$ and thus $\mathcal D$ satisfies
% \begin{equation}
% \mathcal D\subseteq \left\{\sigma: \tr[\sigma]=1, \sigma\succeq \delta I\right\}.
% \end{equation}
% for some $\delta\in (0,1/d]$
% }
% then $\lambda_{\min}(\sigma)>0$}
% Let $\delta:= \min_{\sigma\in\mathcal D}\lambda_{\min}(\sigma)$, then since $\mathcal D$ is a compact and convex set of full-rank quantum states on a $d$-dimensional Hilbert space and \textcolor{purple}{$\lambda_{\min}(\sigma)$ is continuous in in $\sigma$ $\lambda_{\min}(\sigma)>0$}, then $\delta\in(0,1/d]$ and hence $\mathcal D$ satisfies
%  \begin{equation}
% \mathcal D\subseteq \left\{\sigma: \tr[\sigma]=1, \sigma\succeq \delta I\right\}.  
% \end{equation} 

We first note that the following processes are martingales.
\begin{lemma}\label{lemma:martingales}
Let $Y_k^{(\rho)}:=-\widetilde S_k+\sum_{j=1}^k\mathbb E_{n,\rho}[\widetilde Z_j \mid \mathcal{F}_{j-1}]$ with $Y_0^{(\rho)}:=0$. Then, 
%and\textcolor{blue}{, for any $\sigma\in\mathcal D$, let} {\color{red}$Y_k^{(\sigma)}:=\widetilde S_k+\sum_{j=1}^k\mathbb E_{n,\sigma}[-\widetilde Z_j \mid \mathcal{F}_{j-1}]$}{\color{red} the notation is confusing}. 
    $\{L_k(\sigma)\}_{k \geq 0}$, $\{\widetilde L_k\}_{k\ge0}$ and $\{Y_k^{(\rho)}\}_{k \geq 0}$ are martingales under $\Pbb_{n,\rho}$ with respect to the filtration $\{\mathcal{F}_k\}_{k\geq 0}$.
  %  \item  $\{Y_k^{(\sigma)}\}_{k\ge1}$ is a martingale under $\Pbb_{n,\sigma}$ with respect to the same filtration. {\color{red}  doesn't this follow from the Martingale property of $Y_k^{(\rho)}$ since $Y_k^{(\sigma)}$ is just the negative of $Y_k^{(\rho)}$ evaluated at $\rho=\sigma$?}
%\end{itemize} 
% where $\mathcal F_0:=\{\emptyset, \mathcal X^{\mathbb N}\}$ and $\mathcal F_k:=\sigma(\mathcal{H}_k)$ for $k\geq 1$. 
\end{lemma}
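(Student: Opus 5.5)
The plan is to verify the three martingale claims directly from the definition of $\Pbb_{n,\rho}$ in \eqref{eq:probability_measure}. The key structural fact is that, under $\Pbb_{n,\rho}$, the measurement $M_k$ is $\mathcal F_{k-1}$-measurable (it is chosen deterministically via \eqref{eq:mu_k}; for a general $\mu_k$ one first conditions on $M_k$ as well). Moreover, conditionally on $\mathcal F_{k-1}$ and $M_k=m$, the outcome $X_k$ has law $\tr[m(x)\rho]$. Adaptedness and integrability come next. $L_k(\sigma)$, $\widetilde L_k$ and $\widetilde S_k$ are measurable functions of $(M_1^k,X_1^k)$, with joint measurability in $\sigma$ for the integral, which is handled via the measurable choice of PVMs in \Cref{appendix:prem}. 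Since $\rho$ and every $\sigma\in\mathcal D$ are full-rank and $\mathcal D$ is compact, set $\delta:=\min_{\sigma\in\mathcal D}\lambda_{\min}(\sigma)>0$. Then on any outcome with $m(x)\neq0$ each likelihood ratio $\tr[m(x)\sigma]/\tr[m(x)\rho]$ lies in $[\delta/\lambda_{\max}(\rho),\lambda_{\max}(\sigma)/\lambda_{\min}(\rho)]$. Hence each factor of $L_k(\sigma)$, each $\widetilde Z_j$, and therefore $L_k,\widetilde L_k,\widetilde S_k$ for fixed $k$, are bounded, and integrability is immediate.

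For $L_k(\sigma)$, write $L_k(\sigma)=L_{k-1}(\sigma)\,\frac{\tr[M_k(X_k)\sigma]}{\tr[M_k(X_k)\rho]}$. Conditioning on $\mathcal F_{k-1}$ (and on $M_k$) gives
\[
\Ebb_{n,\rho}\Bigl[\tfrac{\tr[M_k(X_k)\sigma]}{\tr[M_k(X_k)\rho]}\Bigm|\mathcal F_{k-1}\Bigr]=\sum_{x:\,M_k(x)\neq0}\tr[M_k(x)\sigma]=\tr\Bigl[\textstyle\sum_x M_k(x)\,\sigma\Bigr]=1 .
\]
Here we use the common-support convention from the footnote and the fact that outcomes with $M_k(x)=0$ contribute nothing to $\tr[\cdot\,\sigma]$. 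Thus $\Ebb_{n,\rho}[L_k(\sigma)\mid\mathcal F_{k-1}]=L_{k-1}(\sigma)$. For $\widetilde L_k=\int_{\mathcal D}L_k(\sigma)\Pi(\dd\sigma)$, apply conditional Fubini, which is justified by the uniform bound $0\le L_k(\sigma)\le C^k$ over $\sigma\in\mathcal D$. Exchanging the conditional expectation and the $\Pi$-integral yields $\Ebb[\widetilde L_k\mid\mathcal F_{k-1}]=\int L_{k-1}(\sigma)\Pi(\dd\sigma)=\widetilde L_{k-1}$. Alternatively, one can use the incremental form $\widetilde L_k=\widetilde L_{k-1}\tr[M_k(X_k)\tilde\sigma_{k-1}]/\tr[M_k(X_k)\rho]$ from \Cref{lemma:bary} with $\tilde\sigma_{k-1}\in\mathcal D$ being $\mathcal F_{k-1}$-measurable; the same one-line computation then applies with $\sigma=\tilde\sigma_{k-1}$.

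For $Y_k^{(\rho)}$, use \eqref{eq:incremental} to write $Y_k^{(\rho)}=\sum_{j=1}^k\bigl(\Ebb_{n,\rho}[\widetilde Z_j\mid\mathcal F_{j-1}]-\widetilde Z_j\bigr)$, which is a sum of martingale differences (Doob decomposition). Since $|\widetilde Z_j|$ is bounded by a deterministic constant, each term is integrable, and $\Ebb[Y_k^{(\rho)}-Y_{k-1}^{(\rho)}\mid\mathcal F_{k-1}]=0$ follows directly. The main obstacle is measure-theoretic rather than computational. One must justify the Fubini interchange for $\widetilde L_k$, which requires joint measurability of $(\sigma,\omega)\mapsto L_k(\sigma)$ and the uniform lower bound $\delta$. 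One must also justify the $\mathcal F_{k-1}$-measurability of $\tilde\sigma_{k-1}$ and $M_k$, for which I would rely on the measurable selection of the PVMs in \Cref{appendix:prem}.
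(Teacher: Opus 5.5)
Your proposal is correct and follows the paper's proof essentially step for step. It uses the one-step identity $\mathbb E_{n,\rho}\bigl[\tr[M_k(X_k)\sigma]/\tr[M_k(X_k)\rho]\bigm|\mathcal F_{k-1}\bigr]=1$ for $L_k(\sigma)$, conditional Fubini on the jointly measurable map $(\omega,\sigma)\mapsto L_k(\omega,\sigma)$ for $\widetilde L_k$, and the martingale-difference form of $Y_k^{(\rho)}$ with $|\widetilde Z_j|$ bounded.

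The differences are minor. You get integrability from the eigenvalue bounds on the likelihood ratios rather than from $\mathbb E_{n,\rho}[L_k(\sigma)]=1$ via induction and Tonelli, and your alternative via the incremental form of \Cref{lemma:bary} is a valid way to avoid the conditional Fubini interchange. One small correction: joint measurability of $L_k$ follows from continuity in $\sigma$, not from the measurable PVM selection, which is needed only to make $M_k$ $\mathcal F_{k-1}$-measurable.
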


The following lemma uniformly bounds the deviation of mixture-SQLLR from the pointwise-SQLLR over all states $\sigma \in \mathcal D$. This will be critical in the Type-II error analysis.
\vspace{-0.4cm}
\begin{lemma}\label{lemma:mixture_martingale_comparison_with_pointwise}
\setlength{\abovedisplayskip}{3pt}\setlength{\belowdisplayskip}{3pt}%
\setlength{\abovedisplayshortskip}{3pt}\setlength{\belowdisplayshortskip}{3pt}%
Fix an auxiliary reference state $\sigma^\circ$ in the interior of $\mathcal D$ relative to $\mathsf A_d$. Then, for every $k\geq 1$ and every $\sigma\in\mathcal D$,
\begin{align}
\widetilde S_k - S_k(\sigma) \le \widetilde C + (d^2-1)\log k ,
\end{align}
where $\widetilde C$ is a finite constant that depends on the fixed choice of $\sigma^\circ$, but is independent of $k$ and $\sigma$.
\end{lemma}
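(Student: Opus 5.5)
The plan is to rewrite the difference as a log-ratio of likelihoods, $\widetilde S_k - S_k(\sigma) = \log \bigl(L_k(\sigma)/\widetilde L_k\bigr)$, so that it suffices to prove the lower bound
\begin{align}
\widetilde L_k \ge e^{-\widetilde C}\, k^{-(d^2-1)}\, L_k(\sigma)
\end{align}
uniformly in $\sigma\in\mathcal D$ and in the realized history $(M_1^k,X_1^k)$. Recall that
\begin{align}
L_k(\sigma')=\prod_{j=1}^k \frac{\tr[M_j(X_j)\sigma']}{\tr[M_j(X_j)\rho]}.
\end{align}
The approach is a Laplace/volume argument: find a set of states of $\Pi$-mass at least of order $k^{-(d^2-1)}$ on which $L_k(\cdot)$ is at least a constant multiple of $L_k(\sigma)$. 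Then integrate.

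\textbf{Construction of the set.} Since $\sigma^\circ$ lies in the interior of $\mathcal D$ relative to $\mathsf A_d$, fix $r>0$ such that the relative ball $B:=\{\tau\in\mathsf A_d:\|\tau-\sigma^\circ\|\le r\}$ is contained in $\mathcal D$. Given $\sigma\in\mathcal D$ and $\varepsilon\in(0,1]$, define
\begin{align}
N_\varepsilon(\sigma):=\{(1-\varepsilon)\sigma+\varepsilon\tau:\tau\in B\}.
\end{align}
By convexity, $N_\varepsilon(\sigma)\subseteq\mathcal D$. Since $N_\varepsilon(\sigma)$ is an affine image of $B$ scaled by $\varepsilon$ in the $(d^2-1)$-dimensional affine space $\mathsf A_d$, its volume is
\begin{align}
\Vol(N_\varepsilon(\sigma))=\varepsilon^{d^2-1}\Vol(B).
\end{align}
Hence, for the normalized volume measure, $\Pi(N_\varepsilon(\sigma))=\varepsilon^{d^2-1}\Vol(B)/\Vol(\mathcal D)$, independently of $\sigma$.

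\textbf{Likelihood comparison.} For any $\sigma'=(1-\varepsilon)\sigma+\varepsilon\tau\in N_\varepsilon(\sigma)$ and any POVM element, $\tr[M_j(X_j)\tau]\ge0$ because $\tau\succeq0$. Therefore
\begin{align}
\tr[M_j(X_j)\sigma']\ge(1-\varepsilon)\tr[M_j(X_j)\sigma],
\end{align}
and so $L_k(\sigma')\ge(1-\varepsilon)^kL_k(\sigma)$. Outcomes with $M_j(X_j)=0$ are excluded by the footnote convention, and all denominators are positive since $\rho$ is full-rank. Choose $\varepsilon=1/k$ for $k\ge2$, which gives $(1-1/k)^k\ge 1/4$, and $\varepsilon=1/2$ for $k=1$. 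Then
\begin{align}
\widetilde L_k\ge\int_{N_\varepsilon(\sigma)}L_k\,\dd\Pi\ge \tfrac14\, 2^{-(d^2-1)}\, k^{-(d^2-1)}\,\frac{\Vol(B)}{\Vol(\mathcal D)}\,L_k(\sigma),
\end{align}
where the factor $2^{-(d^2-1)}$ is only needed to cover $k=1$. Taking logarithms gives the claim with
\begin{align}
\widetilde C=\log 4+(d^2-1)\log 2+\log\bigl(\Vol(\mathcal D)/\Vol(B)\bigr),
\end{align}
which depends on $\sigma^\circ$ only through $r$, and not on $k$ or $\sigma$.

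\textbf{Main obstacle.} The delicate point is the geometric step. One must ensure that the ball lies in $\mathcal D$ relative to $\mathsf A_d$, which is where the non-empty relative interior assumption is used, and that the volume measure on $\mathcal D$ is the $(d^2-1)$-dimensional Lebesgue measure on $\mathsf A_d$. This is what produces exactly the exponent $d^2-1$. The rest is a pointwise inequality, so no probabilistic argument is needed, and the bound holds for every sample path.
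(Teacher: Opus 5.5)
Your proof is correct and reaches the bound by a genuinely simpler route than the paper's likelihood comparison.

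The paper works with a trace-norm ball $\mathcal B_1(\sigma^{(k)},\varepsilon)$ around $\sigma^{(k)}=(1-\frac1k)\sigma+\frac1k\sigma^\circ$. It shows this ball lies in $\mathcal D$ via exactly your contracted set $N_{1/k}(\sigma)$, written in coordinates. It then lower-bounds the ball's $\Pi$-mass through the affine chart and the norm-equivalence lemma. Likelihoods are compared using the Lipschitz estimate $|S_k(\sigma)-S_k(\sigma')|\le k\log(1+\|\sigma-\sigma'\|_1/\delta)$, which relies on the minimal eigenvalue $\delta$ of states in $\mathcal D$ and forces $\varepsilon_k\le\delta/k$. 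A second application of that estimate moves from $\sigma^{(k)}$ back to $\sigma$, at a cost of $2/\delta$.

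You instead integrate over the whole set $N_{1/k}(\sigma)$, whose volume scales exactly as $k^{-(d^2-1)}$. You use the one-sided operator inequality $\sigma'\succeq(1-\varepsilon)\sigma$, which compares directly with $\sigma$. This removes the need for the $D_{\max}$/eigenvalue bound, the norm-equivalence lemma, and the intermediate point $\sigma^{(k)}$. Your constant depends only on $\Vol(\mathcal D)/\Vol(B)$ and not on $\delta$.

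What the paper's approach buys is robustness to the parametrization. A Lipschitz bound on log-likelihoods extends to families where the outcome probabilities are not affine in the parameter. Your argument hinges on Born's rule being linear in $\sigma$ together with convexity of $\mathcal D$, so that mixing in $\varepsilon\tau$ costs at most a factor $(1-\varepsilon)$ per round. In the present setting that is the cleaner choice.
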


In particular, \Cref{lemma:mixture_martingale_comparison_with_pointwise} shows that the worst-case gap between the mixture and pointwise-SQLLRs at time $k$ grows only logarithmically with $k$.
%\textcolor{red}{Although $\widetilde C$ depends on the  fixed auxiliary reference state $\sigma^\circ\in\operatorname{int}(\mathcal D)$, this does not affect any future arguments. The supremum is never taken over $\sigma^\circ$ so the required uniformity is only over $k\geq 1$ and $\sigma\in\mathcal D$.}
The next lemma characterizes the conditional drift of $\widetilde{S}_k$ and establishes its exponential concentration under $\Pbb_{n,\rho}$, together with uniform control of the process under the alternative measure $\Pbb_{n,\sigma}$. These properties are crucial for verifying the expectation constraint.
\begin{lemma}\label{lemma:lemma19}
For the mixture-SQPRTs, the following holds:
\begin{enumerate}[label=(\roman*)]
\item\label{lemma:19;part1}  The conditional expectation of $\widetilde Z_k$ given $\mathcal F_{k-1}$ satisfies
\vspace{-0.23cm}
\begin{multline}
\mathbb{E}_{n,\rho}[\widetilde Z_k \mid \mathcal F_{k-1}] =\mathbf 1_{\{\widetilde S_{k-1}\ge 0\}} D_{\mathcal M}(\rho\| \tilde\sigma_{k-1}) \\+\mathbf 1_{\{\widetilde S_{k-1}<0\}} D(P_{\rho,m_1^\ast(\tilde\sigma_{k-1})}\|P_{\tilde\sigma_{k-1},m_1^\ast(\tilde\sigma_{k-1})}).
\end{multline}
        In particular, 
        % there exists a constant $\textcolor{blue}{c_1}>0$ independent of $k$ and $n$ such that
    % \[
    % \mathbb{E}_{n,\rho}[\widetilde Z_k \mid \mathcal F_{k-1}] \ge \textcolor{blue}{c_1}
    % \quad \text{a.s.},
    % \]
    % and hence 
    $\{\widetilde S_k\}_{k\ge0}$ has uniformly positive conditional drift under $H_0$. 
\enlargethispage{-.04cm}
\item\label{lemma:19;part2}  For sufficiently small $\lambda>0$, there exists $c \in (0,1)$ s.t. 
\vspace{-0.2cm}
\begin{align}
\mathbb{E}_{n,\rho}\left[e^{-\lambda \widetilde S_k}\right]\le c^k \quad\text{and}\quad \mathbb{P}_{n,\rho}(\widetilde S_k < 0)\le c^k.
\end{align}
% \item\label{part3} Let $\widetilde T_n= \inf\{k\geq 1 : S_{k}\geq B_n\}$. Then there exists some finite constant $C$ such that
% \begin{equation}
% \mathbb E_{n,\rho_0}\left[-\hat S_{\widetilde T_n}+\widetilde T_nD_\mathcal M(\rho_0\|\mathcal D)\right]\leq C
% \end{equation}
% \item\label{part4} For sufficiently small $\lambda$, there exists an $\gamma>0$ such that
% \begin{equation}\label{eq:fixed_exp_moment}
% \mathbb E_{n,\mathcal A,0}\left[e^{\gamma T_n}\right] \le Ce^{\lambda B_n}.
%     \end{equation}
\item\label{lemma19;part3}
% For every \(\sigma\in\mathcal D\) and every \(k\ge 1\),
% \begin{align}
% \Pbb_{n,\sigma}(\widetilde S_k\ge 0)
% \le e^{C_\Pi}k^{d^2-1}c^k.
% \end{align}
% where $C_\Pi$ is the same constant as in \Cref{lemma:mixture_martingale_comparison_with_pointwise}. Consequently,
There exists a finite constant  {$C_+$} satisfying
\vspace{-0.3cm}
\begin{align}
\mathbb E_{n,\sigma}\bigl[\sum_{j=1}^{T_n}\mathbf 1_{\{\widetilde S_{j-1}\ge 0\}}\bigr]\le {C_+} \quad \text{for all } \sigma\in \mathcal{D}.
\end{align}
\end{enumerate}
\end{lemma}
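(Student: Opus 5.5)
Part (i) is a direct computation. Conditioning on $\mathcal F_{k-1}$ fixes $\tilde\sigma_{k-1}$ and, through \eqref{eq:mu_k}, also fixes $M_k$. Under $\Pbb_{n,\rho}$ the outcome $X_k$ has law $P_{\rho,M_k}$, so
\[
\mathbb E_{n,\rho}[\widetilde Z_k\mid\mathcal F_{k-1}]=D(P_{\rho,M_k}\|P_{\tilde\sigma_{k-1},M_k}).
\]
On $\{\widetilde S_{k-1}\ge0\}$ we have $M_k=m_0^\ast(\tilde\sigma_{k-1})$, and this is $D_{\mathcal M}(\rho\|\tilde\sigma_{k-1})$ by \eqref{eq:measuredentropy_def}. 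On the complement it is the stated divergence with $m_1^\ast$.

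For uniform positivity I would argue in three steps:
\begin{enumerate}[label=(\alph*)]
\item Since $\tilde\sigma_{k-1}\in\mathcal D$ a.s., the first branch is at least $D_{\mathcal M}(\rho\|\mathcal D)>0$. This is positive because $\rho\notin\mathcal D$, $\mathcal D$ is compact, and $D_{\mathcal M}$ is continuous and faithful on full-rank states.
\item For the second branch I need $\inf_{\sigma\in\mathcal D}D(P_{\rho,m_1^\ast(\sigma)}\|P_{\sigma,m_1^\ast(\sigma)})>0$. The PVM $m_1^\ast(\sigma)$ maximizes $D(P_{\sigma,m}\|P_{\rho,m})$, which is at least $D_{\mathcal M}(\mathcal D\|\rho)>0$, so $P_{\sigma,m_1^\ast(\sigma)}\neq P_{\rho,m_1^\ast(\sigma)}$ and $D(P_{\rho,m_1^\ast}\|P_{\sigma,m_1^\ast})>0$ pointwise.
\item To make this uniform, I would use compactness of $\mathcal D\times\mathcal M_{\mathcal X}$. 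Suppose a sequence $(\sigma_i,m_1^\ast(\sigma_i))$ had $D(P_\rho\|P_{\sigma_i})\to0$. By Pinsker the distributions would merge. Passing to a convergent subsequence $(\sigma,m)$, the KL divergence is continuous there (all entries are bounded below by the minimal eigenvalue $\delta>0$ of $\mathcal D\cup\{\rho\}$). This gives $P_{\rho,m}=P_{\sigma,m}$ while $D(P_{\sigma,m}\|P_{\rho,m})\ge D_{\mathcal M}(\mathcal D\|\rho)>0$, a contradiction.
\end{enumerate}
Call the resulting lower bound $c_1>0$.

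For part (ii), I would use the standard supermartingale-type argument. The same bound $\delta$ gives $|\widetilde Z_j|\le K:=\log(1/\delta)$. The drift from (i) is at least $c_1$, so conditionally
\[
\mathbb E_{n,\rho}[e^{-\lambda\widetilde Z_j}\mid\mathcal F_{j-1}]\le 1-\lambda c_1+\tfrac{\lambda^2}{2}K^2e^{\lambda K}=:c<1
\]
for small $\lambda$, by a second-order Taylor bound. Iterating the tower property gives $\mathbb E_{n,\rho}[e^{-\lambda\widetilde S_k}]\le c^k$. Markov's inequality on $\{\widetilde S_k<0\}=\{e^{-\lambda\widetilde S_k}>1\}$ then gives the probability bound.

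Part (iii) is the main obstacle. Under $\Pbb_{n,\sigma}$ we must show $\Pbb_{n,\sigma}(\widetilde S_{j-1}\ge0)$ is summable uniformly in $\sigma$, and then bound the sum by $\sum_{j\ge1}\Pbb_{n,\sigma}(\widetilde S_{j-1}\ge0)$. By \Cref{lemma:mixture_martingale_comparison_with_pointwise}, $\{\widetilde S_k\ge0\}\subseteq\{S_k(\sigma)\ge-\widetilde C-(d^2-1)\log k\}$. Under $\Pbb_{n,\sigma}$, the process $L_k(\sigma)^{-1}=e^{S_k(\sigma)}$ is a martingale, since it is the likelihood ratio of $\rho$ against $\sigma$; this is the analogue of \Cref{lemma:martingales} with roles swapped. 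I would then show the conditional drift of $-S_k(\sigma)$ is bounded below by a positive constant. On the relevant event $\widetilde S_{k-1}\ge0$ this is delicate, because the measurement is tuned to $\tilde\sigma_{k-1}$, not $\sigma$. What I can use is that any informative PVM gives $D(P_{\sigma,m}\|P_{\rho,m})\ge0$, and I need strict positivity uniformly.

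My first attempt would be a Chernoff bound with exponent $\lambda'\in(0,1)$. Writing $e^{\lambda' S_k(\sigma)}$, conditional Hölder gives
\[
\mathbb E_{n,\sigma}[e^{\lambda'(\cdot)}\mid\mathcal F_{k-1}]=\sum_x P_\rho^{\lambda'}P_\sigma^{1-\lambda'}\le1,
\]
with equality only if $P_{\rho,M_k}=P_{\sigma,M_k}$. A uniform bound $\le c'<1$ over $\sigma\in\mathcal D$ and the chosen PVMs would again follow by a compactness argument as in (i), provided the chosen rank-1 PVMs always separate $\rho$ from every $\sigma\in\mathcal D$. If that fails, I would instead bound via the comparison in $\widetilde S$ itself, using that the PVM $m_0^\ast(\tilde\sigma)$ separates $\rho$ from $\tilde\sigma$. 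Given $c'<1$, we get
\[
\Pbb_{n,\sigma}(\widetilde S_k\ge0)\le e^{\lambda'\widetilde C}k^{\lambda'(d^2-1)}c'^k,
\]
which is summable uniformly in $\sigma$, yielding $C_+$.
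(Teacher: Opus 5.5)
Your parts (i) and (ii) are sound. The compactness contradiction for the $m_1^\ast$ branch is a slightly shorter version of the paper's. Your Taylor bound $1-\lambda c_1+\tfrac{\lambda^2}{2}K^2e^{\lambda K}$ derives (ii) directly from the drift in (i) and $|\widetilde Z_j|\le\log(1/\delta)$; this is a valid and more explicit alternative to the paper's Young-inequality-plus-compactness step.

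The gap is in (iii). In general there is no uniform constant $c'<1$ with $\sum_x P_{\sigma,M_k}(x)^{1-\lambda'}P_{\rho,M_k}(x)^{\lambda'}\le c'$. The reason is that the PVMs the test selects, $m_0^\ast(\tilde\sigma)$ and $m_1^\ast(\tilde\sigma)$ with $\tilde\sigma\in\mathcal D$, need not separate $\rho$ from the \emph{true} state $\sigma$. A rank-one PVM fixes only $d-1$ linear functionals, so $\{\sigma':P_{\sigma',m}=P_{\rho,m}\}$ is an affine subspace of dimension $d^2-d$ through $\rho$, and it can cut $\mathcal D$.

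Here is a concrete qubit example. Take $\rho$ with Bloch vector $(0,0,\tfrac12)$, and let $\mathcal D$ be a thin closed convex neighbourhood of the segment from $0$ to $(0.4,0,0.5)$; this set stays at distance about $0.3$ from $\rho$. Then $m_0^\ast(\mathbb I/2)$ is the $z$-basis measurement. It gives $P_{\sigma,m}=P_{\rho,m}$ for the state $\sigma$ with Bloch vector $(0.4,0,0.5)$. So the drift $D(P_{\sigma,M_k}\|P_{\rho,M_k})$ of $-S_k(\sigma)$ under $\Pbb_{n,\sigma}$ can vanish exactly on $\{\widetilde S_{k-1}\ge0\}$, the event you flagged as delicate.

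Your fallback points at the right fact, namely that $m_0^\ast(\tilde\sigma)$ separates $\rho$ from $\tilde\sigma$. But under $\Pbb_{n,\sigma}$ the outcomes are drawn from $\sigma$, not from $\tilde\sigma_{k-1}$. Hence $\mathbb E_{n,\sigma}[-\widetilde Z_k\mid\mathcal F_{k-1}]=\sum_x P_{\sigma,M_k}(x)\log\frac{P_{\tilde\sigma_{k-1},M_k}(x)}{P_{\rho,M_k}(x)}$ has no definite sign, and that fact cannot be used directly.

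The missing idea is to carry out the large-deviation estimate under the Bayes-mixture law $\mathbb Q_n$, defined by $(\dd\mathbb Q_n/\dd\Pbb_{n,\rho})|_{\mathcal F_k}=\widetilde L_k$. This family is consistent because $\widetilde L_k$ is a mean-one $\Pbb_{n,\rho}$-martingale, and under it $X_k$ has conditional law $P_{\tilde\sigma_{k-1},M_k}$.

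Under $\mathbb Q_n$ the measurement is always matched to the data-generating state. So $\mathbb E_{\mathbb Q_n}[-\widetilde Z_k\mid\mathcal F_{k-1}]=D(P_{\tilde\sigma_{k-1},M_k}\|P_{\rho,M_k})$ is uniformly positive on both branches, by your compactness argument with the roles of $\rho$ and $\tilde\sigma$ swapped. Your Taylor argument then gives $\mathbb Q_n(\widetilde S_k\ge0)\le c^k$.

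Finally, use \Cref{lemma:mixture_martingale_comparison_with_pointwise} not as an event inclusion but as a bound on the density: $(\dd\Pbb_{n,\sigma}/\dd\mathbb Q_n)|_{\mathcal F_k}=L_k(\sigma)/\widetilde L_k=e^{\widetilde S_k-S_k(\sigma)}\le e^{\widetilde C}k^{d^2-1}$. This yields $\Pbb_{n,\sigma}(\widetilde S_k\ge0)\le e^{\widetilde C}k^{d^2-1}c^k$ uniformly in $\sigma\in\mathcal D$, and summing over $k$ via Tonelli gives $C_+$.
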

\vspace{-0.4cm}
\subsection{Proof of Type-I and Type-II Error Bounds}\label{sec:type_errors}
We first upper bound the Type-I error. We have
\begin{align}
    \alpha_n^\star &= \Pbb_{n,\rho}(\widetilde S_{T_n}\leq-A_n)  = \Pbb_{n,\rho}(\widetilde L_{T_n}\geq e^{A_n}) \nonumber \\
    &\leq  \frac{\mathbb E_{n,\rho}[\widetilde L_{T_n}]}{e^{A_n}} \label{eq:stopping_time_typei}\leq e^{-A_n},
\end{align} where the first inequality uses Markov's inequality, and the second follows from Doob’s optional stopping theorem \cite{durrett2019probability} applied to bounded stopping times for $\{\widetilde{L}_k\}_{k\geq 0}$, followed by Fatou's lemma. This completes the Type-I error bound.  
% that $T_n<\infty$ a.s., thus $\widetilde L_{T_n^{(m)}}\to \widetilde L_{T_n}$ a.s. as $m \to \infty$, and hence
% \[
% \mathbb E_{n,\rho}[\widetilde L_{T_n}] \leq \liminf_{m \to \infty} \mathbb E_{n,\rho}[\widetilde L_{T_n^{(m)}}] = \mathbb E_{n,\rho}[\widetilde L_0] =1
% \]

For the Type-II error, fix $\sigma \in \mathcal D$. Then, we have
\vspace{-0.2cm}
\begin{align}
\beta_n(\sigma)&=\Pbb_{n,\sigma}(d_{T_n}=0) 
\nonumber \\
% &=\textcolor{red}{\Pbb_{n,\sigma}(\widetilde S_{T_n}\geq B_n)} \nonumber \\
&=\mathbb E_{n,\rho}\biggl[\left.\frac{\dd\Pbb_{n,\sigma}}{\dd\Pbb_{n,\rho}}\right|_{\mathcal F_{T_n}}\mathbf 1 _{\{\widetilde S_{T_n}\geq B_n \}} \biggr] \label{eq:typeII_change_measure} \\[-1mm]
% &=\textcolor{red}{\mathbb E_{n,\rho}\left[ e^{- S_{T_n}(\sigma)} \mathbf 1 _{\{\widetilde S_{T_n}\geq B_n \}} \right]} \nonumber  \\
% &\leq e^{-B_n}\mathbb E_{n,\rho}\big[ e^{\widetilde S_{T_n}- S_{T_n}(\sigma)} \big] \label{eq:typeII_X-A}\\
&\leq e^{\widetilde C} e^{-B_n}\mathbb E_{n,\rho}\bigl[ T_n^{d^2-1} \bigr], \label{eq:independent_of_sigma} 
\end{align}

\vspace{-0.2cm} \hspace{-0.5cm} where \eqref{eq:typeII_change_measure} follows via a change of measure  \cite[Lemma 17 (i)]{li2022optimal}. The inequality \eqref{eq:independent_of_sigma} follows by noting that the Radon--Nikodym derivative is given by $\frac{\dd\Pbb_{n,\sigma}}{\dd\Pbb_{n,\rho}}|_{\mathcal F_{T_n}}=e^{-S_{T_n}(\sigma)}$, and since $\mathbf 1_{\{u\geq v\}}\leq e^{u-v}$, then
\vspace{-0.2cm}
\[
e^{-S_{T_n}(\sigma)}\mathbf 1 _{\{\widetilde S_{T_n}\geq B_n \}} \leq e^{-S_{T_n}(\sigma)+\widetilde S_{T_n}- B_n}.
\]
Thus \eqref{eq:independent_of_sigma} follows by applying \Cref{lemma:mixture_martingale_comparison_with_pointwise} to this bound.
% Since the above is independent of $\sigma$, we have
% \begin{equation}
%     \textcolor{red}{\beta_n^\star\leq e^{C_\Pi} e^{-B_n}\mathbb E_{n,\rho}\left[ T_n^{d^2-1} \right]}
% \end{equation}

Now suppose $T_n>k-1$. Then the stopping boundary has not been crossed by time $k-1$, so $-A_n < \widetilde S_{k-1} < B_n$, and hence $\{T_n>{k-1}\} \subseteq \{ \widetilde S_{k-1}<B_n\}$. Therefore, we have 
\vspace{-0.05cm}
\begin{align*}
&\Pbb_{n,\rho}(T_n=k)\leq \Pbb_{n,\rho}(T_n>k-1) \leq\Pbb_{n,\rho}(\widetilde S_{k-1}<B_n) \\&= \Pbb_{n,\rho}(e^{-\lambda\widetilde S_{k-1}}>e^{-\lambda B_n})
\leq \frac{\mathbb E_{n,\rho}[e^{-\lambda\widetilde S_{k-1}}]}{e^{-\lambda B_n}}\le e^{\lambda B_n} c^{k-1}
\end{align*}
where the penultimate inequality uses Markov's inequality and the last uses \Cref{lemma:lemma19}
\ref{lemma:19;part2}, with $\lambda>0$ sufficiently small and $c\in(0,1)$. Therefore, using the above we have
\vspace{-0.15cm}
%\begin{align*}
$$   \mathbb E_{n,\rho}[T_n^{d^2-1}] = \sum_{k=1}^{\infty}k^{d^2-1}\Pbb_{n,\rho}(T_n=k) \leq C'e^{\lambda B_n},
$$ where $C'<\infty$ since the polynomially weighted geometric series converges for $c\in(0,1)$. 
% Thus altogether we have 
% \begin{equation}
% \beta_n^\star \leq  Ke^{(\lambda-1) B_n} \nonumber 
% \end{equation}
% with $K=e^{C_\Pi}C<\infty$. 
Since \eqref{eq:independent_of_sigma} holds for all $\sigma \in \mathcal D$, taking the supremum yields
$
\beta_n^\star \leq  e^{\widetilde C}C'e^{(\lambda-1) B_n}.
$
Taking $n\rightarrow\infty$ and $\lambda\rightarrow0$ completes the Type-II error bound. \hspace{1em plus 1fill}\IEEEQEDhere

% % Finally,  taking $n\to\infty$ and then $\lambda\downarrow0$.
% % completes the Type-II error bound. 
% % \hspace{1em plus 1fill}\IEEEQEDhere

\subsection{Proof of Null Expectation Constraint}\label{sec:null_expectation}
Let $T_n^{(r)}:=T_n\wedge r$. By \Cref{lemma:martingales}, $\{Y_k^{(\rho)}\}_{k \geq 0}$ is a  martingale under $\Pbb_{n,\rho}$. Thus, Doob’s optional stopping theorem  yields that $\mathbb E_{n,\rho}[Y_{T_n^{(r)}}^{(\rho)}] = 0.$
We then have the following relations:\enlargethispage{-.23cm}
\vspace{-0.4cm}
\begin{align}
\mathbb E_{n,\rho}[\widetilde S_{T_n^{(r)}}] 
&=
\mathbb E_{n,\rho}\bigg[\sum_{j=1}^{\scriptscriptstyle T_n^{(r)}} \mathbb E_{n,\rho}[\widetilde Z_j \mid \mathcal F_{j-1}]\bigg] \nonumber\\[-3.75mm]
%\\
% &= \mathbb E_{n,\rho}\bigg[\sum_{j=1}^{T_n^{(m)}} \bigg(
% \mathbf 1_{\{\widetilde S_{j-1}\ge 0\}} D_{\mathcal M}(\rho\|\tilde\sigma_{j-1}) \nonumber\\
% &+ \mathbf 1_{\{\widetilde S_{j-1}<0\}} D\big(P_{\rho,m_1^\ast(\tilde\sigma_{j-1})}\|P_{\tilde\sigma_{j-1},m_1^\ast(\tilde\sigma_{j-1})}\big)
% \bigg)\bigg] \label{eq:expectation_drift_lower_bound} \\ 
&\ge D_{\mathcal M}(\rho\| \mathcal D)
\mathbb E_{n,\rho}\biggl[\sum_{j=1}^{\scriptscriptstyle T_n^{(r)}} \mathbf 1_{\{\widetilde S_{j-1}\ge 0\}} \biggr] \label{eq:expectation_drift_lower_bound} 
\\[-1mm] 
%\end{align} \begin{align}
&\geq D_{\mathcal M}(\rho\|\mathcal D)
\mathbb E_{n,\rho}[T_n^{(r)}] - O(1),
\label{eq:expectation_final_bound}
\end{align}
where in \eqref{eq:expectation_drift_lower_bound} we used \Cref{lemma:lemma19} \ref{lemma:19;part1}, combined with dropping the non-negative term corresponding to event $\{\widetilde S_{j-1}<0\}$ and that \(D_{\mathcal M}(\rho\|\tilde\sigma_{j-1}) \geq D_{\mathcal M}(\rho\|\mathcal D)\) since $\tilde\sigma_{j-1}\in \mathcal D$. Then, \eqref{eq:expectation_final_bound} follows by using $\mathbf 1_{\{\widetilde S_{j-1}\ge0\}}=1-\mathbf 1_{\{\widetilde S_{j-1}<0\}}$, together with,
% \begin{equation}\label{eq:stopping time_decomposition}
%    T_n^{(m)} = \sum_{j=1}^{T_n^{(m)}} 1 = \sum_{j=1}^{T_n^{(m)}} \mathbf{1}_{\{\widetilde S_{j-1} \ge 0\}} + \sum_{j=1}^{T_n^{(m)}} \mathbf{1}_{\{\widetilde S_{j-1}<0\}} 
% \end{equation} 
% and 
% \[
% \textcolor{red}{T_n^{(m)} = \sum_{j=1}^{T_n^{(m)}} 1 = \sum_{j=1}^{T_n^{(m)}} \mathbf{1}_{\{\widetilde S_{j-1} \ge 0\}} + \sum_{j=1}^{T_n^{(m)}} \mathbf{1}_{\{\widetilde S_{j-1} < 0\}}.}
% \]
% We then bound the expectation of the summation over the event $\{\widetilde S_{j-1}<0\}$ as follows,
\vspace{-0.25cm}
\begin{align*}
\mathbb{E}_{n,\rho}\bigl[\sum_{j=1}^{T_n^{(r)}} \mathbf{1}_{\{\widetilde S_{j-1}<0\}}\bigr]
&\hspace{-0.1cm}\le \sum_{j=1}^{\infty}\mathbb{P}_{n,\rho}(\widetilde S_{j-1}<0)\hspace{-0.1cm}\leq\sum_{j=1}^{\infty}c^{j-1}\hspace{-0.1cm}<\infty,
% &\leq \sum_{j=1}^{\infty}c^{j-1}<\infty.
\end{align*}
where the first inequality bounds the stopped sum by the infinite sum via Tonelli's theorem, and the second inequality follows from \Cref{lemma:lemma19} \ref{lemma:19;part2}, which yields a geometric series that converges since $c\in (0,1)$. Next, since there exists a constant $C>0$ such that $|\widetilde Z_k|\le C$ almost surely for all $k$ \cite[Lemma 15]{li2022optimal}, and before stopping $\widetilde S_{T_n^{(r)}-1}\leq B_n$, then $\widetilde S_{T_n^{(r)}}= \widetilde S_{T_n^{(r)}-1}+\widetilde Z_{T_n^{(r)}}
\le B_n+C $ almost surely. 
% \begin{align*}
% \widetilde S_{T_n^{(m)}}
% &\le B_n+C \qquad\text{a.s.}
% \end{align*}
% Indeed, if $T_n^{(m)}<T_n$, then $T_n^{(m)}=m$ and the test has not yet stopped, so $\widetilde S_{T_n^{(m)}}<B_n$; while if $T_n^{(m)}=T_n$, then $\widetilde S_{T_n-1}<B_n$ and $\widetilde S_{T_n}=\widetilde S_{T_n-1}+\widetilde Z_{T_n}\le B_n+C.$ 
Therefore, taking expectation and combining with \eqref{eq:expectation_final_bound} gives
\vspace{-0.1cm}
\[
\mathbb E_{n,\rho}[T_n^{(r)}]\leq \frac{B_n+O(1)}{D_{\mathcal M}(\rho\|\mathcal D)}.
\]
Finally, since $T_n^{(r)}\rightarrow T_n$ almost surely as $r\to\infty$, applying the monotone convergence theorem and taking $\limsup$ as $n\to \infty$ gives us that this test satisfies the
null side of \eqref{condition_3}. \hspace{1em plus 1fill}\IEEEQEDhere
% \begin{align*}
% \mathbb E_{n,\rho}[T_n]
% &=\lim_{m\to\infty}\mathbb E_{n,\rho}[T_n^{(m)}] \leq \frac{B_n+O(1)}{D_{\mathcal M}(\rho\|\mathcal D)} 
% \end{align*}
% Rearranging and taking $\limsup$ as $n\to \infty$ gives us that this test satisfies the null side of \eqref{condition_3}.
\vspace{-0.2cm}
\subsection{Proof of Alternative Expectation Constraints}\label{sec:alt_expectation}
Fix $\sigma \in \mathcal{D}$ and let $T_n^{(r)}:=T_n\wedge r$. By   \Cref{lemma:martingales}, the process $\{-Y_k^{(\sigma)}\}_{k \geq 0}$ is a martingale under $\mathbb{P}_{n,\sigma}$ and $\mathbb E_{n,\sigma}[Y_{T_n^{(r)}}^{(\sigma)}] = 0$. Then, we have
\vspace{-0.5cm}
\begin{align}
& -\mathbb E_{n,\sigma}[\widetilde S_{T_n^{(r)}}] 
=
\mathbb E_{n,\sigma}\bigl[\sum_{j=1}^{T_n^{(r)}} \mathbb E_{n,\sigma}[-\widetilde Z_j \mid \mathcal F_{j-1}]\bigr] \nonumber \\[-3.5mm]
% &= \mathbb E_{n,\sigma}\Bigl[\sum_{j=1}^{T_n^{(r)}} \sum_{x\in\mathcal X} P_{\sigma,M_j}(x)\log\frac{P_{\tilde\sigma_{j-1},M_j}(x)}{P_{\rho,M_j}(x)}\Bigr] \nonumber \\[-1mm]
&= \mathbb E_{n,\sigma}\Big[\sum_{j=1}^{T_n^{(r)}}\biggl( D(P_{\tilde\sigma_{j-1},M_j}\|P_{\rho,M_j})+
\nonumber \\[-3mm]
%\\& \mathbb{E}_{X \sim P_{\sigma,M_j}}[A_j(X)]- \mathbb{E}_{X \sim P_{\tilde{\sigma}_j,M_j}}[A_j(X)]\Bigr], 
&\sum_{x\in\mathcal X} \big(P_{\sigma,M_j}(x) - P_{\tilde\sigma_{j-1},M_j}(x)\big)\log\frac{P_{\tilde\sigma_{j-1},M_j}(x)}{P_{\rho,M_j}(x)}\biggr)\biggr],
\label{eq:inter_1}
\end{align}
%where $A_j(X)=\log\frac{P_{\tilde\sigma_j,M_j}(X)}{P_{\rho,M_j}(X)}$.
where the last equality follows by adding and subtracting the term $P_{\tilde\sigma_{j-1},M_j}(x)\log\frac{P_{\tilde\sigma_{j-1},M_j}(x)}{P_{\rho,M_j}(x)}$ and the definition of the classical Kullback--Leibler divergence. We now lower bound the first term and the difference term in  \eqref{eq:inter_1} separately. 

\looseness=-1
For the first term, since each summand is non-negative, inserting the indicator of $\{\widetilde S_{j-1}<0\}$ lower-bounds this sum. On this event, $D(P_{\tilde\sigma_{j-1},M_j}\|P_{\rho,M_j})\geq D_{\mathcal M}(\mathcal D\|\rho)$ because $M_j=m_1^\ast(\tilde\sigma_{j-1})$. Therefore, inserting this indicator into the sum, rewriting it as $\mathbf 1_{\{\widetilde S_{j-1}<0\}}=1-\mathbf 1_{\{\widetilde S_{j-1}\geq0\}}$, and applying \Cref{lemma:lemma19} \ref{lemma19;part3} to the resulting sum over $\mathbf 1_{\{\widetilde S_{j-1}\geq0\}}$ yields
\vspace{-0.3cm}
%\begin{small}
\begin{align}
 &\mathbb E_{n,\sigma}[\sum_{j=1}^{T_n^{(r)}} D(P_{\tilde\sigma_{j-1},M_j}\|P_{\rho,M_j})] \nonumber \\[-1mm] &\geq D_{\mathcal{M}}(\mathcal{D} \Vert \rho)    (\mathbb E_{n,\sigma}[T_n^{(r)}]-C_+). \label{eq:inter_2}
\end{align}
%\end{small}

For the difference term, taking the absolute value yields\enlargethispage{-.06cm}
\vspace{-0.2cm}
\begin{align}
&\Bigl|\sum_{x\in\mathcal X} \big(P_{\sigma,M_j}(x) - P_{\tilde\sigma_{j-1},M_j}(x)\big)\log\frac{P_{\tilde\sigma_{j-1},M_j}(x)}{P_{\rho,M_j}(x)} \Bigr| \nonumber 
\\
%\end{align} %\begin{align}
& \stackrel{(a)}{\leq} \|P_{\sigma,M_j} - P_{\tilde\sigma_{j-1},M_j}\|_1  \bigg\| \log\frac{P_{\tilde\sigma_{j-1},M_j}}{P_{\rho,M_j}}\bigg\|_\infty \nonumber\\[-0.1cm]
&\stackrel{(b)}{\leq} C\|P_{\sigma,M_j} - P_{\tilde\sigma_{j-1},M_j}\|_1 \stackrel{(c)}{\leq} C \sqrt{2 D(P_{\sigma,M_j} \Vert P_{\tilde\sigma_{j-1},M_j})}, 
\nonumber
%\label{eq:p_and_cs}
\end{align}
\looseness=-1
    where $(a)$  follows from H\"older's inequality, $(b)$ follows since $\bigl|\log\tfrac{P_{\tilde\sigma_{j-1},M}(x)}{P_{\rho,M}(x)}\bigr|\leq C$ almost surely for all $x\in \mathcal X$ for some $C>0$ \cite[Lemma 15]{li2022optimal}, and $(c)$ follows from Pinsker's inequality. Using the above in \eqref{eq:inter_1}, then applying Cauchy--Schwarz twice, first to pull the square root outside the stopped sum and then under the expectation to separate the resulting factors, yields
%Next, substituting \eqref{eq:p_and_cs} into the stopped sum and applying Pinsker’s inequality term-wise followed by Cauchy--Schwarz over the stopped sum and then Cauchy--Schwarz under expectation gives, \vspace{-0.3cm}
$
\mathbb E_{n,\sigma}\bigl[\sum_{j=1}^{T_n^{(r)}}|\mbox {diff. term}|\bigr]\leq C\sqrt{2\mathbb E_{n,\sigma}[T_n^{(r)}]\mathbb E_{n,\sigma}\biggl[\sum_{j=1}^{T_n^{(r)}}D(P_{\sigma,M_j}\|P_{\tilde\sigma_{j-1},M_j})\biggr]}.$ Since, $\mathbb E_{n,\sigma}[\sum_{j=1}^{T_n^{(r)}}D(P_{\sigma,M_j}\|P_{\tilde\sigma_{j-1},M_j})]=\mathbb E_{n,\sigma}[\widetilde S_{T_n^{(r)}} - S_{T_n^{(r)}}(\sigma)]$, then applying \Cref{lemma:mixture_martingale_comparison_with_pointwise} and using Jensen's inequality to pull the $\log$ out of the expectation gives 
\vspace{-0.15cm}
\begin{align*}
    \mathbb E_{n,\sigma}[\sum_{j=1}^{T_n^{(r)}}D(P_{\sigma,M_j}\|P_{\tilde\sigma_{j-1},M_j})]
%&\leq \mathbb E_{n,\sigma}\left[\widetilde C + (d^2-1)\log(T_n^{(r)})\right]\\
&\leq \hspace{-0.1cm}\widetilde{C} + \hspace{-0.1cm}(d^2-1)\log\Ebb_{n,\sigma}[T_n^{(r)}].
\end{align*}
%where we used Jensen's inequality. 
Since $|a|\leq b$ implies $-b\leq a$, the above and \eqref{eq:inter_2} yield
\vspace{-0.2cm}
\begin{align}
&\hspace{-0.22cm}-\mathbb E_{n,\sigma}[\widetilde S_{T_n^{(r)}}]\geq D_{\mathcal M}(\mathcal D\|\rho)\left(\mathbb E_{n,\sigma}[T_n^{(r)}]-C_+\right) \label{eq:final_type_ii}\\
&-C\sqrt{2\mathbb E_{n,\sigma}[T_n^{(r)}]\left(\widetilde C+ (d^2-1)\log\mathbb E_{n,\sigma}[T_n^{(r)}]\right)}. \nonumber
% \label{eq:alt_lower_bound_final}
\end{align} 
\looseness=-1
Note, by the same argument as before, $-\widetilde S_{T_n^{(r)}}\le A_n+C$ almost surely. Since the second term of \eqref{eq:final_type_ii} is sub-linear in $\mathbb E_{n,\sigma}[T_n^{(r)}]$, there is a finite constant $\Gamma_{\tau/2}$, with $\tau$ as in \Cref{prop:achievability}, such that once $\mathbb E_{n,\sigma}[T_n^{(r)}]\ge \Gamma_{\tau/2}$, this sub-linear term is at most $\tfrac{\tau}{2}\mathbb E_{n,\sigma}[T_n^{(r)}]$. In this case, rearranging \eqref{eq:final_type_ii} gives $(D_{\mathcal M}(\mathcal D\|\rho)-\tfrac{\tau}{2})\mathbb E_{n,\sigma}[T_n^{(r)}]\le A_n+O(1)$; if instead $\mathbb E_{n,\sigma}[T_n^{(r)}]< \Gamma_{\tau/2}$, then this holds trivially as $\mathbb E_{n,\sigma}[T_n^{(r)}]$ is $O(1)$. Thus, using the monotone convergence theorem yields
\vspace{0cm}
\begin{align*}
\mathbb E_{n,\sigma}[T_n] \leq \frac{A_n+O(1)}{D_{\mathcal M}(\mathcal D\|\rho)-\frac{\tau}{2}}=n\left(\frac{D_{\mathcal M}(\mathcal D\|\rho)-\tau}{D_{\mathcal M}(\mathcal D\|\rho)-\frac{\tau}{2}}\right) +O(1).
\end{align*}
\looseness=-2
Finally, since $\tau<D_{\mathcal M}(\mathcal D\|\rho)$, we have $\xi:=\frac{D_{\mathcal M}(\mathcal D\|\rho)-\tau}{D_{\mathcal M}(\mathcal D\|\rho)-\tau/2}<1$. Moreover, the bound above is uniform over $\sigma\in\mathcal D$, and therefore $\sup_{\sigma\in\mathcal D}\mathbb E_{n,\sigma}[T_n]-n\leq(\xi-1)n+O(1)$. Taking $\limsup_{n\to\infty}$, we have $(\xi-1)n\to-\infty$, which shows that this test satisfies the alternative side of \eqref{condition_3}.\hspace{1em plus 1fill}\IEEEQEDhere
\section{Conclusion}\enlargethispage{-.04cm}
We characterized the optimal error exponent region for adaptive composite SQHT under an expected sample size constraint, with the exponents characterized by the minimal measured relative entropies between the null state and the alternative set. Whether the stronger pair $\bigl(\inf_{\sigma\in\mathcal{D}} D(\sigma\|\rho), \inf_{\sigma\in\mathcal{D}} D(\rho\|\sigma)\bigr)$ is achievable via block measurement schemes remains open. Natural extensions include composite--composite and multi-hypothesis settings, as well as alternative sample size constraints.
%This work characterized the optimal error-exponent region for adaptive SQHT in the simple-null versus composite-alternative setting under an expected sample size constraint. The optimal Type-I and worst-case Type-II error exponents are given by the minimal measured relative entropies between the null state and the alternative set. An interesting open question is whether the stronger exponent pair $(\inf_{\sigma \in \mathcal{D}}D(\sigma \Vert \rho), \inf_{\sigma \in \mathcal{D}}D(\rho \Vert \sigma))$ is achievable, potentially via block measurement schemes. Future work may also extend the analysis to composite–composite and multi-hypothesis settings, as well as alternative sample size constraints.
\bibliographystyle{IEEEtran}
\bibliography{references}
\newpage
\appendix
\crefalias{subsection}{appendix} %to label appendix sections as "Appendix A" not "Section A"
Here we present detailed proofs of the technical lemmas from \Cref{sec:prelim}. Specifically, \Cref{appendix:prem} collects the technical ingredients used throughout the Appendix, and \Cref{appendix:proof_of_lemma_8,appendix:proof_of_lemma_9,appendix:proof_of_lemma_10} prove \Cref{lemma:martingales,lemma:mixture_martingale_comparison_with_pointwise,lemma:lemma19}, respectively.
\subsection{Preliminaries}\label{appendix:prem}
We begin by fixing the notation used in the Appendix. Let $\mathsf{H}_d:=\{H\in \mathbb C^{d\times d}: H=H^\dagger\}$ denote the real vector space of Hermitian matrices in $\mathbb C^{d \times d}$, which has real dimension $d^2$. In what follows, $\mathsf H_d$ is equipped with the Hilbert--Schmidt inner product $\langle A, B \rangle := \tr[AB]$. Then, the affine space $\mathsf A_d$ can  be written equivalently as $\mathsf A_d=\{\rho \in \mathsf H_d: \tr[\rho]=1\}$. Since $\tr[\cdot]$ is a non-zero real linear functional on $\mathsf H_d$, $\mathsf A_d$ is an affine hyperplane of $\mathsf H_d$, and hence has affine dimension $d^2-1$. Moreover, the associated translation space of $\mathsf A_d$ is the real vector space $\mathsf V_d:=\{H\in\mathsf H_d : \tr[H]=0\}$, which is a linear subspace of $\mathsf H_d$ with real dimension $d^2-1$. Consequently, $\mathsf A_d =  \tfrac{\mathbb I}{d} + \mathsf V_d$, where $\tfrac{\mathbb I}{d}$ is the maximally mixed state. We equip $\mathsf V_d$ with the Lebesgue measure induced by the Hilbert–Schmidt inner product, denoted by $\Vol_{\mathsf V_d}$.  We also recall that the set of density matrices, $\mathsf S_d:=\{\rho\in \mathsf A_d:\rho\succeq0\}$, is a convex subset of $\mathsf A_d$ with affine dimension $d^2-1$. \Cref{fig:spaces} provides a geometric illustration of the spaces
$\mathsf H_d$, $\mathsf A_d$, $\mathsf V_d$, and $\mathsf S_d$ in the eigenvalue
plane for $d=2$, to aid intuition.
\begin{figure}[h!]
    \centering
\begin{tikzpicture}[decoration={brace,amplitude=5pt}] 
        \draw[thick,-stealth](-2,0)--(3,0)node[above]{$\lambda_1$};
        \draw[thick,-stealth](0,-2)--(0,3)node[right]{$\lambda_2$};
        \node[below left]{$0$};
        \draw[thick,red,dashed](-1,3)--(3,-1)node[above]{$\mathsf A_d$};
        \draw[decorate, darkgreen](.05,2.05)--(2.05,.05)node[midway,above right]{$\mathsf S_d$};
        \draw[thick](2,.05)--(2,-.05)node[below]{$1$};
        \draw[thick](.05,2)--(-.05,2)node[left]{$1$};
        \draw[very thick,darkgreen](0,2)--(2,0);
        \draw[thick,blue,dashed](-2,2)--(2,-2)node[above]{$\mathsf V_d$};
        \node at(2.5,2.5){$\mathsf H_d$};
    \end{tikzpicture}
    \caption{Illustration in the eigenvalue plane of $\mathsf H_d$ for $d=2$. The line $\mathsf A_d$ represents the trace-one constraint $\lambda_1+\lambda_2=1$, while $\mathsf S_d$ is the line segment where $\lambda_1,\lambda_2\ge0$ and $\lambda_1+\lambda_2=1$. The line $\mathsf V_d$ represents the associated translation space, corresponding to the trace-zero constraint $\lambda_1+\lambda_2=0$.}
    \label{fig:spaces}
\end{figure}

Note that since all spaces considered here are finite-dimensional, all norms induce the same topology; when a specific norm is needed, we use the Euclidean norm $\|\cdot\|_2$ on $\mathbb R^{d^2-1}$ and the trace-norm $\|\cdot\|_1$ on all previously defined matrix spaces.
We further write $\interior_{\mathbb R^{d^2-1}}(\cdot)$ and $\interior_{\mathsf A_d}(\cdot)$ for the interior relative to $\mathbb R^{d^2-1}$ and $\mathsf A_d$, respectively. Explicitly, for $S\subset\mathbb R^{d^2-1}$,\vspace{-0.4cm}
\begin{multline}
\interior_{\mathbb R^{d^2-1}}(S):=\Bigl\{x\in S:  \text{ there exists an }\varepsilon>0 \text{ such } \\ \text{that } \{x'\in \mathbb R^{d^2-1}: \|x'-x\|_2<\varepsilon\}\subseteq S \Bigr\},
\end{multline}
\newpage 
\hspace{-0.32cm}and similarly for $\mathcal D \subset \mathsf A_d$, \vspace{-0.2cm}
\begin{multline}
\interior_{\mathsf A_d}(\mathcal D):=\Bigl\{\sigma\in \mathcal D:  \text{ there exists }\varepsilon>0 \text{ such } \\ \text{that } \{\sigma'\in \mathsf A_d: \|\sigma'-\sigma\|_1<\varepsilon\}\subseteq \mathcal D \Bigr\}.
\end{multline}
Correspondingly, we write $\Vol_{\mathbb  R^{d^2-1}}$ as the Lebesgue measure on $\mathbb R^{d^2-1}$ and $\Vol_{\mathsf A_d}$ as the Lebesgue measure on $\mathsf A_d$, defined via the translation $\Vol_{\mathsf A_d}(A):=\Vol_{\mathsf V_d}(A-\tfrac{\mathbb I}{d})$ for all Borel measurable sets $A\subseteq\mathsf A_d$. Lastly, $\Pi$ denotes the probability measure on $\mathcal D$ obtained by normalizing the restriction of $\Vol_{\mathsf{A}_d}$ to $\mathcal D$, i.e., for Borel measurable $A \subseteq \mathcal D$,
\begin{equation}
  \Pi(A):=\frac{\Vol_{\mathsf{A}_d}(A)}{\Vol_{\mathsf{A}_d}(\mathcal D)}.
\end{equation}
Since $\interior_{\mathsf{A}_d}(\mathcal D)\ne \emptyset$ and  $\mathcal D$ is a Borel measurable subset of $\mathsf A_d$, as $\mathcal D$ is compact, it follows that $\Vol_{\mathsf{A}_d}(\mathcal D)>0$ and hence $\Pi$ is well defined. Furthermore, for $x\in \mathbb R^{d^2-1}$ and $\varepsilon>0$, we define the closed Euclidean $\varepsilon$-ball about $x$ as
\[
\mathcal B_{\Euc}(x,\varepsilon):=\{x'\in \mathbb R^{d^2-1} : \|x'-x\|_2\le \varepsilon\},
\]
and for $\sigma \in \mathcal D$, the closed trace-norm $\varepsilon$-ball in $\mathcal D$ about $\sigma$ is
\[
\mathcal B_1(\sigma,\varepsilon) := \{\sigma' \in \mathcal D : \|\sigma' - \sigma\|_1 \le \varepsilon\}.
\]

We now justify that the maximizing PVMs can be selected measurably in $\sigma$, ensuring that $M_k$ is $\mathcal F_{k-1}$-measurable. Let $\mathcal X=\{1,\ldots,d\}$ and let $\mathcal P_d\subseteq\mathcal M_{\mathcal X}$ denote the set of ordered rank-one PVMs on $\mathbb C^d$, which is a compact metrizable space, being a closed subset of the compact set $\mathcal M_{\mathcal X}$. Define
\[
f_0(\sigma,m):=D(P_{\rho,m}\|P_{\sigma,m}) \text{ and } f_1(\sigma,m):=D(P_{\sigma,m}\|P_{\rho,m}),
\]
so that, by \cite[Theorem~2]{Berta2017}, $D_{\mathcal M}(\rho\|\sigma) =\max_{m\in\mathcal P_d}f_0(\sigma,m)$ and $D_{\mathcal M}(\sigma\|\rho) =\max_{m\in\mathcal P_d}f_1(\sigma,m)$ for every $\sigma\in\mathcal D$. Since $\rho$ and every $\sigma\in\mathcal D$ are full-rank and each $m(x)$ is a rank-one projection, we have $\tr[m(x)\rho]>0$ and $\tr[m(x)\sigma]>0$ for all $x\in\mathcal X$. Hence, for $i=0,1$, $f_i$ is continuous on $\mathcal D\times\mathcal P_d$, and therefore a Carath\'eodory function. Since the constant correspondence $\sigma\mapsto\mathcal P_d$ is weakly measurable with non-empty compact values, the measurable maximum theorem \cite[Theorem~18.19]{aliprantis2006infinite} yields, for each $i\in\{0,1\}$, a Borel-measurable selection $m_i^\ast:\mathcal D\to\mathcal P_d$ of $\sigma\mapsto\arg\max_{m\in\mathcal P_d}f_i(\sigma,m)$. As $\tilde\sigma_{k-1}$ is $\mathcal F_{k-1}$-measurable and $\{\widetilde S_{k-1}\ge0\}\in\mathcal F_{k-1}$, the piecewise-defined POVM in \eqref{eq:mu_k} is therefore $\mathcal F_{k-1}$-measurable.

% Recall that, for each $\sigma\in\mathcal D$, the POVMs used in the
% mixture-SQPRT are chosen as
% \begin{align*}
%  m^{\ast}_0(\sigma)&\in \arg \sup_{\Xcal} \sup_{m \in \mathcal{M}_{\Xcal}}   D\left(P_{\rho, m} \| P_{\sigma, m}\right), \\
%    m^{\ast  }_1(\sigma)&\in \arg \sup_{\Xcal} \sup_{m \in \mathcal{M}_{\Xcal}}   D\left(P_{\sigma, m} \| P_{\rho, m}\right).
% \end{align*} 

% We now justify that these maximizers exist and can be chosen measurably.
% Fix $\sigma\in\mathcal D$. Since $m \mapsto D(P_{\rho,m}\|P_{\sigma,m})$ and $m \mapsto D(P_{\sigma,m}\|P_{\rho,m})$  are continuous and $\mathcal M_{\mathcal X}$ is compact, the extreme value theorem implies that the corresponding maxima are attained. Since the maximizer need not be unique, we fix one measurable choice of maximizer for each $\sigma\in\mathcal D$. Such choices exist by the measurable maximum theorem, since $\mathcal D$ and $\mathcal M_{\mathcal X}$ are compact metric spaces  under their induced finite-dimensional metrics, and the objective functions $(\sigma,m)\mapsto D(P_{\rho,m}\|P_{\sigma,m})$ and $(\sigma,m)\mapsto D(P_{\sigma,m}\|P_{\rho,m})$ are continuous on $\mathcal D\times\mathcal M_{\mathcal X}$.

The remainder of this section presents some auxiliary results, which will be used in the subsequent sections. First, we show that the mixture-SQLLR admits the incremental representation as in \eqref{eq:incremental}.
\begin{lemma}\label{lemma:bary}
The mixture-SQLLR admits the following incremental representation:
   \begin{align*}
    \widetilde{S}_k=\sum_{j=1}^k \widetilde{Z}_j, \quad \widetilde Z_j := \log \frac{\tr\big(M_j(X_j)\rho\big)}{\tr\big(M_j(X_j)\tilde\sigma_{j-1}\big)}.
\end{align*} 
\end{lemma}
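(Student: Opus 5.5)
The identity $\widetilde S_k=\sum_{j=1}^k\widetilde Z_j$ will follow by telescoping once we have the one-step ratio
\[
\frac{\widetilde L_j}{\widetilde L_{j-1}}=\frac{\tr[M_j(X_j)\tilde\sigma_{j-1}]}{\tr[M_j(X_j)\rho]}
\quad\text{for every } j\ge1 .
\]
Indeed, $\widetilde L_0=\int_{\mathcal D}L_0(\sigma)\Pi(\dd\sigma)=\int_{\mathcal D}\Pi(\dd\sigma)=1$ because $S_0(\sigma)=0$ and $\Pi$ is a probability measure. Hence
\[
\widetilde S_k=-\log\widetilde L_k=-\sum_{j=1}^k\log\frac{\widetilde L_j}{\widetilde L_{j-1}}
=\sum_{j=1}^k\log\frac{\tr[M_j(X_j)\rho]}{\tr[M_j(X_j)\tilde\sigma_{j-1}]}=\sum_{j=1}^k\widetilde Z_j .
\]
All logarithms are finite: $\rho$ and every $\sigma\in\mathcal D$ are full rank, the PVMs are rank one, so $\tr[M_j(x)\sigma]>0$ for every $\sigma$ and $x$, and $\tilde\sigma_{j-1}\in\mathcal D$ is itself full rank.

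To get the one-step ratio, I would first write the pointwise recursion
\[
L_j(\sigma)=L_{j-1}(\sigma)\,\frac{\tr[M_j(X_j)\sigma]}{\tr[M_j(X_j)\rho]},
\]
which is immediate from the definition of $S_j(\sigma)$ in \eqref{eq:pointwise_SQPRT}. I would then integrate against $\Pi$. The factor $\tr[M_j(X_j)\rho]^{-1}$ does not depend on $\sigma$ and comes out of the integral, giving
\[
\widetilde L_j=\frac{1}{\tr[M_j(X_j)\rho]}\int_{\mathcal D}L_{j-1}(\sigma)\,\tr[M_j(X_j)\sigma]\,\Pi(\dd\sigma).
\]
Next I would divide by $\widetilde L_{j-1}>0$ and recognise the posterior $\widetilde\Pi_{j-1}(\dd\sigma)=L_{j-1}(\sigma)\Pi(\dd\sigma)/\widetilde L_{j-1}$. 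This leaves
\[
\frac{\widetilde L_j}{\widetilde L_{j-1}}=\frac{1}{\tr[M_j(X_j)\rho]}\int_{\mathcal D}\tr[M_j(X_j)\sigma]\,\widetilde\Pi_{j-1}(\dd\sigma).
\]

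The only point needing care is exchanging the trace with the integral, i.e.\ showing
\[
\int_{\mathcal D}\tr[M_j(X_j)\sigma]\,\widetilde\Pi_{j-1}(\dd\sigma)=\tr\Bigl[M_j(X_j)\int_{\mathcal D}\sigma\,\widetilde\Pi_{j-1}(\dd\sigma)\Bigr]=\tr[M_j(X_j)\tilde\sigma_{j-1}],
\]
which is the step I expect to be the (mild) obstacle. This holds because $\sigma\mapsto\tr[A\sigma]$ is a linear functional on the finite-dimensional space $\mathsf H_d$. The barycenter \eqref{eq:barycenter} is a Bochner integral taken entrywise, and it is well defined since $\mathcal D$ is bounded and $\widetilde\Pi_{j-1}$ is a probability measure. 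A linear functional commutes with such an integral.

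For this to be legitimate I would also check that $\widetilde L_{j-1}$ is finite and positive. The increments of $S_k(\sigma)$ are bounded uniformly in $\sigma$, because eigenvalues on $\mathcal D$ are bounded away from zero by compactness and full rank. Hence $L_{j-1}(\sigma)$ is bounded above and below by positive constants for fixed $j$, and $\widetilde L_{j-1}\in(0,\infty)$. Combining the exchange with the previous display gives the one-step ratio, and the telescoping argument completes the proof.
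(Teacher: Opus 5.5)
Your proposal is correct and follows the same route as the paper. Both arguments use the pointwise recursion for $L_j(\sigma)$, integrate against $\Pi$, rewrite $L_{j-1}(\sigma)\Pi(\dd\sigma)=\widetilde L_{j-1}\widetilde\Pi_{j-1}(\dd\sigma)$, apply linearity of the trace to obtain $\tr[M_j(X_j)\tilde\sigma_{j-1}]$, and telescope from $\widetilde L_0=1$. Your extra checks, that $\widetilde L_{j-1}\in(0,\infty)$ and that the barycenter integral commutes with the trace, are details the paper leaves implicit.
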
 
\begin{IEEEproof}
Indeed, by the definition of $\widetilde L_k$,
\begin{align}
\widetilde L_k 
&=  \int_{\mathcal D} e^{-S_{k}(\sigma)} \Pi(\dd\sigma) \nonumber \\
% &=  \int_{\mathcal D} e^{-S_{k-1}(\sigma)-Z_{k}(\sigma)} \Pi(\dd\sigma)\\
% &= \int_{\mathcal D} e^{-S_{k-1}(\sigma)-\log \frac{\tr[M_k(X_k)\rho]}{\tr[M_k(X_k)\sigma]}} \Pi(\dd\sigma) \nonumber \\
% &= \int_{\mathcal D} e^{-S_{k-1}(\sigma)+\log \frac{\tr[M_k(X_k)\sigma]}{\tr[M_k(X_k)\rho]}} \Pi(\dd\sigma) \\
&= \int_{\mathcal D} e^{-S_{k-1}(\sigma)}\frac{\tr[M_k(X_k)\sigma]}{\tr[M_k(X_k)\rho]}\Pi(\dd\sigma) \nonumber \\
% &= \frac{1}{\tr[M_k(X_k)\rho]}\int_{\mathcal D}\tr[M_k(X_k)\sigma] e^{-S_{k-1}(\sigma)} \Pi(\dd\sigma).
&= \frac{1}{\tr[M_k(X_k)\rho]}\int_{\mathcal D}\tr[M_k(X_k)\sigma]\widetilde L_{k-1}\widetilde\Pi_{k-1}(\dd\sigma) \label{eq:def_of_pi} \\
&=\frac{\widetilde L_{k-1}\tr[M_k(X_k)\tilde\sigma_{k-1}]}{\tr[M_k(X_k)\rho]} \label{eq:linear_trace}
\end{align}
where in \eqref{eq:def_of_pi} we used that 
by the definition of $\widetilde\Pi_{k-1}$,
\[
e^{-S_{k-1}(\sigma)}\Pi(\dd\sigma)
=
\widetilde L_{k-1}\widetilde\Pi_{k-1}(\dd\sigma),
\] 
and in \eqref{eq:linear_trace} we used the linearity of the trace.
% we have 
% \[
% \widetilde\Pi_{k-1}(\dd\sigma)
% = \frac{L_{k-1}(\sigma)\Pi(\dd\sigma)}{\widetilde L_{k-1}} =
% \frac{e^{-S_{k-1}(\sigma)}\Pi(\dd\sigma)}
% {\widetilde L_{k-1}},
% \]
% and hence
% \[
% e^{-S_{k-1}(\sigma)}\Pi(\dd\sigma)
% =
% \widetilde L_{k-1}\widetilde\Pi_{k-1}(\dd\sigma).
% \]
% Therefore,
% \begin{align}
% \int_{\mathcal D}\tr[M_k(X_k)\sigma]e^{-S_{k-1}(\sigma)}\Pi(\dd\sigma) 
% &=
% \int_{\mathcal D}\tr[M_k(X_k)\sigma]\widetilde L_{k-1}\widetilde\Pi_{k-1}(\dd\sigma) \nonumber \\
% &=
% \widetilde L_{k-1}\int_{\mathcal D}\tr[M_k(X_k)\sigma]\widetilde\Pi_{k-1}(\dd\sigma) \nonumber \\
% &=
% \widetilde L_{k-1}\tr\left[M_k(X_k)\int_{\mathcal D}\sigma\widetilde\Pi_{k-1}(\dd\sigma)\right] \nonumber  \\
% &=
% \widetilde L_{k-1}\tr[M_k(X_k)\tilde\sigma_{k-1}], \nonumber 
% \end{align}
% Thus
% \[
% \widetilde L_{k}= \frac{\widetilde L_{k-1}\tr[M_k(X_k)\tilde\sigma_{k-1}]}{\tr[M_k(X_k)\rho]}.
% \]
Iterating this recursion after $k$ steps obtains 
\[
\widetilde L_k= \widetilde L_{0}\prod_{j=1}^k\frac{\tr[M_j(X_j)\tilde\sigma_{j-1}]}{\tr[M_j(X_j)\rho]}.
\]
Since $\widetilde L_{0}=e^{-0}=1$, taking logarithms and multiplying by $-1$ yields the desired result. \end{IEEEproof}
% \hspace{1em plus 1fill}\IEEEQEDhere
% \[
% \widetilde S_k
% = \sum_{j=1}^k \log \frac{\tr[M_j(X_j)\rho]}{\tr[M_j(X_j)\tilde\sigma_{j-1}]}.
% \]
% \IEEEQEDhere
% \textcolor{blue}{We first show that $e^{S_k(\sigma)}$ is the Radon--Nikodym derivative.
% \begin{IEEEproof}[Proof that $e^{S_k(\sigma)}$ is the Radon--Nikodym derivative]
% Fix $\sigma\in\mathcal D$. By the chain rule for conditional probabilities and the fact that the measurement selection rule $\{\mu_j\}$ is the same under both measures, we conclude that
% \begin{align*}
% \left.\frac{\dd\Pbb_{n,\rho}}{\dd\Pbb_{n,\sigma}}\right|_{\mathcal F_k}
% &=\prod_{j=1}^k \frac{\mu_j(\cdot\vert x_1^{j-1},m_1^{j-1})\tr[M_j(X_j)\rho]}{\mu_j(\cdot\vert x_1^{j-1},m_1^{j-1})\tr[M_j(X_j)\sigma]}\\
% &= \prod_{j=1}^k \frac{\tr[M_j(X_j)\rho]}{\tr[M_j(X_j)\sigma]}
% = e^{S_k(\sigma)}.
% \end{align*}
% Thus $e^{S_k(\sigma)}$ is the Radon--Nikodym derivative of $\Pbb_{n,\rho}$ with respect to $\Pbb_{n,\sigma}$.
% \end{IEEEproof}}

% In the remainder of this section, we establish several auxiliary results used in the proof of \Cref{lemma:mixture_martingale_comparison_with_pointwise}.

% We now establish 
% % a relationship  between the norms on the finite-dimensional normed spaces $(\mathsf V_d,\|\cdot\|_1)$ and $(\mathbb{R}^{d^2-1},\|\cdot\|_2)$ via 
% the following standard result. This will be used in the proof of \Cref{lemma:balls}, and in turn will help prove \Cref{lemma:mixture_martingale_comparison_with_pointwise}.
We now show that, for any fixed $\sigma\in\mathcal D$, $e^{S_k(\sigma)}$ is the Radon--Nikodym derivative of $\Pbb_{n,\rho}$ with respect to $\Pbb_{n,\sigma}$ restricted to $\mathcal{F}_k$, i.e.
\begin{equation}
   \left.\frac{\dd\Pbb_{n,\rho}}{\dd\Pbb_{n,\sigma}}\right|_{\mathcal F_k} = e^{S_k(\sigma)}.
\end{equation}
\begin{IEEEproof}
Indeed, expanding the definitions of $\Pbb_{n,\rho}$ and $\Pbb_{n,\sigma}$ we obtain
\begin{align*}
    \left.\frac{\dd\Pbb_{n,\rho}}{\dd\Pbb_{n,\sigma}}\right|_{\mathcal F_k}&=\frac{\prod_{j=1}^k\mu_j(m_j|x_1^{j-1},m_1^{j-1})\tr[m_j(x_j)\rho]}{\prod_{j=1}^k\mu_j(m_j|x_1^{j-1},m_1^{j-1})\tr[m_j(x_j)\sigma]}\\
    &=\prod_{j=1}^k\frac{\tr[m_j(x_j)\rho]}{\tr[m_j(x_j)\sigma]}
    % &=e^{\sum_{j=1}^k\log\frac{\tr[m_j(x_j)\rho]}{\tr[m_j(x_j)\sigma]}}
    = e^{S_k(\sigma)}&&\IEEEQEDhere
\end{align*} 
\end{IEEEproof}

To conclude this section, we state a well-known finite-dimensional norm-equivalence result, which will be used to prove  \Cref{lemma:mixture_martingale_comparison_with_pointwise}. The proof is included for completeness.
\begin{lemma}\label{lemma:isomorphism}
Let $J:\mathsf V_d \to \mathbb{R}^{d^2-1}$ be a linear isomorphism between the finite-dimensional normed spaces $(\mathsf V_d,\|\cdot\|_1)$ and $(\mathbb{R}^{d^2-1},\|\cdot\|_2)$. Then, there exists a constant $a>0$ such that for all $H\in\mathsf V_d$,
\[
\|JH\|_2 \ge a\|H\|_1.
\]
\end{lemma}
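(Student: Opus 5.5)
We prove the bound by compactness of the trace-norm unit sphere in $\mathsf V_d$ together with continuity of the map $H\mapsto \|JH\|_2$. This is the standard argument for equivalence of norms in finite dimensions.

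First, $J$ is a linear map between finite-dimensional real vector spaces, so it is continuous with respect to any norms on them. Explicitly, fix a basis $\{E_i\}_{i=1}^{d^2-1}$ of $\mathsf V_d$ and write $H=\sum_i h_i E_i$. Then
\[
\|JH\|_2\le \sum_i |h_i|\,\|JE_i\|_2 .
\]
The coefficient functionals $H\mapsto h_i$ are linear maps on a finite-dimensional space, hence continuous in $\|\cdot\|_1$. It follows that $\phi(H):=\|JH\|_2$ is continuous on $(\mathsf V_d,\|\cdot\|_1)$.

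Second, consider the unit sphere $U:=\{H\in\mathsf V_d:\|H\|_1=1\}$. This set is closed and bounded in the finite-dimensional space $\mathsf V_d$. By Heine--Borel, applied after identifying $\mathsf V_d$ with $\mathbb R^{d^2-1}$ through coordinates (where all norms define the same topology, as noted earlier in the paper), $U$ is compact. It is also nonempty since $d\ge 2$. By the extreme value theorem, $\phi$ attains its minimum on $U$ at some $H_0$; set $a:=\phi(H_0)$. Because $J$ is injective and $H_0\neq 0$, we have $JH_0\neq0$, so $a>0$.

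Finally, for arbitrary nonzero $H\in\mathsf V_d$, apply the bound to $H/\|H\|_1\in U$ and use linearity of $J$ and homogeneity of the norms:
\[
\|JH\|_2=\|H\|_1\,\phi\bigl(H/\|H\|_1\bigr)\ge a\|H\|_1 .
\]
The case $H=0$ is trivial.

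The only point needing care is the compactness of $U$. It requires the continuity of the coordinate functionals, which again rests on the equivalence of norms in finite dimensions. I would handle this by citing that standard fact, already invoked in the paper's preliminaries, rather than reproving it.
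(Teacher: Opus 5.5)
Your proof is correct and follows essentially the same route as the paper's: continuity of $H\mapsto\|JH\|_2$, compactness of the trace-norm unit sphere, a positive minimum by injectivity of $J$, and rescaling by homogeneity. You are slightly more explicit than the paper about why the map is continuous, why the sphere is compact (both via the standard finite-dimensional norm-equivalence fact), and about the sphere being nonempty for $d\ge 2$.
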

\begin{IEEEproof}
The statement holds trivially for $H=0$, so it suffices to consider $H\neq 0$. Since $J$ is a bounded linear operator, it is continuous and thus the map $H \mapsto \|JH\|_2$ is continuous as it is the composition of $H \mapsto JH$ and the Euclidean norm. Consider the unit sphere $S:=\{H\in\mathsf V_d:\|H\|_1=1\}$, which is compact in finite dimensions. Because the function $H \mapsto \|JH\|_2$ is continuous on the compact sphere $S$, it attains its minimum on $S$, say $a:=\min_{\|H\|_1=1}\|JH\|_2$. As $J$ is injective and $H\ne0$, it follows that $\|JH\|_2\neq 0$ for all $H\in S$, and hence $a>0$.  Define $U:=H/\|H\|_1$. Then $\|U\|_1=1$, so $U\in S$ and hence
\[
\|JH\|_2=\|JU\|_2\|H\|_1 \ge a\|H\|_1.
\] 
This completes the proof.
\end{IEEEproof}
\subsection{Proof of \Cref{lemma:martingales}}\label{appendix:proof_of_lemma_8}
In this section we prove that each of the processes $\{L_k(\sigma)\}_{k\ge0}$, $\{\widetilde L_k\}_{k\ge0}$ and $\{Y_k^{(\rho)}\}_{k \geq 0}$ are martingales.
% \textcolor{blue}{For properties on martingales and filtration, see also \cite[Section 4]{li2022optimal} and the references therein.}
\begin{IEEEproof}[Proof of $\{L_k(\sigma)\}_{k\ge0}$]
For fixed $\sigma\in\mathcal D$, the process $\{L_k(\sigma)\}_{k\ge0}$ is adapted to $\{\mathcal{F}_k\}_{k\geq 0}$ as $L_k(\sigma)=e^{-S_k(\sigma)}$ and $S_k(\sigma)$ is $\mathcal F_k$-measurable. Moreover, $L_k(\sigma)\ge0$, so the conditional expectation is well-defined and therefore,
\begin{align*}
\mathbb E_{n,\rho}[L_k(\sigma) \mid \mathcal F_{k-1}] &= e^{-S_{k-1}(\sigma)}\mathbb E_{n,\rho}\Bigl[\tfrac{\tr[M_k(X_k)\sigma]}{\tr[M_k(X_k)\rho]}\Bigm| \mathcal F_{k-1}\Bigr] \\
% &= e^{-S_{k-1}(\sigma)} \sum_{x}\Pbb_{n,\rho} \left(X_k=x\mid\mathcal F_{k-1} \right)e^{-Z_k(\sigma; x)}\\
&=  e^{-S_{k-1}(\sigma)} \sum_{x} \tr[M_k(x)\rho] \frac{\tr[M_k(x)\sigma]}{\tr[M_k(x)\rho]}\\
&=  e^{-S_{k-1}(\sigma)} \sum_{x}\tr[M_k(x)\sigma]\\
% &= e^{-S_{k-1}(\sigma)} \tr\left[ \sum_{x}M_k(x)\sigma\right]\\
% &= e^{-S_{k-1}(\sigma)} \tr[\sigma] \\
&= e^{-S_{k-1}(\sigma)}\\
&= L_{k-1}(\sigma).
\end{align*} Taking expectations and using $L_0(\sigma)=1$, we obtain by induction that $\mathbb E_{n,\rho}[L_k(\sigma)]=1$ for all $k$, and hence $L_k(\sigma)$ is $L^1$-integrable, that is $L_k(\sigma)\in L^1(\Pbb_{n,\rho})$. Therefore $\{L_k(\sigma)\}_{k\ge0}$ is a martingale under $\Pbb_{n,\rho}$ with respect to $\{\mathcal{F}_k\}_{k\geq 0}$. 
\end{IEEEproof}
\begin{IEEEproof}[Proof of $\{\widetilde L_k\}_{k\ge0}$] 
For each fixed $\sigma\in\mathcal D$, the random variable $L_k(\sigma)$ is given pointwise by
\[
L_k(\omega,\sigma):=\prod_{j=1}^k
\frac{\tr\big(M_j(X_j(\omega))\sigma\big)}
{\tr\big(M_j(X_j(\omega))\rho\big)}
\]
where $\omega\in\Omega$ denotes a realization of the observation history $(X_1,X_2,\dots)$, and $X_j(\omega)$ is the $j$-th element of $\omega$. Similarly, define
\[
\widetilde L_k(\omega):=\int_{\mathcal D} L_k(\omega,\sigma)\Pi(\dd\sigma).
\]
Since each $M_j$ is $\mathcal F_{j-1}$-measurable and $\tr\big(M_j(X_j(\omega))\sigma\big)$ is continuous in $\sigma$,  it follows that $(\omega,\sigma)\mapsto L_k(\omega,\sigma)$ is jointly measurable. Moreover, for each fixed $\sigma\in\mathcal D$, $L_k(\sigma)$ is a non-negative $\mathcal F_k$-measurable random variable and therefore we get the chain of equalities
\begin{align}
\mathbb E_{n,\rho}\left[\widetilde L_k \mid \mathcal F_{k-1}\right] \nonumber
&=
\mathbb E_{n,\rho}\left[\int_{\mathcal D} L_k(\sigma)\Pi(\dd\sigma) \Bigm| \mathcal F_{k-1} \right] \nonumber\\
&=
\int_{\mathcal D} \mathbb E_{n,\rho}\left[L_k(\sigma)\mid \mathcal F_{k-1}\right] \Pi(\dd\sigma) \label{eq:conditional Fubini theorem}\\
% &= \int_{\mathcal D} L_{k-1}(\sigma) \mathbb E_{n,\rho}\biggl[ \frac{\tr[M_k(X_k)\sigma]}{\tr[M_k(X_k)\rho]} \Bigm| \mathcal F_{k-1} \biggr]\Pi(\dd\sigma) \nonumber\\
% &= \int_{\mathcal D} L_{k-1}(\sigma) \sum_{x} \Pbb_{n,\rho}(X_k=x\mid\mathcal F_{k-1}) \frac{\tr[M_k(x)\sigma]}{\tr[M_k(x)\rho]} \Pi(\dd\sigma) \nonumber\\
% &= \int_{\mathcal D} L_{k-1}(\sigma)
% \sum_{x} \tr[M_k(x)\rho] \frac{\tr[M_k(x)\sigma]}{\tr[M_k(x)\rho]} \,\Pi(\dd\sigma) \nonumber\\
% &= \int_{\mathcal D} L_{k-1}(\sigma)\sum_{x} \tr[M_k(x)\sigma] \,\Pi(\dd\sigma) \nonumber\\
&= \int_{\mathcal D} L_{k-1}(\sigma)\Pi(\dd\sigma)= \widetilde L_{k-1}. \nonumber
\end{align}
 where in \eqref{eq:conditional Fubini theorem} we use that $(\omega,\sigma)\mapsto L_k(\omega,\sigma)$ is jointly measurable and hence the conditional Fubini theorem \cite[Theorem 27.17]{schilling2017measures} applies.
 Moreover, since $\widetilde L_k \ge 0$, we may apply Tonelli's theorem \cite[Theorem 5.28]{axler2020measure} to obtain
 \begin{align*}
  \mathbb E_{n,\rho}[\widetilde L_k] &= \mathbb E_{n,\rho}\left[\int_{\mathcal D} L_k(\sigma)\Pi(\dd\sigma)\right] \\&= \int_{\mathcal D} \mathbb E_{n,\rho}[L_k(\sigma)]\Pi(\dd\sigma)
\\&= \int_{\mathcal D} 1\Pi(\dd\sigma) = 1,  
 \end{align*}
where we used that $\mathbb E_{n,\rho}[L_k(\sigma)]=1$ for all $\sigma\in\mathcal D$. Hence $\widetilde L_k \in L^1(\Pbb_{n,\rho})$, and therefore $\{\widetilde L_k\}_{k\ge0}$ is a martingale under $\Pbb_{n,\rho}$ with respect to $\{\mathcal{F}_k\}_{k\geq 0}$.
\end{IEEEproof}
\begin{IEEEproof}[Proof of  $\{Y_k^{(\rho)}\}_{k \geq 0}$]
For $k=1$, since $\widetilde S_1=\widetilde Z_1$, we have
\begin{align*}
\mathbb E_{n,\rho}[Y_1^{(\rho)}\mid\mathcal F_0]&=-\mathbb E_{n,\rho}[\widetilde Z_1\mid\mathcal F_0]
+
\mathbb E_{n,\rho}[\widetilde Z_1\mid\mathcal F_0]\\
&= 0= Y_0^{(\rho)}
\end{align*}
For $k\geq2$ we have the following chain of equalities
\begin{align*}
\mathbb E_{n,\rho}[Y_k^{(\rho)} \mid \mathcal F_{k-1}]
&= -\widetilde S_{k-1} - \mathbb E_{n,\rho}\biggl[\widetilde Z_k \mid \mathcal F_{k-1}\biggr]   \\
&+\mathbb E_{n,\rho}\biggl[\sum_{j=1}^k \mathbb E_{n,\rho}\bigl[\widetilde Z_j \mid \mathcal F_{j-1}\bigr] \mid \mathcal F_{k-1}\biggr] \\
&= -\widetilde S_{k-1} - \mathbb E_{n,\rho}\left[\widetilde Z_k \mid \mathcal F_{k-1}\right] \\
&+ \mathbb E_{n,\rho}\biggl[\widetilde Z_k \mid \mathcal F_{k-1}\biggr] + \sum_{j=1}^{k-1} \mathbb E_{n,\rho}\biggl[\widetilde Z_j \mid \mathcal F_{j-1}\biggr] \\
&= Y_{k-1}^{(\rho)},
\end{align*}
where we used that since for every $j\leq k-1$, $\mathbb E_{n,\rho}[\widetilde Z_j \mid \mathcal F_{j-1}]$ is $\mathcal F_{j-1}$-measurable, and hence $\mathcal F_{k-1}$-measurable since $\mathcal F_{j-1}\subseteq\mathcal F_{k-1}$. Therefore,  $\sum_{j=1}^{k-1}\mathbb E_{n,\rho}[\widetilde Z_j \mid \mathcal F_{j-1}]$ is $\mathcal F_{k-1}$-measurable, and 
\[
\mathbb E_{n,\rho}\left[\mathbb E_{n,\rho}[\widetilde Z_k \mid \mathcal F_{k-1}] \middle| \mathcal F_{k-1}\right] = \mathbb E_{n,\rho}[\widetilde Z_k \mid \mathcal F_{k-1}],
\]
so the $\widetilde Z_k$ terms cancel. Since $|\widetilde Z_j|\leq C$ for all $j\leq k$, each $\widetilde Z_j$ is $L^1$-integrable. Therefore, $Y_k^{(\rho)}\in L^1(\Pbb_{n,\rho})$ since it is a finite sum of integrable random variables. Thus $\{Y_k^{(\rho)}\}_{k \geq 0}$ is a martingale under $\Pbb_{n,\rho}$ with respect to $\{\mathcal{F}_k\}_{k\geq 0}$.
\end{IEEEproof}
% Altogether, we conclude that $\{L_k(\sigma)\}_{k \geq 0}$, $\{\widetilde L_k\}_{k\ge0}$ and $\{Y_k^{(\rho)}\}_{k \geq 1}$ are martingales under $\Pbb_{n,\rho}$ with respect to the filtration $\{\mathcal{F}_k\}_{k\geq 0}$, and $\{Y_k^{(\sigma)}\}_{k\ge1}$ is a martingale under $\Pbb_{n,\sigma}$ with respect to the same filtration.
\subsection{Proof of \Cref{lemma:mixture_martingale_comparison_with_pointwise}}\label{appendix:proof_of_lemma_9}
Before proving \Cref{lemma:mixture_martingale_comparison_with_pointwise}, we establish two bounds used in the argument. We start with bounding the absolute difference between two pointwise SQLLRs.
\begin{lemma}\label{lemma:covering}
\looseness=-1
Let $\delta$ be the smallest eigenvalue attained by any state in $\mathcal D$, which is strictly positive since $\mathcal D$ is compact and consists of full-rank states. Then, for fixed $\sigma,\sigma'\in\mathcal D$ and $k\geq 1$, we have
\begin{equation}
|S_{k}(\sigma)-S_{k}(\sigma')|\leq k\log\left(1+\frac{\|\sigma-\sigma'\|_1}{\delta}\right).
\end{equation}
\end{lemma}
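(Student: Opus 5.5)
The plan is to bound the difference term by term. By definition \eqref{eq:pointwise_SQPRT}, the $\rho$-terms cancel in the difference:
\[
S_k(\sigma)-S_k(\sigma')=\sum_{j=1}^k\log\frac{\tr[M_j(X_j)\sigma']}{\tr[M_j(X_j)\sigma]}.
\]
The triangle inequality then gives $|S_k(\sigma)-S_k(\sigma')|\le\sum_{j=1}^k\bigl|\log\frac{\tr[M_j(X_j)\sigma']}{\tr[M_j(X_j)\sigma]}\bigr|$. So it suffices to prove a uniform single-step bound: for every POVM element $E:=M_j(x)\neq0$ (the footnote restricts to such outcomes), we want
\[
\Bigl|\log\frac{\tr[E\sigma']}{\tr[E\sigma]}\Bigr|\le\log\Bigl(1+\frac{\|\sigma-\sigma'\|_1}{\delta}\Bigr).
\]
Summing this over $j=1,\dots,k$ then yields the claim.

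For the single-step bound, assume without loss of generality that $\tr[E\sigma']\ge\tr[E\sigma]$; the other case is symmetric, swapping $\sigma$ and $\sigma'$. The log-ratio is then nonnegative and equals
\[
\log\Bigl(1+\frac{\tr[E(\sigma'-\sigma)]}{\tr[E\sigma]}\Bigr).
\]
By monotonicity of $\log$, it remains to show
\[
\frac{\tr[E(\sigma'-\sigma)]}{\tr[E\sigma]}\le\frac{\|\sigma-\sigma'\|_1}{\delta}.
\]

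To bound the numerator, write $\sigma'-\sigma=P-Q$ with $P,Q\succeq0$ (Jordan decomposition). Since $0\preceq E\preceq\mathbb I$, we get $\tr[E(\sigma'-\sigma)]\le\tr[EP]\le\|E\|_\infty\tr[P]$. Because $\tr[\sigma'-\sigma]=0$, we have $\tr[P]=\tfrac12\|\sigma-\sigma'\|_1$, which gives the numerator bound $\|E\|_\infty\,\tfrac12\|\sigma-\sigma'\|_1$.

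To bound the denominator, use $\sigma\succeq\delta\mathbb I$, so $\tr[E\sigma]\ge\delta\tr[E]$, together with $\tr[E]\ge\|E\|_\infty$ since $E\succeq0$. Combining the two bounds, the ratio is at most $\frac{\|\sigma-\sigma'\|_1}{2\delta}\le\frac{\|\sigma-\sigma'\|_1}{\delta}$, which is even slightly stronger than needed.

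The step needing care is making the ratio bound hold uniformly over arbitrary POVM elements, not only the rank-one projections used by the mixture-SQPRT. The pairing of $\|E\|_\infty$ in the numerator against $\tr[E]$ in the denominator handles this. The existence and positivity of $\delta$ (a continuous minimum eigenvalue over a compact set of full-rank states) is taken as given in the statement.
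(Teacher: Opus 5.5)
Your proof is correct. It reaches the per-step bound by a different, though dual, route from the paper's.

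The paper never looks at the individual ratio $\tr[E\sigma']/\tr[E\sigma]$. Instead it proves the operator inequality $\sigma\preceq\bigl(1+\|\sigma-\sigma'\|_1/\delta\bigr)\sigma'$, and its mirror image, from $\sigma-\sigma'\preceq\|\sigma-\sigma'\|_\infty\mathbb I\preceq\|\sigma-\sigma'\|_1\mathbb I$ and $\mathbb I\preceq\sigma'/\delta$. That is, it bounds $D_{\max}$ in both directions, and then invokes \cite[Lemma 15]{li2022optimal} to convert this into a bound on every single-outcome log-likelihood ratio. Uniformity over arbitrary POVM elements is automatic there, since $A\preceq\lambda B$ gives $\tr[EA]\le\lambda\tr[EB]$ for every $E\succeq0$.

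You work at the level of the scalar functional instead. The Jordan decomposition bounds the numerator by $\|E\|_\infty\cdot\tfrac12\|\sigma-\sigma'\|_1$, and pairing it with $\tr[E]\ge\|E\|_\infty$ in the denominator does the job that operator ordering does in the paper. Since your inequality holds for all $E\succeq0$, it is in fact equivalent to an operator inequality of the same type.

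What you gain is a self-contained argument with no external lemma, plus the sharper constant $\|\sigma-\sigma'\|_1/(2\delta)$. The paper's route is shorter given the cited lemma. The factor $\tfrac12$ is not specific to your method: the paper's route recovers it by using $\lambda_{\max}(\sigma-\sigma')\le\tfrac12\|\sigma-\sigma'\|_1$ for the traceless difference, instead of the cruder $\|\sigma-\sigma'\|_\infty\le\|\sigma-\sigma'\|_1$. Either way the constant is irrelevant downstream, where the lemma is only used to get $k\log(1+\varepsilon_k/\delta)\le 1$ and $|S_k(\sigma^{(k)})-S_k(\sigma)|\le 2/\delta$.
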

\begin{IEEEproof}
Fix $\sigma,\sigma'\in\mathcal D$ and $k\geq1$. Then we have
\begin{align*}
\hspace{-0.22cm }|S_{k}(\sigma)-S_{k}(\sigma')|
&=\left|\sum_{j=1}^{k}\log\frac{\tr\big[M_j(X_j)\sigma'\big]}{\tr\big[M_j(X_j)\sigma\big]}\right| \nonumber \\
% &=\biggl|\sum_{j=1}^{k}Z_j(\sigma)-Z_j(\sigma')\biggr|  \label{eq:triangle_inequality} \\
&\leq \sum_{j=1}^{k}\biggl|\log\frac{\tr\big[M_j(X_j)\sigma'\big]}{\tr\big[M_j(X_j)\sigma\big]}\biggr| \\
&\leq k\max\{D_{\max}\left(\sigma \| \sigma'\right),D_{\max}\left(\sigma' \| \sigma\right)\} \\
&\leq k\log\left(1+\frac{\|\sigma-\sigma'\|_1}{\delta}\right)
\end{align*}
where $D_{\max}(\sigma\|\sigma') := \log \inf\left\{ \lambda>0 : \sigma \le \lambda \sigma' \right\}$. For the first inequality we used the definition of $S_k(\cdot)$ and the triangle inequality, for the second we used \cite[Lemma 15]{li2022optimal} and for the last we used that $\|\sigma-\sigma'\|_\infty\leq\|\sigma-\sigma'\|_1$, thus $\sigma-\sigma'\preceq \|\sigma-\sigma'\|_1I$, which implies that
\begin{equation*}
\sigma\preceq\sigma'+\|\sigma-\sigma'\|_1I\preceq\sigma'+\frac{\|\sigma-\sigma'\|_1}{\delta}\sigma'=\biggl(1+\frac{\|\sigma-\sigma'\|_1}{\delta}\biggr)\sigma'
\end{equation*}
since every state in $\mathcal D$ satisfies the constraint $\sigma\succeq \delta I$. Similarly,  $\sigma'\preceq\bigl(1+\tfrac{\|\sigma-\sigma'\|_1}{\delta}\bigr)\sigma.$ This concludes the proof.
\end{IEEEproof}

The above result gives a bound, linear in the time step $k$, on the absolute difference between two pointwise-SQLLRs. The next lemma gives the second bound needed for the proof of \Cref{lemma:mixture_martingale_comparison_with_pointwise}: a polynomial lower bound on the prior mass of small trace-norm balls in $\mathcal D$.
\begin{figure}[!t]
    \centering
    \begin{tikzpicture}
        \draw(-2,-2)rectangle(2,2);
        \draw(3.5,-2)rectangle(6.5,2);
        \node at(0,1.8){$\mathsf A_d$};
        \node at(5,1.8){$\mathbb R^{d^2-1}$};
        \draw[magenta](0,0) circle [x radius=2cm, y radius=1cm, rotate=30];
        \draw[magenta](5,0) circle[x radius=1cm, y radius=14mm, rotate=60];\fill[magenta,opacity=.15](0,0) circle [x radius=2cm, y radius=1cm, rotate=30];
        \fill[magenta,opacity=.15](5,0) circle[x radius=1cm, y radius=14mm, rotate=60];
        \draw[-stealth](2.2,.1)--(3.3,.1)node[midway,above]{$x$};
        \draw[-stealth](3.3,-.1)--(2.2,-.1)node[midway,below]{$T$};
        \fill[red](.5,.7)circle[radius=.05];
        \draw[darkgreen](.5,.7)circle[radius=.5];
        \fill[red](5.5,.5)circle[radius=.05];
        \draw[blue](5.5,.5)circle[radius=.2];
        \draw[blue](.5,.7)circle[x radius=4mm,y radius=2mm,rotate=100];
        \node[magenta]at(1.5,1.5){$\mathcal D$};
        \node[magenta]at(5.7,1.15){$x(\mathcal D)$};
        \draw(.5,.7)--(-1,1)node[red,above]{$\sigma^{(k)}$};
        \draw(.552, .305)--(.552,-1.2)node[blue,below]{$T(\mathcal B_{\Euc}(x(\sigma^{(k)}),a\varepsilon))$};
        \draw(.066,.452)--(-.9,-.1)node[darkgreen,below]{$\mathcal B_1(\sigma^{(k)},\varepsilon)$};
        \draw(5.5,.5)--(5,.5)node[red,left]{$x(\sigma^{(k)})$};
        \draw(5.444,0.308)--(5,-1.2)node[blue,below]{$\mathcal B_{\Euc}(x(\sigma^{(k)}),a\varepsilon)$};
    \end{tikzpicture}
    \caption{Illustration of the second part in the proof of \Cref{lemma:balls}. The scalar $a$ comes from the norm-equivalence bound in \Cref{lemma:isomorphism}, while for a fixed $\sigma^{(k)}\in\interior(\mathcal D)$, $\varepsilon$ is chosen so that \eqref{eq:lower_bound_balls} is satisfied.}
    \label{fig:balls}
\end{figure}
\begin{lemma}\label{lemma:balls} 
Fix once and for all an auxiliary reference state
$\sigma^\circ\in\interior_{\mathsf A_d}(\mathcal D)$. Then for every $k\geq1$ and any fixed $\sigma\in \mathcal D$, define the sequence $\{\sigma^{(k)}\}_{k\ge1}$ as
\[
\sigma^{(k)}:=\bigg(1-\frac{1}{k}\bigg)\sigma+\frac{1}{k}\sigma^\circ.
\]
There exist a constant $v_0>0$, depending only on the fixed choice of
$\sigma^\circ$, and a positive sequence $(\varepsilon_k)_{k\ge1}$, independent of $\sigma$, such that for all $k\ge1$
and all $0<\varepsilon\le \varepsilon_k$,
\begin{equation}
 \Pi(\mathcal B_1(\sigma^{(k)},\varepsilon))
 \geq v_0\varepsilon^{d^2-1}.
\end{equation}
\end{lemma}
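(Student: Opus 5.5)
The idea is to use convexity of $\mathcal D$: the segment from any point of $\mathcal D$ toward the interior reference state $\sigma^\circ$ fills out a small "cone" inside $\mathcal D$. Since $\sigma^{(k)}$ sits a fraction $1/k$ of the way toward $\sigma^\circ$, it is at least $r/k$ away from the relative boundary. The lower bound on $\Pi$ then comes from volume comparison in coordinates.

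\textbf{Step 1 (an interior ball around $\sigma^{(k)}$).} Since $\sigma^\circ\in\interior_{\mathsf A_d}(\mathcal D)$, there is $r>0$ such that every $\tau\in\mathsf A_d$ with $\|\tau-\sigma^\circ\|_1\le r$ lies in $\mathcal D$. Take any $\sigma'\in\mathsf A_d$ with $\|\sigma'-\sigma^{(k)}\|_1\le r/k$, and write
\[
\sigma'=\Bigl(1-\tfrac1k\Bigr)\sigma+\tfrac1k\bigl(\sigma^\circ+k(\sigma'-\sigma^{(k)})\bigr).
\]
The point $\sigma^\circ+k(\sigma'-\sigma^{(k)})$ lies in $\mathsf A_d$, since the difference $\sigma'-\sigma^{(k)}$ has trace zero, and it is within trace distance $r$ of $\sigma^\circ$, so it belongs to $\mathcal D$. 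By convexity, $\sigma'\in\mathcal D$. Hence, with $\varepsilon_k:=r/k$ (independent of $\sigma$), for every $\varepsilon\le\varepsilon_k$ the full affine trace-norm ball $\{\sigma'\in\mathsf A_d:\|\sigma'-\sigma^{(k)}\|_1\le\varepsilon\}$ lies inside $\mathcal D$. This full ball therefore coincides with $\mathcal B_1(\sigma^{(k)},\varepsilon)$.

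\textbf{Step 2 (volume comparison).} Fix a linear isometry $J:(\mathsf V_d,\text{Hilbert--Schmidt})\to(\mathbb R^{d^2-1},\|\cdot\|_2)$. By construction it carries $\Vol_{\mathsf V_d}$ to Lebesgue measure. Use the affine chart $x(\tau):=J(\tau-\tfrac{\mathbb I}{d})$ with inverse $T$, as in \Cref{fig:balls}. By \Cref{lemma:isomorphism} there is $a>0$ with $\|JH\|_2\ge a\|H\|_1$. Consequently $\|x(\sigma')-x(\sigma^{(k)})\|_2\le a\varepsilon$ implies $\|\sigma'-\sigma^{(k)}\|_1\le\varepsilon$, i.e.
\[
T\bigl(\mathcal B_{\Euc}(x(\sigma^{(k)}),a\varepsilon)\bigr)\subseteq\{\sigma'\in\mathsf A_d:\|\sigma'-\sigma^{(k)}\|_1\le\varepsilon\}=\mathcal B_1(\sigma^{(k)},\varepsilon).
\]
Taking volumes,
\[
\Pi(\mathcal B_1(\sigma^{(k)},\varepsilon))\ge\frac{\omega_{d^2-1}(a\varepsilon)^{d^2-1}}{\Vol_{\mathsf A_d}(\mathcal D)},
\]
where $\omega_{d^2-1}$ is the volume of the Euclidean unit ball. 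So we may set $v_0:=\omega_{d^2-1}a^{d^2-1}/\Vol_{\mathsf A_d}(\mathcal D)$. This depends only on $d$ and $\mathcal D$, and through the choice of $\varepsilon_k$ on $\sigma^\circ$ via $r$.

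\textbf{Main obstacle.} The only delicate point is Step 1: the ball must be taken relative to $\mathsf A_d$, and the rescaled perturbation must keep trace one. This is exactly why we use $\interior_{\mathsf A_d}$ and the trace-zero property of $\sigma'-\sigma^{(k)}$. Everything else is routine measure bookkeeping, namely that $J$ preserves Lebesgue measure under the chosen normalisation of $\Vol_{\mathsf V_d}$.
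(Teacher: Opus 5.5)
Your proposal is correct and follows essentially the same route as the paper: convexity toward the interior point $\sigma^\circ$ gives a ball of radius proportional to $1/k$ around $\sigma^{(k)}$ inside $\mathcal D$. Then \Cref{lemma:isomorphism} fits a Euclidean ball of radius $a\varepsilon$ inside the trace-norm ball, and the Euclidean volume formula gives $v_0\varepsilon^{d^2-1}$. The differences are cosmetic: you do the convexity step directly in trace norm on $\mathsf A_d$ rather than in Euclidean coordinates, and you take $J$ to be an isometry so that the paper's Jacobian factor $|\det J^{-1}|$ equals one.
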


For clarity, we briefly outline the structure of the argument in three parts. First, we introduce a coordinate representation of $\mathsf A_d$, denoted by $x$, and use \Cref{lemma:isomorphism} to control the trace-norm on $\mathsf A_d$ with the Euclidean norm on $\mathbb R^{d^2-1}$. Next, we construct a trace-norm neighborhood in $\mathcal D$, $\mathcal B_1(\sigma^{(k)},\varepsilon)$, and a Euclidean neighborhood of $x(\sigma^{(k)})$, which lies inside $x(\mathcal D)$. Then we map the Euclidean balls back to $\mathcal D$, which have been constructed such that they are subsets of the trace-norm neighborhoods $\mathcal B_1(\sigma^{(k)},\varepsilon)$. \Cref{fig:balls} gives a visual aid of  this part. Finally, combining both these steps enables us to lower bound $\Pi(\mathcal B_1(\sigma^{(k)},\varepsilon))$ by the volume of the Euclidean neighborhoods. Using the standard formula for the volume of a $(d^2-1)$-dimensional Euclidean ball together with some basic properties of Lebesgue measure gives us the desired result.

\begin{IEEEproof}
We now begin with the first step: introducing a coordinate representation of $\mathsf A_d$, which we subsequently restrict to $\mathcal D$, and establishing a comparison between the trace-norm and the Euclidean norm using \Cref{lemma:isomorphism}.

Note that the associated translation space $\mathsf V_d$ is a real vector space of real dimension $d^2-1$, thus $\mathsf V_d \cong \mathbb{R}^{d^2-1}$. Therefore, there exists an affine isomorphism $T : \mathbb{R}^{d^2-1} \to \mathsf A_d$ such that each $\sigma \in \mathsf A_d$ admits a coordinate representation
\[x(\sigma):=T^{-1}(\sigma) \in \mathbb{R}^{d^2-1}.\]
As $x$ is an affine isomorphism, its linear part $J:\mathsf V_d\to \mathbb R^{d^2-1}$ is a linear isomorphism. Equivalently, the affine map $T$ has linear part $J^{-1}$.  Thus for all $\sigma,\sigma'\in\mathsf A_d$,
\[
x(\sigma)-x(\sigma')=J(\sigma-\sigma').
\]
Applying \Cref{lemma:isomorphism} yields that there exists $a>0$ such that, for all $\sigma,\sigma'\in\mathsf A_d$,
\[
a\|\sigma-\sigma'\|_1 \le \|J(\sigma-\sigma')\|_2= \|x(\sigma)-x(\sigma')\|_2.
\]
In particular, this holds for all $\sigma,\sigma'\in\mathcal D$ as $\mathcal D$ is a subset of $\mathsf A_d$. This concludes the first part.

We now construct a Euclidean neighborhood around $x(\sigma^{(k)})$ whose image under $T$ lies inside the trace-norm ball $\mathcal B_1(\sigma^{(k)},\varepsilon)$. This enables us to embed a Euclidean ball of known volume inside $\mathcal B_1(\sigma^{(k)},\varepsilon)$, yielding a lower bound on $\Pi(\mathcal B_1(\sigma^{(k)},\varepsilon))$ . See \Cref{fig:balls} for an illustration.

Since $x$ is an affine isomorphism between finite-dimensional normed spaces, it is automatically continuous with a continuous inverse, and therefore a homeomorphism. In particular, $x$ preserves interior points and consequently $x(\sigma^\circ) \in \interior_{\mathbb R^{d^2-1}}(x(\mathcal D))$. Therefore, there exists a small radius $r_0>0$ such that $\mathcal B_{\Euc}(x(\sigma^\circ), r_0) \subseteq x(\mathcal D)$. Now fix any $\sigma\in\mathcal D$ and define the sequence $\{\sigma^{(k)}\}_{k\ge1}$ as
\[
\sigma^{(k)}:=\bigg(1-\frac{1}{k}\bigg)\sigma+\frac{1}{k}\sigma^\circ,
\]
where $k\geq1$. By definition $\sigma^{(k)}\in \interior_{\mathsf A_d}(\mathcal D)$, since 
$\sigma \in \mathcal D$ and $\sigma^{(k)}$ is a convex combination with strictly positive weight on $\sigma^\circ\in\interior_{\mathsf A_d}(\mathcal D)$. Since $x(\sigma^{(k)}) = \left(1-\frac1k\right)x(\sigma) + \frac1k x(\sigma^\circ)$ and $x(\mathcal D)$ is convex as $\mathcal D$ is convex and $x$ is affine, then it follows that
\begin{align*}
\mathcal B_{\Euc}\left(x(\sigma^{(k)}), \tfrac{r_0}{k}\right)
&= x(\sigma^{(k)}) + \mathcal B_{\Euc}\!\left(0, \tfrac{r_0}{k}\right) \\
&= \left(1-\tfrac1k\right)x(\sigma) + \tfrac1k x(\sigma^\circ) + \tfrac1k\mathcal B_{\Euc}\left(0, r_0\right) \\
&= \left(1-\tfrac1k\right)x(\sigma) + \tfrac1k\big(x(\sigma^\circ) + \mathcal B_{\Euc}(0, r_0)\bigr) \\
&= \left(1-\tfrac1k\right)x(\sigma) + \tfrac1k \mathcal B_{\Euc}(x(\sigma^\circ), r_0) \\
&\subseteq \left(1-\tfrac1k\right)x(\sigma) + \tfrac1k x(\mathcal D) \\
&\subseteq x(\mathcal D).
\end{align*}
where we used translation invariance and scaling properties of Euclidean balls, the inclusion $\mathcal B_{\Euc}(x(\sigma^\circ), r_0) \subseteq x(\mathcal D)$, and the convexity of $x(\mathcal D)$.
In particular, for all $0<\varepsilon \le \varepsilon_k$, where $\{\varepsilon_k\}_{k\ge1}$ is a sequence \emph{of our choice}  satisfying $\varepsilon_k \le \frac{r_0}{ak}$ for all $k$, applying the affine map $T$ gives
\[
T(\mathcal B_{\Euc}(x(\sigma^{(k)}), a\varepsilon)) \subset \mathcal D,
\]
and therefore $\Pi(T(\mathcal B_{\Euc}(x(\sigma^{(k)}), a\varepsilon)))$ is well defined. Now let $\sigma' \in T\big(\mathcal B_{\Euc}(x(\sigma^{(k)}), a\varepsilon)\big)$. Then $\|x(\sigma') - x(\sigma^{(k)})\|_2 \le a\varepsilon$ and since $x(\sigma') - x(\sigma^{(k)}) = J(\sigma' - \sigma^{(k)})$, \Cref{lemma:isomorphism} gives
\[
a \|\sigma' - \sigma^{(k)}\|_1 \le \|x(\sigma') - x(\sigma^{(k)})\|_2
\le a\varepsilon,
\]
which yields that $\|\sigma' - \sigma^{(k)}\|_1 \le \varepsilon$. Thus we have
\begin{equation}
T(\mathcal B_{\Euc}(x(\sigma^{(k)}), a\varepsilon))\subseteq \mathcal B_1(\sigma^{(k)},\varepsilon) \subseteq \mathcal D,\label{eq:lower_bound_balls}
\end{equation}
where the second inclusion  is immediate from the definition of $\mathcal B_1(\sigma^{(k)},\varepsilon)$ as a trace-norm ball in $\mathcal D$. This completes the second part. 

To conclude the proof, we combine the above inclusions to obtain the desired bound. 

Using \eqref{eq:lower_bound_balls} and the monotonicity of $\Pi$ we have
\begin{align}
\hspace{-0.4cm}\Pi(\mathcal B_1(\sigma^{(k)},\varepsilon))&\geq\Pi(T(\mathcal B_{\Euc}(x(\sigma^{(k)}), a\varepsilon))) \nonumber \\
&=\frac{\Vol_{\mathsf{A}_d}(T(\mathcal B_{\Euc}(x(\sigma^{(k)}), a\varepsilon)))}{\Vol_{\mathsf{A}_d}(\mathcal D)} \nonumber \\
&=\frac{|\det J^{-1}|\cdot\Vol_{\mathbb R^{d^2-1}}(\mathcal B_{\Euc}(x(\sigma^{(k)}), a\varepsilon))}{\Vol_{\mathsf{A}_d}(\mathcal D)} \nonumber \\ 
& = \frac{|\det J^{-1}|\Vol_{\mathbb R^{d^2-1}}(\mathcal B_{\Euc}(0,1))}{\Vol_{\mathsf{A}_d}(\mathcal D)}(a\varepsilon)^{d^2-1} \nonumber \\
% & = \frac{|\det A^{-1}|\cdot\pi^{\frac{d^2-1}{2}}}{\Gamma(\frac{d^2-1}{2}+1)\Vol_{\state}(\mathcal D)}(c\varepsilon)^{d^2-1} \nonumber \\
&=v_{0}\varepsilon^{d^2-1} \nonumber 
\end{align}
where in the first equality we use the definition of $\Pi$ and in the second equality we used the change-of-variables formula: for any Borel measurable $S\subset \mathbb R^{d^2-1}$ we have,
\[
\Vol_{\mathsf{A}_d}(T(S)) = |\det J^{-1}|\Vol_{\mathbb R^{d^2-1}}(S),
\]
where $|\det J^{-1}|$ is the Jacobian determinant of $T$. Then, in the third equality we used translation invariance and homogeneity of $\Vol_{\mathbb{R}^{d^2-1}}$, and in the last equality we used the standard formula for the volume of the Euclidean unit ball \cite{SmithVamanamurthy1989} and define $$v_{0}:=\frac{|\det J^{-1}|\cdot\pi^{\frac{d^2-1}{2}}a^{d^2-1}}{\Gamma(\frac{d^2-1}{2}+1)\Vol_{\mathsf{A}_d}(\mathcal D)},$$ 
which is a finite positive constant independent of $k$ and $\sigma$. This concludes the proof.
\end{IEEEproof}
With these two bounds acquired, we now prove \Cref{lemma:mixture_martingale_comparison_with_pointwise}. 
\begin{IEEEproof}
For $k\geq1$, define the sequence $\{\sigma^{(k)}\}_{k\ge1}$ as in \Cref{lemma:balls}. Namely, for every $k\geq1$,
\[
\sigma^{(k)}:=\bigg(1-\frac{1}{k}\bigg)\sigma+\frac{1}{k}\sigma^\circ.
\]
Then, for all $\sigma' \in \mathcal B_1(\sigma^{(k)},\varepsilon)$ with $0<\varepsilon \le \varepsilon_k$, where $\{\varepsilon_k\}_{k\ge1}$ is a sequence \emph{of our choice} satisfying $\varepsilon_k \le \frac{r_0}{ak}$ for all $k$, we have the following chain of relationships
\begin{align*}
\widetilde L_k&= \int_{\mathcal D}e^{-S_k(\sigma')}\Pi(\dd\sigma')\\ &\geq \int_{\mathcal B_1(\sigma^{(k)},\varepsilon)}e^{-S_k(\sigma')}\Pi(\dd\sigma')\\
&\geq \int_{\mathcal B_1(\sigma^{(k)},\varepsilon)} e^{-S_k(\sigma^{(k)})-k\log\left(1+\frac{\varepsilon}{\delta}\right)} \Pi(\dd\sigma') \\
&= e^{-S_k(\sigma^{(k)})-k\log\left(1+\frac{\varepsilon}{\delta}\right)} \Pi(\mathcal B_1(\sigma^{(k)},\varepsilon)) \\
&\geq v_{0}\varepsilon^{d^2-1} e^{-S_k(\sigma^{(k)})-k\log\left(1+\frac{\varepsilon}{\delta}\right)} \label{eq:partv}
\end{align*}
where in the first inequality we used that $\mathcal B_1(\sigma^{(k)},\varepsilon)\subseteq \mathcal D$ and $e^{-S_k(\cdot)}\geq0$, so restricting the domain can only decrease the integral. The second inequality follows from \Cref{lemma:covering} with $\|\sigma'-\sigma^{(k)}\|_1\leq \varepsilon$ and the last inequality uses \Cref{lemma:balls}. Applying $-\log$ to this lower bound on $\widetilde L_k$ and rearranging gives
\begin{align*}
\widetilde S_k-S_k(\sigma^{(k)})\leq k\log\left(1+\frac{\varepsilon} {\delta}\right) -\log v_{0} - (d^2-1)\log \varepsilon.
\end{align*}
Choosing $\varepsilon=\varepsilon_{k}:=\min\left\{\frac{r_0}{ak},\frac{\delta}{k}\right\}$ gives 
\begin{align*}
\hspace{-0.22cm} \widetilde S_k-S_k(\sigma^{(k)}) &\leq k\log\left(1+\frac{\varepsilon_k} {\delta}\right) -\log v_{0} - (d^2-1)\log \varepsilon_k\\
&\leq 1 -\log v_{0} - (d^2-1)\log \varepsilon_k \\
% &=  1 -\log v_{0} -(d^2-1)\biggl(\log\bigl(\min\{\delta,\tfrac{r_0}{a}\}\bigr) - \log k\biggr)\\
&=C_1+(d^2-1)\log k
\end{align*}
where $C_1:=1 -\log v_{0}-(d^2-1)\log\bigl(\min\{\delta,\tfrac{r_0}{a}\}\bigr)<\infty$ and we used that since $\varepsilon_{k}\leq\frac{\delta}{k}$, then
\[
k\log\biggl(1+\frac{\varepsilon_k}{\delta}\biggr)\leq k\log\biggl(1+\frac{\delta/k}{\delta}\biggr)= k\log\biggl(1+\frac{1}{k}\biggr)
\leq 1.
\]
% $k\log(1+\frac{\varepsilon_k} {\delta})\leq k\log(1+\frac{1}{k})
% \leq 1$, since  $\varepsilon_{k}\leq\frac{\delta}{k}$. Suppose that $\varepsilon_k=\frac{r_0}{ak}$, then
% \begin{align*}
% \widetilde S_k-S_k(\sigma^{(k)}) &\leq 1 -\log v_{0} - (d^2-1)\log \varepsilon_k\\
% &=C_1+(d^2-1)\log k
% \end{align*}
% where $C_1:=1-\log v_{0}-(d^2-1)\log\frac{r_0}{a}.$ If $\varepsilon_k=\frac{\delta}{k}$ then 
% \begin{align*}
% \widetilde S_k-S_k(\sigma^{(k)}) &\leq 1 -\log v_{0} - (d^2-1)\log \varepsilon_k\\
% &=C_2 +(d^2-1)\log k
% \end{align*}
% where $ C_2:=1 -\log v_{0}-(d^2-1)\log \delta<\infty$. Altogether we get for $\sigma \in  \mathcal D$,
% \begin{align*}
% \hspace{-0.22cm}\widetilde S_k-S_k(\sigma^{(k)}) 
% &\leq\max\{C_1, C_2\} +(d^2-1)\log k \\
% &\leq C_3+(d^2-1)\log k
% \end{align*}
% where $C_3:=\max\{C_1, C_2 \}<\infty$.
Finally, to prove the statement for all $\sigma \in \mathcal D$, note that $\|\sigma^{(k)} - \sigma\|_1=\tfrac 1k\|\sigma^\circ - \sigma\|_1\leq \tfrac 2k$, then by \Cref{lemma:covering} we have
% \begin{align*}
% |S_k(\sigma^{(k)}) - S_k(\sigma)|&\leq k\log\biggl(1+\frac{2}{k\delta}\biggr)
% \leq\frac{2}{\delta}
% \end{align*}
% $
% |S_k(\sigma^{(k)}) - S_k(\sigma)|\leq\frac{2}{\delta}$
% \begin{align*}
% |S_k(\sigma^{(k)}) - S_k(\sigma)| &\le k \log\left(1 + \frac{\|\sigma^{(k)} - \sigma\|_1}{\delta}\right)\\
% &\le k \log\left(1 + \frac{2}{k\delta}\right)\leq \frac{2}{\delta}, 
% \end{align*}
% and thus 
$S_k(\sigma^{(k)}) \leq S_k(\sigma)+\frac{2}{\delta}$. Therefore, setting $\widetilde C:=C_1+\frac{2}{\delta}$ we conclude
\begin{align*}
\widetilde S_k - S_k(\sigma) &= \bigl(\widetilde S_k - S_k(\sigma^{(k)})\bigr) + \bigl(S_k(\sigma^{(k)}) - S_k(\sigma)\bigr) \\ &\le C_1 + (d^2-1)\log k + \frac{2}{\delta}\\
&=\widetilde C + (d^2-1)\log k.
\end{align*}

This completes the argument. 
\end{IEEEproof}
\subsection{Proof of \Cref{lemma:lemma19}}\label{appendix:proof_of_lemma_10}
In this section, we provide the detailed proof of \Cref{lemma:lemma19}, proceeding by establishing each part separately.
\subsubsection{Proof of \Cref{lemma:lemma19} \ref{lemma:19;part1}}
Since $\widetilde Z_k$ is a function of $X_k$, and  $M_k$ is $\mathcal F_{k-1}$-measurable, conditioning on $\mathcal F_{k-1}$ and using that $\mathcal X$ is finite yields,
\begin{align}
\hspace{-0.22cm} \mathbb E_{n,\rho}[ \widetilde Z_k \mid \mathcal F_{k-1}]
% &= \sum_{x\in\mathcal X}\Pbb_{n,\rho}(X_k=x|\mathcal F_{k-1})\log\frac{\tr[M_k(x)\rho]}{\tr[M_k(x)\tilde\sigma_{k-1}]} \nonumber \\
% &= \sum_{x\in\mathcal X}\tr[M_k(x)\rho]\log\frac{\tr[M_k(x)\rho]}{\tr[M_k(x)\tilde\sigma_{k-1}]} \nonumber \\
&= D\left(P_{\rho,M_k}\|P_{\tilde\sigma_{k-1},M_k}\right) \nonumber \\
&= \mathbf 1_{\{\widetilde S_{k-1}\ge 0\}}D_\mathcal{M}(\rho\|\tilde\sigma_{k-1}) \nonumber  \\
&+ \mathbf 1_{\{\widetilde S_{k-1}< 0\}}D\bigl(P_{\rho,m_{1}^\ast(\tilde\sigma_{k-1})}\|P_{\tilde\sigma_{k-1},m_{1}^\ast(\tilde\sigma_{k-1})} \bigr). \nonumber
\end{align}
Now, since $\rho\notin\mathcal D$, $\tilde\sigma_{k-1}\in \mathcal D$, $\mathcal D$ is compact, $\sigma\mapsto D_\mathcal{M}(\rho\|\sigma)$ is continuous on the
compact set $\mathcal D$  and $D_\mathcal{M}(\rho\|\sigma)>0$ for every $\sigma\in\mathcal D$ because $\mathcal{M}_\mathcal{X}$ separates quantum states, then $D_{\mathcal M}(\rho\|\mathcal D)=\eta_1>0$ with the constant $\eta_1$ being independent of $n$ and $k$. Thus $\mathbb E_{n,\rho}[ \widetilde Z_k \mid \mathcal F_{k-1}]$ is uniformly bounded below when $\widetilde S_{k-1}\ge 0$. If instead $\widetilde S_{k-1}< 0$, then
\begin{align*}
\mathbb E_{n,\rho}[ \widetilde Z_k \mid \mathcal F_{k-1}]
&= D\left(P_{\rho,m_{1}^\ast(\tilde\sigma_{k-1})}\|P_{\tilde\sigma_{k-1},m_{1}^\ast(\tilde\sigma_{k-1})} \right)\\
&\ge \inf_{\sigma \in \mathcal D} D\left(P_{\rho,m_{1}^\ast(\sigma)}\|P_{\sigma,m_{1}^\ast(\sigma)} \right)
=: \eta_2.  
\end{align*}
where $\eta_2$ is a non-negative constant independent of $n$ and $k$. Suppose that $\eta_2=0$, then there exists a sequence $\{\sigma_i\}_{i\ge1}\subset\mathcal D$ such that
\[
D\left(P_{\rho,m_1^\ast(\sigma_i)}\middle\|P_{\sigma_i,m_1^\ast(\sigma_i)}\right)\to 0.
\] 
By compactness of $\mathcal D$, there exists a subsequence $\{\sigma_{i_j}\}_{j\ge1}$ such that $\sigma_{i_j}\to \sigma\in\mathcal D$. Furthermore, by the compactness of $\mathcal M_{\mathcal X}$, there exists a subsequence of measurements $m_1^\ast(\sigma_{i_{j_\ell}})\in \mathcal M_{\mathcal X}$ such that $m_1^\ast(\sigma_{i_{j_\ell}})\to m^  \star \in  \mathcal M_{\mathcal X}$. 
Since for every $m\in\mathcal M_{\mathcal X}$ and all $\ell$, we have by the definition of $m_1^\ast(\cdot)$
\[
D(P_{\sigma_{i_{j_\ell}},m_1^\ast(\sigma_{i_{j_\ell}})}\|P_{\rho,m_1^\ast(\sigma_{i_{j_\ell}})}) \ge D(P_{\sigma_{i_{j_\ell}},m}\|P_{\rho,m}).
\]
Passing to the limit as $\ell\to\infty$ and using continuity of 
$(\sigma,m)\mapsto D(P_{\sigma,m}\|P_{\rho,m})$, which holds since $\rho$ and every $\sigma\in\mathcal D$ are full-rank, so after discarding zero-outcome effects the induced probabilities are strictly positive, then yields $D(P_{\sigma,m^\ast}\|P_{\rho,m^\ast})\ge D(P_{\sigma,m}\|P_{\rho,m})$ for all $m\in\mathcal M_{\mathcal X}$. Thus $m^\ast \in \arg\max_m D(P_{\sigma,m}\|P_{\rho,m})$. Moreover, $(\sigma,m)\mapsto D(P_{\rho,m}\|P_{\sigma,m})$ is continuous in $(\sigma,m)$ so we have
\[
D\left(P_{\rho,m^\ast}\middle\|P_{\sigma,m^\ast}\right)=\lim_{\ell\to \infty}D\left(P_{\rho,m_1^\ast(\sigma_{i_{j_\ell}})}\middle\|P_{\sigma_{i_{j_\ell}},m_1^\ast(\sigma_{i_{j_\ell}})}\right)=0,
\]
which implies $P_{\rho,m^\ast}=P_{\sigma,m^\ast}$. Therefore $D\left(P_{\sigma,m^\ast}\middle\|P_{\rho,m^\ast}\right)=0$, and
since $m^\ast$ is a maximizer it follows that for all $m$,
\[
D(P_{\sigma,m}\|P_{\rho,m})\leq D\left(P_{\sigma,m^\ast}\middle\|P_{\rho,m^\ast}\right)=0.
\]
Thus, $D(P_{\sigma,m}\|P_{\rho,m})=0$ and hence $P_{\sigma,m}=P_{\rho,m}$ for all $m$. Since the class $\mathcal{M}_\mathcal{X}$ separates quantum states, it follows that $\sigma=\rho$. This contradicts $\rho\notin\mathcal D$. Hence $\eta_2>0$. Therefore taking $\eta:=\min\{\eta_1,\eta_2\}>0$ concludes the argument.

\subsubsection{Proof of \Cref{lemma:lemma19} \ref{lemma:19;part2}} We first write
 \[\mathbb E_{n,\rho}\left[e^{-\lambda\widetilde S_k}\right]=\mathbb E_{n,\rho}\left[e^{-\lambda\widetilde S_{k-1}}\mathbb E_{n,\rho}\left[e^{-\lambda\widetilde Z_k}\mid\mathcal F_{k-1}\right]\right],\]
which follows since $\widetilde S_{k}=\widetilde S_{k-1}+\widetilde Z_k$ and $\widetilde S_{k-1}$ is $\mathcal F_{k-1}$-measurable. We next evaluate the conditional expectation.
\begin{align*}
\hspace{-0.22cm} \mathbb E_{n,\rho}\left[e^{-\lambda\widetilde Z_k}|\mathcal F_{k-1}\right]
% &= \sum_{x\in\mathcal X}\Pbb_{n,\rho}(X_k=x|\mathcal F_{k-1})
% \left(\frac{P_{\tilde\sigma_{k-1},M_k}(x)}{P_{\rho,M_k}(x)}\right)^\lambda \\
&= \sum_{x\in\mathcal X}
P_{\rho,M_k}(x)\biggl(\frac{P_{\tilde\sigma_{k-1},M_k}(x)}{P_{\rho,M_k}(x)}\biggr)^\lambda
 \\
&= \sum_{x\in\mathcal X}
P_{\rho,M_k}(x)^{1-\lambda}
P_{\tilde\sigma_{k-1},M_k}(x)^\lambda.
\end{align*}
Applying Young's inequality with $p=1/(1-\lambda)$ and $q=1/\lambda $ to each term gives
\[ 
\mathbb E_{n,\rho}\left[e^{-\lambda\widetilde Z_k}\mid\mathcal F_{k-1}\right]\le 1 \] 
with equality if and only if $P_{\rho,M_k}=P_{\tilde\sigma_{k-1},M_k}$. This equality case is uniformly excluded, otherwise the conditional drift of $\widetilde Z_k$ under $\mathbb P_{n,\rho}$ would be zero, contradicting \Cref{lemma:lemma19} \ref{lemma:19;part1}. Hence, by the same compactness argument used there, the conditional overlap is uniformly bounded away from one, so there exists a constant $0<c<1$, independent of $k$ and $n$, such that
\[
\mathbb E_{n,\rho}\left[e^{-\lambda\widetilde Z_k}\mid\mathcal F_{k-1} \right]\le c\quad\text{a.s.} 
\] 
Therefore, 
\begin{align*} \mathbb E_{n,\rho}\left[e^{-\lambda \widetilde S_k}\right]&= \mathbb E_{n,\rho}\left[ e^{-\lambda \widetilde S_{k-1}} \mathbb E_{n,\rho}\left[e^{-\lambda \widetilde Z_k} \mid \mathcal F_{k-1}\right] \right] \\&\le c \mathbb E_{n,\rho}\left[e^{-\lambda \widetilde S_{k-1}}\right]. 
\end{align*} 
Iterating this $k$ times with $e^{-\lambda\widetilde S_0}=1$ gives the first result. The second result then follows by Markov’s inequality.

\subsubsection{Proof of \Cref{lemma:lemma19} \ref{lemma19;part3}}
For every $n$, define for each $k\geq0$ a probability measure $\mathbb Q_n^{(k)}$ on $(\Omega,\mathcal F_k)$ by specifying, for each \(k\ge0\), its
Radon--Nikodym derivative with respect to \(\Pbb_{n,\rho}\) on
\(\mathcal F_k\):
\[
\left.\frac{\dd\mathbb Q_n^{(k)}}{\dd\Pbb_{n,\rho}}\right|_{\mathcal F_k}
:= \widetilde L_k, \qquad k\geq0 .
\]
This family of measures is consistent, since \(\{\widetilde L_k\}_{k\geq0}\) is a non-negative
\(\Pbb_{n,\rho}\)-martingale with
\(\mathbb E_{n,\rho}[\widetilde L_k]=1\). Indeed, for \(A\in\mathcal F_{k-1}\),
\begin{align*}
\mathbb Q_n^{(k)}(A)&=\mathbb E_{n,\rho}\left[\widetilde L_k\mathbf 1_A\right]\\
&=\mathbb E_{n,\rho}\left[ \mathbb E_{n,\rho}[\widetilde L_k\mid\mathcal F_{k-1}]\mathbf 1_A \right] \\
&=
\mathbb E_{n,\rho}\left[\widetilde L_{k-1}\mathbf 1_A\right]=\mathbb Q_n^{(k-1)}(A).
\end{align*}
Since $\Omega=(\mathcal M_{\mathcal X}\times\mathcal X)^{\infty}$ is equipped with the usual product topology and $\mathcal F$ is the Borel $\sigma$-algebra generated by this topology, with $\mathcal F=\sigma(\bigcup_{k\ge0}\mathcal F_k)$, this consistent family uniquely extends to a probability measure $\mathbb Q_n$ on $(\Omega,\mathcal F)$ such that
\begin{equation}
\left.\frac{\dd\mathbb Q_n}{\dd\Pbb_{n,\rho}}\right|_{\mathcal F_k}
= \widetilde L_k, \quad \forall k\geq0 .
\end{equation}
Equivalently, $\mathbb Q_n$ admits the predictive representation 
\[
\mathbb{Q}_n\left( X_k = x \mid \mathcal{F}_{k-1}, M_{k} \right)
= \tr\left[ M_k(x)\tilde\sigma_{k-1} \right],
\]
for all $k \ge 1$ and $x \in \mathcal X$.

We now state an analogue of \Cref{lemma:lemma19} \ref{lemma:19;part1} and \ref{lemma:19;part2}.
\begin{lemma} \label{lemma:alternative_lemma}
Under the probability $\mathbb Q_n$, the following properties hold:
\begin{enumerate}[label=(\roman*)]
\item\label{lemma:alternative_lemma;part1}
The conditional expectation of \(-\widetilde Z_k\) given \(\mathcal F_{k-1}\) satisfies
\begin{multline}
\mathbb E_{\mathbb Q_n}\left[-\widetilde Z_k \mid \mathcal F_{k-1}\right]
=\mathbf 1_{\{\widetilde S_{k-1}< 0\}} D_{\mathcal M}(\tilde\sigma_{k-1}\|\rho) \\
+\mathbf 1_{\{\widetilde S_{k-1}\ge 0\}}
D\left(P_{\tilde\sigma_{k-1},m_{0}^\ast(\tilde\sigma_{k-1})}\middle\|P_{\rho,m_{0}^\ast(\tilde\sigma_{k-1})}\right).
\end{multline}
In particular, 
% there exists a constant \(\eta>0\), independent of \(k\) and \(n\), such that
% \[
% \mathbb E_{\mathbb Q_n}\!\left[-\widetilde Z_k \mid \mathcal F_{k-1}\right]\ge \eta
% \qquad \text{a.s.}
% \]
% and hence 
the process $\{-\widetilde S_k \}_{k\ge1}$ has uniformly positive conditional drift under $\mathbb Q_n$.
\item\label{lemma:alternative_lemma;part2}
For sufficiently small \(\lambda\), there exists a constant \(0<c<1\) such that
\begin{align}
\mathbb E_{\mathbb Q_n}\left[e^{\lambda \widetilde S_k}\right]\le c^k
\text{ and }
\mathbb Q_n(\widetilde S_k\ge 0)\le c^k.
\end{align}
\end{enumerate}
\end{lemma}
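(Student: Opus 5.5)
The approach is to mirror the proofs of \Cref{lemma:lemma19} \ref{lemma:19;part1} and \ref{lemma:19;part2}, with the roles of $\rho$ and the barycenter exchanged via the predictive representation of $\mathbb Q_n$.

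For part \ref{lemma:alternative_lemma;part1}, the key fact is that $M_k$ and $\tilde\sigma_{k-1}$ are $\mathcal F_{k-1}$-measurable. Under $\mathbb Q_n$, the outcome $X_k$ given $\mathcal F_{k-1}$ has law $P_{\tilde\sigma_{k-1},M_k}$. Hence
\[
\mathbb E_{\mathbb Q_n}[-\widetilde Z_k\mid\mathcal F_{k-1}]=\sum_x P_{\tilde\sigma_{k-1},M_k}(x)\log\frac{P_{\tilde\sigma_{k-1},M_k}(x)}{P_{\rho,M_k}(x)}=D(P_{\tilde\sigma_{k-1},M_k}\|P_{\rho,M_k}).
\]
I then split on the measurement rule \eqref{eq:mu_k}. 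On $\{\widetilde S_{k-1}<0\}$ we have $M_k=m_1^\ast(\tilde\sigma_{k-1})$, so the divergence equals $D_{\mathcal M}(\tilde\sigma_{k-1}\|\rho)$. On the complement we have $M_k=m_0^\ast(\tilde\sigma_{k-1})$, which gives the second term. Integrability is not an issue, since $|\widetilde Z_k|\le C$.

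The uniform positivity comes in two cases. On the first event the drift is at least $D_{\mathcal M}(\mathcal D\|\rho)>0$, by continuity, compactness of $\mathcal D$, $\rho\notin\mathcal D$, and the fact that measurements separate states. On the second event I set $\eta_2':=\inf_{\sigma\in\mathcal D}D(P_{\sigma,m_0^\ast(\sigma)}\|P_{\rho,m_0^\ast(\sigma)})$. To show $\eta_2'>0$ I repeat the contradiction argument from \Cref{lemma:lemma19} \ref{lemma:19;part1} verbatim with the indices $0,1$ swapped. Take minimizing sequences and extract convergent subsequences $\sigma_i\to\sigma$ and $m_0^\ast(\sigma_i)\to m^\ast$. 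Using continuity of both $(\sigma,m)\mapsto D(P_{\rho,m}\|P_{\sigma,m})$ and $(\sigma,m)\mapsto D(P_{\sigma,m}\|P_{\rho,m})$ (full rank), $m^\ast$ maximizes $D(P_{\rho,m}\|P_{\sigma,m})$ while $P_{\sigma,m^\ast}=P_{\rho,m^\ast}$. This forces $P_{\rho,m}=P_{\sigma,m}$ for all $m$, hence $\sigma=\rho$, a contradiction. I expect this step to be the main (though routine) obstacle: the maximizer selection is only measurable, not continuous, so the argument has to go through subsequence limits rather than continuity of $m_0^\ast$.

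For part \ref{lemma:alternative_lemma;part2}, I condition again to get
\[
\mathbb E_{\mathbb Q_n}[e^{\lambda\widetilde Z_k}\mid\mathcal F_{k-1}]=\sum_x P_{\tilde\sigma_{k-1},M_k}(x)^{1-\lambda}P_{\rho,M_k}(x)^{\lambda},
\]
which is at most $1$ by Young's (or Hölder's) inequality. To get a uniform $c<1$, I rely on the fact that the map $(\sigma,m)\mapsto\sum_x P_{\sigma,m}^{1-\lambda}P_{\rho,m}^{\lambda}$ is continuous on the compact set $\mathcal D\times\mathcal P_d$. If its supremum along the chosen pairs $(\tilde\sigma,m_i^\ast(\tilde\sigma))$ were $1$, a limit point would have $P_{\sigma,m^\ast}=P_{\rho,m^\ast}$. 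Then the same contradiction as in part \ref{lemma:alternative_lemma;part1} applies, with $m^\ast$ a limit of maximizers of either type.

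Alternatively, for small $\lambda$ one can use the expansion $\sum_x P^{1-\lambda}Q^\lambda=1-\lambda D(P\|Q)+O(\lambda^2)$. The $O(\lambda^2)$ term is uniform because all log-ratios are bounded by $C$, so the uniform drift lower bound $\eta>0$ from part \ref{lemma:alternative_lemma;part1} gives $c:=1-\lambda\eta/2<1$ for $\lambda$ small. I prefer this route, since it reuses part \ref{lemma:alternative_lemma;part1} directly. Finally, the tower property gives $\mathbb E_{\mathbb Q_n}[e^{\lambda\widetilde S_k}]\le c\,\mathbb E_{\mathbb Q_n}[e^{\lambda\widetilde S_{k-1}}]$, and iterating yields $c^k$. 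Markov's inequality then gives $\mathbb Q_n(\widetilde S_k\ge0)=\mathbb Q_n(e^{\lambda\widetilde S_k}\ge1)\le c^k$.
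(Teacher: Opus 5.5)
Your proposal is correct and matches the paper's proof. The paper only says the lemma is a direct analogue of parts (i) and (ii) of \Cref{lemma:lemma19}, with $\Pbb_{n,\rho}$ replaced by $\mathbb Q_n$ and the roles of $\rho$ and $\tilde\sigma_{k-1}$ interchanged; your argument writes that analogue out, using the conditional law $P_{\tilde\sigma_{k-1},M_k}$, the split on the measurement rule, the index-swapped subsequence contradiction giving $\inf_{\sigma\in\mathcal D}D(P_{\sigma,m_0^\ast(\sigma)}\|P_{\rho,m_0^\ast(\sigma)})>0$, and then H\"older, iteration and Markov. Your preferred bound $\sum_x P^{1-\lambda}Q^{\lambda}\le 1-\lambda D(P\|Q)+\tfrac{\lambda^2C^2}{2}e^{\lambda C}$, combined with the drift $\eta$ from part (i), is a slightly more explicit way than the paper's equality-case/compactness remark to get a uniform $c=1-\lambda\eta/2$, and it is what gives the ``sufficiently small $\lambda$'' hypothesis its role.
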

The proof follows by a direct analogue of the proof of \Cref{lemma:lemma19} \ref{lemma:19;part1} and \ref{lemma:19;part2}, upon replacing $\Pbb_{n,\rho}$ with $\mathbb Q_n$ and interchanging the roles of $\rho$ and $\tilde\sigma_{k-1}$. The same martingale and conditional expectation arguments apply verbatim, and we therefore omit the details. 

We now prove \Cref{lemma:lemma19} \ref{lemma19;part3}. For fixed \(\sigma\in\mathcal D\), we have using the chain rule
\[
\left.\frac{\dd\Pbb_{n,\sigma}}{\dd\mathbb Q_n}\right|_{\mathcal F_k}
= \frac{\dd\Pbb_{n,\sigma}}{\dd\Pbb_{n,\rho}} \frac{\dd\Pbb_{n,\rho}}{\dd\mathbb Q_n} 
=
\frac{L_k(\sigma)}{\widetilde L_k}
=
e^{\widetilde S_k-S_k(\sigma)}.
\]
Hence, by \Cref{lemma:mixture_martingale_comparison_with_pointwise} we get 
\begin{align}
\Pbb_{n,\sigma}(\widetilde S_k\ge 0)
&=
\mathbb E_{\mathbb Q_n}\biggl[
\left.\frac{\dd\Pbb_{n,\sigma}}{\dd\mathbb Q_n}\right|_{\mathcal F_k}
\mathbf 1_{\{\widetilde S_k\ge 0\}}
\biggr] \nonumber\\
&= \mathbb E_{\mathbb Q_n}\left[ e^{\widetilde S_k - S_k(\sigma)} \mathbf 1_{\{\widetilde S_k\ge 0\}} \right] \nonumber\\
% &\leq e^{C_\Pi}k^{d^2-1}\mathbb E_{\mathbb Q_n}\left[\mathbf 1_{\{\widetilde S_k\ge 0\}} \right] \nonumber\\
&\leq e^{\widetilde C}k^{d^2-1}\mathbb Q_n(\widetilde S_k\ge 0)\nonumber\\
&\le e^{\widetilde C}k^{d^2-1}c^k. \nonumber
\end{align}
Thus
\begin{align}
\mathbb E_{n,\sigma}\left[\sum_{j=1}^{T_n}\mathbf 1_{\{\widetilde S_{j-1}\ge 0\}}\right]
&\le
\sum_{j=1}^{\infty}\Pbb_{n,\sigma}(\widetilde S_{j-1}\ge 0)\nonumber\\
&\le
1+e^{\widetilde C}\sum_{j=2}^{\infty}(j-1)^{d^2-1}c^{j-1} \nonumber 
\\&=:C_+<\infty. \nonumber
\end{align}
where we used that $\sum_{j=1}^{T_n} \mathbf 1_{\{\widetilde S_{j-1}\ge 0\}} \le \sum_{j=1}^{\infty} \mathbf 1_{\{\widetilde S_{j-1}\ge 0\}}$, then applied Tonelli's theorem to exchange expectation and summation. Lastly, the term $j=1$ is bounded by $1$, while for $j\ge2$ we apply \Cref{lemma:alternative_lemma} \ref{lemma:alternative_lemma;part2}. The
remaining polynomially weighted geometric series converges since
$c\in(0,1)$, which concludes the proof.
\end{document}